\documentclass{article}
\usepackage[utf8]{inputenc}
\usepackage[english]{babel}
\usepackage{booktabs} 
\usepackage[ruled]{algorithm2e} 

\SetAlFnt{\small}
\SetAlCapFnt{\small}
\SetAlCapNameFnt{\small}
\SetAlCapHSkip{0pt}
\IncMargin{-\parindent}
\usepackage{hyperref}
\usepackage{url}
\usepackage{color-edits}
\usepackage{graphicx}
\usepackage{amsmath,amssymb}
\usepackage{amsthm}
\usepackage{fullpage}
\usepackage{bm}
\usepackage{mathrsfs}
\usepackage{appendix}
\usepackage{graphicx}
\usepackage{pifont}
\usepackage{xcolor}
\usepackage{xcolor}
\usepackage{algpseudocode}
\usepackage{bbm}
\usepackage{comment}
\usepackage{authblk}

\newtheorem{thm}{Theorem}
\newtheorem{lemma}{Lemma}
\newtheorem{definition}{Definition}
\newtheorem{cor}{Corollary}
\newtheorem{rmk}{Remark}
\newtheorem{fact}{Fact}
\newtheorem{clm}{Claim}


\DeclareMathOperator*{\argmax}{argmax}

\DeclareMathOperator{\E}{\mathbb{E}}

\renewcommand{\epsilon}{\varepsilon}

\newcommand{\D}{\mathcal{D}}

\newcommand{\bigo}{\mathcal{O}}
\newcommand{\opt}{\text{OPT}}
\newcommand{\A}{\mathcal{A}}
\newcommand{\M}{\mathcal{M}}

\newcommand{\R}{\mathcal{R}}
\newcommand{\Real}{\mathbb{R}}
\newcommand{\nb}{n^b}
\newcommand{\ns}{n^s}
\newcommand{\vs}{\mathbf{v}^s}
\newcommand{\vb}{\mathbf{v}^b}
\newcommand{\us}{\mathbf{u}^s}
\newcommand{\ub}{\mathbf{u}^b}
\newcommand{\oo}{\mathbf{a}}
\newcommand{\os}{\mathbf{a}^s}
\newcommand{\ob}{\mathbf{a}^b}
\newcommand{\ts}{\mathbf{r}^s}
\newcommand{\tb}{\mathbf{r}^b}

\newcommand{\wb}{\mathbf{w}^b}

\renewcommand{\S}{\mathcal{S}}
\newcommand{\B}{\mathcal{B}}
\newcommand{\1}{\mathbf{1}}

\newcommand{\umodb}{\mu^b}






\title{Differentially Private Call Auctions and Market Impact}
\author{Emily Diana, Hadi Elzayn, Michael Kearns, Aaron Roth, \\ Saeed Sharifi-Malvajerdi, and Juba Ziani}
\affil{University of Pennsylvania}

\date{\today}


\begin{document}

\maketitle

\begin{abstract}
We propose and analyze differentially private (DP) mechanisms
for call auctions as an alternative to the complex and ad-hoc privacy efforts that are
common in modern electronic markets. We prove that the number of shares cleared in the
DP mechanisms compares favorably to the non-private optimal and provide a matching
lower bound. We analyze the incentive properties of our mechanisms and their 
behavior under natural no-regret learning dynamics by market participants. We include simulation results and connections to the finance literature on market impact.
\end{abstract}

\section{Introduction and Overview of Paper}

In modern financial markets, massive resources are directed
towards what can be considered ad-hoc privacy mechanisms, intended to allow participants to
cloak their trading activity and intentions. Such efforts occur both in the exchanges
themselves and in the algorithmic trading services offered by large brokerages.
In this work, we provide a differentially private (DP) version of classical one-shot
double auctions (also known as ``call auctions''). Frequent
instances of DP call auctions could potentially simplify the convoluted efforts
at providing trading secrecy that are rampant in today's markets while still permitting
dynamic price discovery.

Current electronic exchanges offer a staggering variety of order
types and mechanisms meant to provide specific types of privacy. Dark pools were
introduced to allow large-volume counterparties to discover each other away from the
so-called ``lit'' markets where high-frequency traders (HFTs) are prevalent. Order types
restricting execution with small-volume counterparties are meant to provide similar protections.
Hidden and ``iceberg'' orders in the lit exchanges provide secrecy at the expense
of time priority in the standard continuous limit order book. The relatively new exchange IEX
was created to foil the latency arbitrage of HFT by introducing a ``speed bump''
for all incoming orders. On the brokerage side, algorithms executing large client trades
attempt to minimize visibility by breaking orders up over time and across exchanges and
employ randomization in both timing and sizing to avoid detectable ``heartbeats.''

These efforts are all ad-hoc in the sense that they each protect market participants
from rather specific forms of detection or exploitation. While well-intentioned, they have contributed significantly to the
complexity of modern electronic markets. At the same time, it is also widely
understood that there are limits to the privacy that can be provided for large trades
executed in short periods, and there is a large academic and practical literature on
theories of market impact (see~\cite{gatheral2010no,gatheralslides} for an overview) 
and algorithms for minimizing it. This literature identifies
a trade's {\em participation rate\/} --- the ratio of its volume to that of the overall
market during the trade's execution --- as the key determinant of market impact.

Our main conceptual contribution is the development of DP call auctions as a 
mechanism providing privacy against {\em all\/} forms of attack or detection, up to the
participation rate of a trade. In this formulation, we provide a per-share privacy guarantee
determined by the sensitivity of the call auction, which in turn
determines the amount of noise added. Trades with higher participation rates will 
unavoidably have less privacy than those with smaller ones, but the nature of the privacy
will now be as general as possible. Repeated DP call auctions also enjoy graceful degradation
of the privacy guarantee.
Furthermore, we can (informally) relate our results to standard market
impact theories via the shared notion of participation rate and show that, under natural
conditions, DP call auctions clear a near-optimal number of shares under the predictions
of the ``square root law'' of market impact.
We analyze our DP mechanisms extensively, including its incentive properties and behavior
under natural no-regret learning dynamics by market participants.

We note that (non-private) call auctions are already common in modern markets. In particular,
both NYSE and NASDAQ hold call auctions (also sometimes called ``crosses'') to establish
opening and closing prices in U.S. equities~\cite{nyse,nasdaq};
in the Tokyo Stock Exchange there are additional intraday call auctions, which are also the subject of academic study (e.g.~\cite{challet2018dynamical,challet2019strategic}).
The influential paper~\cite{budish} (discussed at greater length in Related Work below)
proposes and analyzes frequent intraday (again non-private) call auctions specifically as a defense
against latency arbitrage; see also~\cite{WahWellman13}. Our work can be seen as a continuation of this line of thinking, in
which frequent intraday DP call auctions could provide even more general privacy guarantees to all
market participants.

\vspace{3mm}
\noindent\textbf{Outline and Summary of Results:} At a high level, our results fall into three broad categories:
\begin{enumerate}
    \item \textbf{The development and analysis of (jointly) differentially private call auctions}. We carry this out in Section \ref{sec:mechanisms}. We initially present this purely as an algorithm design task, abstracting away incentive properties. We prove bounds relating the \emph{privacy} properties of the mechanism, the \emph{number} of shares it is guaranteed to clear compared to the optimal benchmark, and the net \emph{inventory} that the mechanisms may have to take on. (Unavoidably, jointly differentially private call auctions cannot exactly match the number of buyers and sellers and so will have to take on a net position of shares itself to clear the market --- we prove that this net position is small.). We also prove a lower bound showing that our mechanisms are near optimal amongst all differentially private mechanisms. We explore the connection between our guarantees and theories of market impact in Section \ref{sec:marketimpact}.
    \item \textbf{The analysis of incentive properties and learning dynamics}. Having developed our algorithms, we turn our attention to how buyers and sellers should interact with them. First, in Section \ref{sec:game}, we show that our algorithm is ex-post individually rational and approximately dominant strategy truthful for agents who wish to trade only a small number of shares, with a guarantee that degrades gracefully in the size of the desired trade. (We note that this is a \emph{stronger} incentive guarantee than standard non-private call auctions.) We then study the global behavior that results when agents interact with a repeated version of one of our mechanisms using \emph{learning dynamics}: we show that although an abstract guarantee of no-regret learning is not enough to guarantee convergence to the optimal number of trades, a small modification of the exponential weights learning algorithm (informally, a modification that still guarantees the no-regret property, but breaks ties in favor of trading whenever such ties exist) does converge to the optimal number of trades. 
    \item \textbf{Simulation Results}. Finally, in Section \ref{sec:sim}, we conduct simulations in both \emph{one-shot} and \emph{repeated} settings, showing that in the settings considered, the realized outcomes of our mechanisms tend to be significantly better than the worst-case guarantees of our theorems. 
\end{enumerate}

\vspace{3mm}
\noindent\textbf{Related Work:} Our work relates to several large strands of literature. Prominently, the study of double auctions dates back to the early days of mathematical economics. \cite{parsons2006everything} provides an introduction to double auctions, and a useful survey from a computer science perspective can be found in \cite{parsons2011auctions}. Our modeling of the strategic framework in which agents participate in the double auction is broadly consistent with this literature. 

Of particular note are~\cite{budish} and \cite{WahWellman13}, which both propose frequent call auctions to eliminate latency arbitrage.\footnote{ \emph{Latency arbitrage} is the opportunity for traders to simultaneously buy and sell nearly or exactly identical securities on different exchanges (e.g. Chicago's Mercantile Exchange and the New York Stock Exchange) in the instant where price has changed on one exchange but remains ``stale'' on the other; it is described in the popular book \emph{Flash Boys} \cite{lewis2014flash}.} 
The work of~\cite{budish} first establishes the \emph{empirical} availability of latency arbitrage opportunities for even highly traded securities, and shows moreover that competition between traders has not eliminated this opportunity over time. Instead, it has resulted in an ``arms race'' for speed, with arbitrage windows becoming shorter over time, but arbitrage profit per unit  remaining essentially constant. 
The authors then propose a solution to mitigate latency arbitrage: repeated high-frequency call auctions.  Using a game theoretic approach, they model how the ``sniping'' process results in arbitrage opportunities in the continuous limit order book; in their model, the profit opportunity (along with arms race) is an equilibrium constant, even despite improving technology. Then, using the same underlying model of firm behavior, the authors show that repeated call auctions eliminate these arbitrage opportunities and cause firms not to choose to invest in speed, ending the arms race. We follow in the spirit of \cite{budish}, but note that their solution does not mitigate the problem of privacy, and in particular does not solve the issue of the proliferation of ad-hoc and increasingly complex trading algorithms. (The earlier work of~\cite{WahWellman13} performs extensive simulation studies that establish the salutary effects of frequent call auctions on latency arbitrage.)

Our work is connected to, and leverages tools from, the broad literature on differential privacy introduced by \cite{dwork}; for an overview, see, e.g. \cite{aaron}. The most related strand of this literature is the connection between differential privacy and mechanism design, first made by \cite{McSherryT07}. In particular, they observed that differentially private mechanisms inherit strong incentive properties. For many mechanism design tasks that involve the allocation of a resource to individuals, it is not possible to satisfy differential privacy in the standard sense over allocations: in cases like this, the relevant solution concept is \emph{joint differential privacy} \cite{jointdp}. This solution concept has been used in a number of mechanism design settings, including max-welfare matchings and other allocation problems \cite{billboard,hsu2016jointly}, stable matchings \cite{stableDP}, equilibrium selection problems \cite{RR14,cummings2015privacy}, and tolling problems \cite{rogers2015inducing}. In particular, although joint differential privacy can be used as a tool to achieve truthfulness, not all jointly private mechanisms are approximately truthful, and more specialized arguments are needed. Finally, while ~\cite{chen} have shown how to privately compute near-optimal prices in double auctions, their process does not guarantee end-to-end joint differentially privacy when taking trade allocations into account, unlike this work.

\section{Model and Preliminaries}\label{sec:prelim}

\subsection{Model}

We consider a call auction setting with $\ns$ sellers and $\nb$ buyers; we let $\S$ be the set of sellers, $\B$ be the set of buyers, and $n = \ns + \nb$. Each seller $i \in \S$ has one unit of a security for which it has a \emph{value} $v_i^s$; each buyer $j \in \B$ wishes to purchase one unit of the security for which it has a value of $v_j^b$. We let $\vs = (\vs_1,\ldots,\vs_{\ns} )$ be the vector of all sellers' valuations and $\vb = (\vb_1,\ldots,\vb_{\nb} )$ the vector of all buyers' valuations. We assume valuations are drawn from a discrete set $P$; without loss of generality, we let $P=\{1,2,\ldots,V\}$ for some integer $V$.

Agents report their valuations in $P$ directly to a mechanism $\M$.\footnote{We will argue in Section~\ref{sec:game} that it is in every agent's best interest to report his valuation to the mechanism truthfully, hence our mechanisms can work with the agents' valuations without loss of generality.} Based on the agents' reports, the mechanism selects a clearing price $p \in P$ and an allocation vector $\oo = (\os, \ob)$, where $\os_i$ (resp. $\ob_j$) is equal to $1$ if seller $i$ (resp. buyer $j$) is selected to participate in a trade and $0$ otherwise. The mechanism concludes by buying a share at price $p$ from every seller $i$ with $\os_i = 1$ and selling a share at price $p$ to every buyer with $\ob_j = 1$.\footnote{In principle, mechanisms can choose \emph{non-uniform} pricing; that is, different agents could be charged different prices based on their reports. Here, we only consider \emph{uniform} pricing mechanisms, as is commonplace in the double auction literature.}

\paragraph{Privacy Constraints} 
The outcomes of the mechanisms we consider are functions of the agents' reports, which themselves depend on their valuations. In turn, these outcomes may leak information about the participants' valuations. This provides motivation for designing call auctions that protect the privacy of the participants. In this paper, we do so using  \emph{differential privacy} (\cite{dwork}). We will design our mechanisms to release the clearing price $p$ in a differentially private fashion, and the allocation vector $\oo = (\os, \ob)$ in a \emph{jointly} differentially private manner \cite{jointdp}. Differential privacy and joint differential privacy are formally defined in Section~\ref{sec: dp}.

\paragraph{Mechanism Designer's Objective} The main objective of our mechanisms for call auctions is to maximize the volume of trades between buyers and sellers. However, because of the randomization that we will need to add to achieve  differential privacy, our mechanism will inevitably incur several kinds of cost. First, the \emph{payoff} of the mechanism, given by the number of shares cleared, will generally be lower than the optimal payoff that could have been reached absent differential privacy. Second, we will have to deal with situations in which the number of sellers and the number of buyers who are selected to trade differ because of the noise added to the allocation rule for privacy concerns; this creates an \emph{inventory} in which some of the trades must be fulfilled by the mechanism itself (when there are more sellers selected than buyers, the mechanism buys surplus shares from the sellers; when there are more buyers selected than sellers, the mechanism sells to the buyers from its own reserve of shares). We will aim to keep the inventory of our private auction mechanisms as small as possible. Formally, the payoff and the inventory of a mechanism $\M$ are defined as follows:
\begin{definition}[Payoff and Inventory of a Mechanism $\M$]\label{def:utility_inventory}
For any mechanism $\M$ outputting a price $p$ and an allocation vector $\oo$, the payoff is the number of shares cleared by $\M$:
$$
\Pi \left( \M \right) = \min \left\{ \sum_{i \in \S} \os_i , \sum_{j \in \B} \ob_j \right\}
$$,
and the inventory of $\M$ is the number of allocations that must be fulfilled by the mechanism:
$$
I \left( \M \right) = \left\vert \sum_{i \in \S} \os_i - \sum_{j \in \B} \ob_j \right\vert
$$.
\end{definition}

 The main benchmark we use to measure the performance of our mechanisms is the maximum number of trades that can be obtained (absent differential privacy) while setting a uniform price $p$ and guaranteeing every agent non-negative utility.\footnote{i.e., we only allocate agents willing to trade at price $p$. Agents not willing to participate at price $p$ will opt out from the trade, thus are not taken into account by our benchmark.} Formally, our benchmark is given by\footnote{$\1 [A]$, here and throughout the paper, represents the indicator function of event $A$.}
\begin{equation}\label{eq:opt}
 \opt = \max_{p \in P} \min \left\{\sum_{i \in \S} \1 \left[\vs_i \leq p\right], \sum_{j \in \B} \1 \left[\vb_j \geq p\right] \right\}.
\end{equation}

\subsection{Differential Privacy}\label{sec: dp}

Let $\D$ be a \emph{data universe} from which a data set $D$ of size $n$ is drawn. In the setting considered in this paper, $D= (\vs,\vb)$ contains the reported valuations of sellers and buyers in the market. The algorithms we consider in this paper have output that can naturally be partitioned across the $n$ users who provide the inputs --- namely for each agent, whether they get to participate in a trade, and at what price. Let $\M$ be an algorithm that takes the data set $D$ as input and outputs $\M(D) \in \R^n$, which is a vector whose $i$th coordinate corresponds to the output sent to agent $i$. Here $\R$ is the output range of the algorithm for a single agent, which we will take to  be $\{0,1\} \times P$ (whether someone is chosen to participate in a trade, and a price for the trade). Informally speaking, differential privacy requires that a change in a single data entry should have little (distributional) effect on the output of the mechanism. In other words, for every pair of data sets $D,D' \in \D^n$ that differ in at most one entry, differential privacy requires that the distribution of $\M(D)$ and $\M(D')$ are ``close" to each other where closeness is measured by the privacy parameters $\epsilon$ and $\delta$.
\begin{definition}
Let $D, D' \in \D^n$ be two data sets of size $n$. We say $D$ and $D'$ are neighboring and write $D \sim D'$ if they differ in at most one data entry. $D$ and $D'$ are called $i$-neighbors ($D \sim_i D'$) if $D_{-i} = D'_{-i}$.
\end{definition}

\begin{definition}[(Standard) Differential Privacy (DP) \cite{dwork}]
An algorithm $\M: \D^n \to \R^{n}$ is $(\epsilon,\delta)$-differentially private if for every pair of neighboring data sets $D \sim D' \in \D^n$, and for every subset of outputs $S \subseteq \R^{n}$,
	\begin{align*}
	\Pr \left[\mathcal{M}(D) \in S \right] \leq e^\epsilon \cdot \Pr \left[\mathcal{M}(D') \in S \right] + \delta
	\end{align*}
where the probability is taken with respect to the randomness of $\M$. if $\delta = 0$, $\M$ is said to be $\epsilon$-DP.
\end{definition}

We now define \emph{joint differential privacy}. Joint differential privacy is defined in settings in which not only the inputs but also the outputs of the mechanism can be partitioned amongst the $n$ users of the mechanism. In our setting, as in many mechanism design settings, this is the case: users report their valuations (which constitute the data) and then each receives an individual allocation. Joint differential privacy requires that an individual's input to the mechanism has little (distributional) effect on the outputs given to \emph{others} --- but allows one's own input to have a large effect on one's own output. Informally, it protects the privacy of each individual from arbitrary coalitions of other individuals using the system. 
\begin{definition}[Joint Differential Privacy \cite{jointdp}] An algorithm $\mathcal{M}: \D^n \to \R^{n}$ is $(\epsilon,\delta)$-joint differentially private if for every $i$, for every pair of $i$-neighbors $D \sim_i D' \in \D^n$, and for every $S \subseteq \R^{n-1}$,
	\begin{align*}
	\Pr \left[\mathcal{M}(D)_{-i} \in S \right] \leq e^\epsilon \cdot  \Pr \left[\mathcal{M}(D')_{-i} \in S \right] + \delta
	\end{align*}
where the probability is taken with respect to $\M$'s randomness. If $\delta = 0$, $\M$ is said to be $\epsilon$-joint DP.
\end{definition} 

We will use the \emph{Laplace} and \emph{exponential mechanisms} of differential privacy in our proposed algorithms. See Appendix~\ref{app:dp_tools} for their formal definitions, their privacy and accuracy guarantees, and a few properties of differential privacy including \emph{post-processing} and \emph{composition}.
\section{Private Call Auction Mechanisms}\label{sec:mechanisms}

In this section, we outline our jointly differentially private mechanisms for the call auction problem and analyze their performance guarantees. Each mechanism's performance is measured in terms of its \emph{payoff} --- that is, the total number of shares cleared --- as well as its \emph{inventory} --- the net position that the mechanism must itself take on. We measure our mechanisms' payoffs against the \emph{maximal} number of shares that could be cleared with a uniform price, \emph{given} the agents' reports. 

Throughout this section, we assume reports are truthful; we will show in Section \ref{sec:game} that
our mechanisms are approximately dominant strategy truthful.
We also highlight that taking on \emph{some} inventory is unavoidable -- if the mechanism took no net position, a coalition of agents could use the constraint that the number of buyers and sellers must be equal to circumvent joint differential privacy -- but our guarantees ensure that this net position remains small with high probability. 

We propose three mechanisms. The first mechanism, described in Subsection \ref{subsec:coin}, uses the exponential mechanism (see Appendix \ref{app:dp_tools}) to select a clearing price and then uses binomial randomization to determine who participates in a trade. In Subsection \ref{subsec:lottery}, we provide a second mechanism that again uses the exponential mechanism to select a price, but uses lottery numbers which are assigned to agents ex-ante to determine market participants. In Subsection \ref{subsec:meta}, we describe a meta-algorithm that privately picks the mechanism with the better performance guarantee,\footnote{Which guarantee is best depends on the specific instance at hand.} and achieves performance as good as that of the best of the first two mechanisms.

Finally, in Subsection \ref{sec:lowerbound}, we show matching lower bounds (up to log factors) for the payoff of \emph{any} $(\epsilon,\delta)$-joint differentially private mechanism for the call auction.

\subsection{A Private Call Auction Mechanism via Coin Flipping}\label{subsec:coin}

In this subsection, we introduce our first jointly differentially private algorithm for selecting a price and allocating buyers and sellers to trades. The algorithm uses the exponential mechanism to differentially privately select a clearing price. With a slight abuse of notation, let
\begin{equation}\label{eq:pi_abuse}
\Pi \left(p, \vs, \vb \right) = \min \left\{\sum_{i \in \S} \1 \left[\vs_i \leq p\right], \sum_{j \in \B} \1 \left[\vb_j \geq p\right] \right\}
\end{equation}
be the number of trades that can happen at price $p$ while guaranteeing every agent non-negative utility; $\Pi \left(p, \vs, \vb \right)$ is the utility function used by the exponential mechanism. After choosing the price, the mechanism randomly selects buyers and sellers willing to transact at the chosen price by flipping a coin with some particular bias for every agent in the market. The exchange then transacts with all selected transactors, possibly taking a net position in the asset. We  formalize this mechanism in Algorithm \ref{alg:mech}. The mechanism takes data set $(\vs, \vb)$, privacy parameter $\epsilon$, and confidence parameter $\alpha$ as inputs and outputs a price $p$ and allocation vectors $\oo = (\os, \ob)$. In the algorithm description, $\exp(\cdot)$ is the exponential function, $Lap (\sigma)$ represents a mean-zero Laplace random variable with scale parameter $\sigma$, $(x)_+ := \max (x,0)$, and $Bern(q)$ represents a Bernoulli random variable with success probability $q$.

\begin{algorithm}
    \SetAlgoNoLine
    \KwIn{Agents' valuations $(\vs,\vb)$, privacy level $\epsilon$, confidence level $\alpha$.}
    \KwOut{Market price $p$, allocations $\oo=(\os,\ob)$.}
    
	Draw $p \propto \exp \left( \frac{\epsilon \Pi (p,\vs,\vb)}{2} \right)$ \Comment{Exponential mechanism chooses a price $p$ privately}
	
    $\widehat{s} \gets   \sum_{i \in \S}  \mathbf{1}\left[p \geq \vs_i\right] + Lap(\frac{1}{\epsilon}) $ 
    \Comment{Privately estimate \# of sellers willing to trade at $p$}
    
	$\widehat{b} \gets \sum_{j \in \B} \mathbf{1} \left[p \leq \vb_j \right] + Lap(\frac{1}{\epsilon})$ 
	\Comment{Privately estimate \# of buyers willing to trade at $p$}
	
	$\os_i \gets \mathbf{1} \left[ p \ge \vs_i \right] \cdot Bern \left( q^s = \min\left\{1, \frac{ \left( \widehat{b} \right)_+ }{ \left(\widehat{s} - \frac{\ln (1/\alpha)}{\epsilon} \right)_+ } 	\right\} \right)$ for all $i \in \S$.
	\Comment{Sellers' allocations}
	
	$\ob_j \gets \mathbf{1} \left[ p \le \vb_j \right] \cdot Bern \left( q^b = \min\left\{1, \frac{\left( \widehat{s} \right)_+ }{\left( \widehat{b} - \frac{\ln (1/\alpha)}{\epsilon} \right)_+ } \right\} \right)$ for all $j \in \B$.
	\Comment{Buyers' allocations}
	
	\caption{Private Call Auction with Allocation via Coin Flipping ($\M_1$)}
	\label{alg:mech}
\end{algorithm}

We start the analysis by providing the privacy guarantees obtained by Algorithm~\ref{alg:mech}:
\begin{clm}\label{clm:DP_1}
	The allocation mechanism described in Algorithm \ref{alg:mech} satisfies $3\epsilon$ joint differential privacy.
\end{clm}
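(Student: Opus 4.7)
The plan is to decompose the mechanism into three differentially private building blocks plus a post-processing step, then invoke the billboard lemma to lift standard DP on a common signal to joint DP on per-agent outputs.

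First, I would argue that the clearing price $p$ is chosen $\epsilon$-differentially privately. The utility function $\Pi(p, \vs, \vb)$ in \eqref{eq:pi_abuse} has sensitivity $1$: changing any one seller's report alters $\sum_{i \in \S} \1[\vs_i \le p]$ by at most $1$ and leaves the buyer sum unchanged, and symmetrically for buyers, so the min changes by at most $1$. Since Algorithm~\ref{alg:mech} samples $p \propto \exp(\epsilon \Pi(p,\vs,\vb)/2)$, this is exactly the exponential mechanism with sensitivity $1$ and parameter $\epsilon$, which gives $\epsilon$-DP. Second, conditional on $p$, the quantities $\sum_{i \in \S} \1[p \ge \vs_i]$ and $\sum_{j \in \B} \1[p \le \vb_j]$ each have sensitivity $1$ in the reports; adding $\mathrm{Lap}(1/\epsilon)$ noise then yields $\widehat{s}$ and $\widehat{b}$, each $\epsilon$-DP by the Laplace mechanism.

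Next, I would compose these three releases. Since $p$ is $\epsilon$-DP, and $\widehat{s}$ and $\widehat{b}$ are each $\epsilon$-DP (treating $p$ as public input and applying standard adaptive composition of pure DP), the tuple $(p, \widehat{s}, \widehat{b})$ is $3\epsilon$-DP with respect to the full dataset $(\vs,\vb)$. Call this tuple the public ``billboard'' signal $\sigma$.

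Finally, I would apply the billboard lemma of \cite{jointdp, billboard}: for each seller $i$, the allocation $\os_i$ is computed as a deterministic function of the public signal $\sigma$ and seller $i$'s own report $\vs_i$ alone, and likewise $\ob_j$ depends only on $\sigma$ and $\vb_j$. Hence for any $i$-neighboring datasets $D \sim_i D'$, the joint distributions of $(p, \oo_{-i})$ differ only through the distribution of $\sigma$, which is $3\epsilon$-DP; post-processing preserves this, yielding $3\epsilon$-joint differential privacy.

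The main conceptual subtlety, rather than an obstacle, is verifying that the per-agent allocation rule really only uses agent $i$'s own data in addition to the public signal $\sigma$ — which is apparent by inspection of Algorithm~\ref{alg:mech}, since the Bernoulli probabilities $q^s, q^b$ depend only on $\widehat{s}, \widehat{b}$, and the indicator $\1[p \ge \vs_i]$ depends only on $p$ and agent $i$'s own valuation. Once this is observed, the billboard argument closes the proof cleanly, and the $3\epsilon$ bound is tight with respect to the composition used.
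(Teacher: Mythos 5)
Your proposal is correct and follows essentially the same route as the paper: sensitivity-$1$ arguments for the exponential and Laplace mechanisms, simple composition to get $3\epsilon$-DP for the triplet $(p,\widehat{s},\widehat{b})$, and the billboard lemma to convert this to joint DP since each agent's allocation depends only on that triplet and their own report. The only cosmetic imprecision is calling the per-agent allocation a ``deterministic'' function of the signal --- it uses an independent Bernoulli coin flip --- but the billboard lemma accommodates such agent-local randomness, so nothing is lost.
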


The full proof of this claim can be found in Appendix \ref{app:DPproofs}. We also provide bounds on the payoff and the inventory of Mechanism~\ref{alg:mech} below:

\begin{thm}[Payoff and Inventory of Mechanism \ref{alg:mech}]\label{thm:thm_1}
Suppose $\opt \geq 5 \ln (V/\alpha)/\epsilon$.
\begin{enumerate}
\item Payoff: with probability $1-8\alpha$,
\begin{align*}
\Pi \left( \M_1 \right) 
&\ge \opt - \frac{2 \ln (V/\alpha)}{\epsilon} - \frac{2 \ln \left( 1/\alpha \right)}{\epsilon}  - \sqrt{6  \left(\opt + \frac{\ln (1/\alpha)}{\epsilon} \right) \ln \left(1/\alpha\right)}
\end{align*}
\item Inventory: with probability $1-6\alpha$,
$$
I \left( \M_1 \right) \le \frac{18 \ln (1/\alpha)}{\epsilon} + 2\sqrt{6 \left(\opt + \frac{\ln (1/\alpha)}{\epsilon} \right) \ln (2/\alpha) } + \frac{4 \ln (2/\alpha)}{3}
$$
\end{enumerate}
\end{thm}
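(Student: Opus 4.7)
The plan is to decompose the analysis across three independent sources of randomness: the exponential mechanism's draw of the price $p$, the two Laplace noises producing $\hat{s}$ and $\hat{b}$, and the per-agent Bernoulli coins used in the final allocation. Throughout, let $s := \sum_{i \in \S} \mathbf{1}[p \ge \vs_i]$ and $b := \sum_{j \in \B} \mathbf{1}[p \le \vb_j]$, so that $\Pi(p, \vs, \vb) = \min(s, b)$.

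First, I would invoke the accuracy of the exponential mechanism. Since the utility $\Pi(\cdot, \vs, \vb)$ has sensitivity $1$ in the agents' reports and the range of prices has size $|P| = V$, with probability at least $1-\alpha$, $\Pi(p, \vs, \vb) \ge \opt - \tfrac{2\ln(V/\alpha)}{\epsilon}$. Next, two independent Laplace tail bounds give $|\hat{s} - s| \le \tfrac{\ln(1/\alpha)}{\epsilon}$ and $|\hat{b} - b| \le \tfrac{\ln(1/\alpha)}{\epsilon}$ with total failure probability at most $2\alpha$. Condition on all three events from here on.

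By the symmetric structure of the algorithm I may assume without loss of generality that $\hat{s} \le \hat{b}$, in which case the definition of $q^s$ in Algorithm~\ref{alg:mech} forces $q^s = 1$, so every willing seller is allocated and $\sum_{i} \os_i = s$ deterministically. For the buyer side, $\sum_{j} \ob_j$ is a sum of $b$ independent $\mathrm{Bern}(q^b)$ variables with $q^b = \hat{s}/(\hat{b} - \tfrac{\ln(1/\alpha)}{\epsilon})_+$. The assumption $\opt \ge 5\ln(V/\alpha)/\epsilon$ is precisely what keeps this denominator safely positive. Propagating the Laplace bounds through the ratio (a short algebraic calculation) yields $b q^b \in \bigl[s - O(\tfrac{\ln(1/\alpha)}{\epsilon}),\, s + O(\tfrac{\ln(1/\alpha)}{\epsilon})\bigr]$; informally, the $-\tfrac{\ln(1/\alpha)}{\epsilon}$ subtracted in the denominator is designed to exactly offset the downward bias that the Laplace noise can introduce in $\hat{s}$, so the buyer side essentially matches $s$ in expectation. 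A (one-sided) multiplicative Chernoff bound then yields $\sum_{j} \ob_j \ge b q^b - \sqrt{6 b q^b \ln(1/\alpha)}$ with probability $1-\alpha$, and bounding $b q^b \le \opt + O(\tfrac{\ln(1/\alpha)}{\epsilon})$ rewrites this in the form appearing in the theorem. Combining, $\Pi(\M_1) = \min\bigl(\sum_i \os_i,\, \sum_j \ob_j\bigr) \ge s - O(\tfrac{\ln(1/\alpha)}{\epsilon}) - \sqrt{6(\opt + \tfrac{\ln(1/\alpha)}{\epsilon})\ln(1/\alpha)}$, and substituting $s \ge \Pi(p,\vs,\vb) \ge \opt - \tfrac{2\ln(V/\alpha)}{\epsilon}$ yields the payoff bound. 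For the inventory, $|\sum_i \os_i - \sum_j \ob_j| = |s - \sum_j \ob_j|$ on the conditioned event, and a two-sided Bernstein-type bound produces the stated deviation terms, with the additive $\tfrac{4\ln(2/\alpha)}{3}$ arising naturally as the Bernstein higher-order correction. Union-bounding the failure events (exponential mechanism, two Laplace draws, and the Chernoff/Bernstein concentrations) totals the $1-8\alpha$ and $1-6\alpha$ guarantees.

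The main obstacle is the careful bookkeeping of how the Laplace slack $\tfrac{\ln(1/\alpha)}{\epsilon}$ propagates through the nonlinear ratio defining $q^b$, and verifying that the symmetric case $\hat{s} > \hat{b}$ produces an identical guarantee so that the WLOG assumption is harmless. The hypothesis $\opt \ge 5\ln(V/\alpha)/\epsilon$ plays a critical role: after deducting $O(\ln(V/\alpha)/\epsilon)$ for the exponential mechanism and $O(\ln(1/\alpha)/\epsilon)$ for each Laplace draw, the denominator $(\hat{b} - \tfrac{\ln(1/\alpha)}{\epsilon})_+$ is bounded well away from zero, so that $b q^b$ behaves as an additive $O(\ln(1/\alpha)/\epsilon)$ perturbation of $s$ rather than as an unbounded multiplicative distortion — this is what allows all error terms in the final statement to appear additively rather than as a multiplicative loss.
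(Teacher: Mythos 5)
Your proposal is correct and follows essentially the same route as the paper's proof: exponential-mechanism accuracy for the price, Laplace accuracy for $\widehat{s},\widehat{b}$, the sandwich $\min\{s,b\} - O(\ln(1/\alpha)/\epsilon) \le (\text{expected allocation}) \le 3\left(\min\{s,b\} + \ln(1/\alpha)/\epsilon\right)$, and then multiplicative Chernoff for the payoff and Bernstein for the inventory, with the hypothesis $\opt \ge 5\ln(V/\alpha)/\epsilon$ used exactly where you use it (keeping the denominators positive and the ratio $s/(s - 2\ln(1/\alpha)/\epsilon)$ bounded by $3$). The one divergence is your observation that on the event $\widehat{s}\le\widehat{b}$ the seller-side probability is exactly $1$, so only one side is random; the paper instead treats both $\widetilde{s}$ and $\widetilde{b}$ as binomials and union-bounds over both, which is why its constants (e.g.\ $1-8\alpha$, $18\ln(1/\alpha)/\epsilon$) are twice what your case-split would yield --- your version is slightly tighter and still implies the stated bounds, provided you handle the conditioning on $\{\widehat{s}\le\widehat{b}\}$ by intersecting with the good events rather than conditioning the Laplace tails on it.
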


\begin{rmk}\label{rmk:c}
Note that we constraint $\opt = \Omega \left( \ln \left( V/\alpha \right) / \epsilon\right)$. When $\opt = O \left( \ln \left( V/\alpha \right) / \epsilon\right)$, the inaccuracy introduced by releasing a differentially private price via the exponential mechanism is on the order of $\opt = \Omega \left( \ln \left( V/\alpha \right) / \epsilon \right)$, and we cannot hope to recover non-trivial utility guarantees.
\end{rmk}

The proof of Theorem~\ref{thm:thm_1} is given in Appendix~\ref{app: mechanism1}. 
We note that our bound does not follow directly from the classical guarantees of the Laplace and exponential mechanisms; it requires a more involved analysis of the concentration of the distribution of buyers and sellers selected to trade in Algorithm \ref{alg:mech}. 


\subsection{A Private Call Auction Mechanism via Lottery Numbers}
\label{subsec:lottery}

Here, we present a second mechanism that, rather than using independent randomization to decide who participates in a trade, uses correlated randomization to improve the payoff and reduce the inventory requirements of the mechanism. In the second mechanism, participants are given data-independent ``lottery numbers'', and thresholds on these lottery numbers (selected using the exponential mechanism) are used to select among willing traders on both sides of the market. This correlation allows us to remove the $\sqrt{\opt}$ term in the bounds of the previous mechanism, at the cost of introducing a logarithmic dependence on the number of agents $n$.

For a given valuation profile $(\vs,\vb)$, let $\Pi \left(p, \vs, \vb \right)$ be defined as in Equation \ref{eq:pi_abuse}. Assume seller $i$ is assigned a lottery number $l^s_i \in [n^s]$ and buyer $j$ is given $l^b_j \in [n^b]$ where we require that these lottery numbers are different for different agents. Without loss of generality, we assume $l^s_i = i$ and $l^b_j =j$. For a given price $p$, and profiles $(\vs,\vb)$, the loss of thresholds $\tau^s$ and $\tau^b$ on lottery numbers (one for sellers and one for buyers) is expressed as follows:
\begin{align*}
&L^s \left(\tau^s, p, \vs, \vb \right) = \left| \sum_{i \in \S}  \mathbf{1}\left[p \geq \vs_i , \tau^s \ge i \right]  - \Pi \left(p, \vs, \vb \right) \right|,
\\&L^b \left(\tau^b, p, \vb, \vb \right) = \left| \sum_{j \in \B}  \mathbf{1}\left[p \le \vb_j , \tau^b \le j \right]  - \Pi \left(p, \vs, \vb \right) \right|
\end{align*}
For a price $p$, these loss functions measure how far off the number of agents chosen to trade on each side of the market would be from our target number of trades, $\Pi(p, \vs, \vb)$, if we used thresholds $\tau^s$ and $\tau^b$ as a tie-breaking rule to select sellers and buyers who are willing to trade at price $p$, respectively. In Algorithm \ref{alg:mech_lottery}, just as before, we first use the exponential mechanism to select a price and then use the exponential mechanism with loss functions $L^s$ and $L^b$ (or utility functions: $-L^s$ and $-L^b$, based on the terminology used to describe the exponential mechanism in Appendix \ref{app:dp_tools}) to select the thresholds on lottery numbers.

\begin{algorithm}
    \SetAlgoNoLine
	\KwIn{Agents' valuations $(\vs,\vb)$, privacy level $\epsilon$.}
    \KwOut{Market price $p$, allocations $\oo=(\os,\ob)$.}
    
    Draw $p \propto \exp \left( \frac{\epsilon \Pi (p, \vs, \vb)}{2} \right)$ \Comment{Exponential mechanism to privately choose a price $p$}
    
	Draw $\tau^s \propto \exp \left( - \frac{\epsilon L^s(\tau^s, p, \vs, \vb)}{4} \right)$ \Comment{Exponential mechanism to privately choose $\tau^s$}
	
    Draw $\tau^b \propto \exp \left( - \frac{\epsilon L^b(\tau^b, p, \vs, \vb)}{4} \right)$ \Comment{Exponential mechanism to privately choose $\tau^b$}
    
	$\os_i \gets \mathbf{1} \left[ p \ge \vs_i , \tau^s \ge i  \right]$ for all $i \in \S$.
	\Comment{Sellers' allocations}
		
	$\ob_j \gets \mathbf{1} \left[ p \le \vb_j, \tau^b \le j \right] $ for all $j \in \B$.
	\Comment{Buyers' allocations}
	
	\caption{Private Call Auction with Allocation via Lottery Numbers ($\M_2$)}
	\label{alg:mech_lottery}
\end{algorithm}

\begin{clm}\label{clm:DP_2}
	The allocation mechanism described in Algorithm \ref{alg:mech_lottery} satisfies $3\epsilon$ joint differential privacy.
\end{clm}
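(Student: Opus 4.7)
The claim follows the same two-stage template used for Claim~\ref{clm:DP_1}: first show that the public triple $(p,\tau^s,\tau^b)$ produced in Steps~1--3 is $3\epsilon$-differentially private as a function of the reported valuation profile $(\vs,\vb)$, and then invoke a billboard-style argument to lift this to $3\epsilon$-joint differential privacy for the vector of per-agent allocations. Because each $\os_i$ (resp.\ $\ob_j$) is a deterministic function of the common triple $(\p,\tau^s,\tau^b)$, the agent's own valuation $\vs_i$ (resp.\ $\vb_j$), and the agent's own data-independent lottery number, the billboard step is standard; all the substantive work is in the sensitivity analysis of the three quality functions fed to the exponential mechanism, followed by adaptive composition.

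\textbf{Step 1 (Privacy of $p$).} The utility $\Pi(p,\vs,\vb)$ is a minimum of two counts of indicators, and modifying any single valuation (buyer or seller) changes the corresponding count by at most one and therefore the minimum by at most one. Hence $\Pi(p,\cdot,\cdot)$ has sensitivity $1$ in $(\vs,\vb)$ for every fixed $p$. The draw $p\propto\exp(\epsilon\,\Pi/2)$ is then exactly the exponential mechanism in its standard form $\exp\bigl(\epsilon_{\mathrm{em}} u/(2\Delta)\bigr)$ with $\Delta=1$ and $\epsilon_{\mathrm{em}}=\epsilon$, giving $\epsilon$-differential privacy.

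\textbf{Step 2 (Privacy of $\tau^s$ and $\tau^b$).} Fix $p$ and consider $L^s(\tau^s,p,\vs,\vb)=\bigl|\,\sum_{i\in\S}\mathbf{1}[p\ge\vs_i,\tau^s\ge i]-\Pi(p,\vs,\vb)\,\bigr|$. Changing one seller's valuation perturbs the indicator sum by at most $1$ and $\Pi$ by at most $1$, so by the reverse triangle inequality $L^s$ changes by at most $2$; changing one buyer's valuation perturbs only $\Pi$, by at most $1$. Therefore, for every fixed $(p,\tau^s)$, $L^s$ has sensitivity at most $2$ in the data. The draw $\tau^s\propto\exp(-\epsilon L^s/4)$ is the exponential mechanism with utility $-L^s$, sensitivity $\Delta=2$, and $\epsilon_{\mathrm{em}}=\epsilon$, hence $\epsilon$-DP conditional on $p$. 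A symmetric argument yields $\epsilon$-DP for $\tau^b$ conditional on $p$.

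\textbf{Step 3 (Composition and billboard).} By the adaptive composition theorem for pure DP, the joint release $(p,\tau^s,\tau^b)$ is $3\epsilon$-differentially private in $(\vs,\vb)$. For every agent, $\os_i$ and $\ob_j$ are deterministic functions of this common triple, the agent's own valuation, and the agent's own (data-independent) lottery number, so the Billboard Lemma of~\cite{jointdp,billboard} immediately promotes this to $3\epsilon$-joint differential privacy for the vector of allocations $\oo=(\os,\ob)$ together with the common price $p$.

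\textbf{Main obstacle.} The only non-boilerplate step is the sensitivity calculation for $L^s$ (and symmetrically $L^b$), because this loss couples the threshold-based selection count with the data-dependent target $\Pi(p,\vs,\vb)$; it is precisely the resulting sensitivity of $2$ that justifies the factor of $4$ (rather than $2$) in the denominator of the exponent in Steps~2 and~3 of Algorithm~\ref{alg:mech_lottery}. Everything else reduces to standard adaptive composition plus the same billboard argument used for Algorithm~\ref{alg:mech}.
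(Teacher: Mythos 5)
Your proposal is correct and follows essentially the same route as the paper's proof: sensitivity $1$ for $\Pi$ and sensitivity $2$ for $L^s,L^b$ give three $\epsilon$-DP exponential mechanisms, composition yields $3\epsilon$-DP for the triple $(p,\tau^s,\tau^b)$, and the Billboard Lemma lifts this to $3\epsilon$-joint DP since each allocation depends only on that triple and the agent's own data. Your explicit justification of the sensitivity bound of $2$ for $L^s$ (via the reverse triangle inequality applied to the one-unit changes in the indicator sum and in $\Pi$) is a detail the paper states without proof, and it is argued correctly.
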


\begin{thm}[Payoff and Inventory of Mechanism \ref{alg:mech_lottery}]\label{thm:thm_2}
For any $\alpha > 0$,
\begin{enumerate}
\item Payoff: with probability $1-3\alpha$,
\begin{align*}
\Pi \left( \M_2 \right) 
&\ge \opt - \frac{2 \ln \left(V/  \alpha \right)}{\epsilon} - \frac{4 \ln \left(n/ \alpha \right)}{\epsilon} 
\end{align*}
\item Inventory: with probability $1-2\alpha$,
$$
I \left( \M_2 \right) \le \frac{8 \ln \left(n/ \alpha \right)}{\epsilon},
$$
\end{enumerate}
\end{thm}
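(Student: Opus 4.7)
The plan is to decompose the error into two independent additive pieces: the error from the exponential mechanism's choice of price $p$, and the error from its choice of thresholds $\tau^s, \tau^b$. The payoff bound will combine both; the inventory bound will need only the second.

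\emph{Step 1 (price accuracy).} The utility function $p \mapsto \Pi(p,\vs,\vb)$ has sensitivity $1$: changing one agent's valuation changes exactly one of the indicator sums by at most $1$, hence their minimum by at most $1$. The standard accuracy guarantee for the exponential mechanism over the range $P$ of size $V$ (see Appendix~\ref{app:dp_tools}) gives, with probability at least $1-\alpha$,
\[
\Pi(p,\vs,\vb) \;\ge\; \max_{p' \in P} \Pi(p',\vs,\vb) \;-\; \frac{2\ln(V/\alpha)}{\epsilon} \;=\; \opt - \frac{2\ln(V/\alpha)}{\epsilon}.
\]

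\emph{Step 2 (threshold accuracy).} I would next observe that, for \emph{any} realized price $p$, the loss $L^s(\cdot,p,\vs,\vb)$ attains its minimum value $0$: taking $\tau^s$ to be the $\Pi(p,\vs,\vb)$-th smallest index among the sellers with $\vs_i \le p$ (which exists because $\Pi(p,\vs,\vb) \le \sum_i \1[\vs_i \le p]$ by definition of $\Pi$ as a minimum) makes the indicator sum equal $\Pi(p,\vs,\vb)$ exactly. A quick sensitivity check shows that $L^s$ has sensitivity $\le 2$ in the valuation data: an adversarial change in one seller's valuation shifts the indicator sum by at most $1$ and shifts $\Pi(p,\cdot,\cdot)$ by at most $1$, so the two terms inside the absolute value move by at most $2$ jointly. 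This sensitivity of $2$ is exactly why the exponent in Algorithm~\ref{alg:mech_lottery} uses $-\epsilon/4$. The exponential mechanism over the $n^s+1$ candidate thresholds therefore guarantees, with probability at least $1-\alpha$,
\[
L^s(\tau^s,p,\vs,\vb) \;\le\; \frac{2 \cdot 2}{\epsilon}\ln\!\left(\frac{n^s+1}{\alpha}\right) \;\le\; \frac{4\ln(n/\alpha)}{\epsilon},
\]
and an identical bound on $L^b$ holds with probability at least $1-\alpha$ by the same argument.

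\emph{Step 3 (combining).} Unpacking the definitions, $\bigl|\sum_i \os_i - \Pi(p,\vs,\vb)\bigr| = L^s(\tau^s,p,\vs,\vb)$ and $\bigl|\sum_j \ob_j - \Pi(p,\vs,\vb)\bigr| = L^b(\tau^b,p,\vs,\vb)$. The inventory bound then follows from the triangle inequality, $I(\M_2) \le L^s + L^b \le \frac{8\ln(n/\alpha)}{\epsilon}$, which holds with probability at least $1-2\alpha$ by a union bound over the two threshold events. For the payoff, $\Pi(\M_2) = \min\{\sum_i \os_i, \sum_j \ob_j\} \ge \Pi(p,\vs,\vb) - \max(L^s,L^b)$, and chaining this with Step~1 and the bounds of Step~2 yields the claimed bound under a union bound over all three events (probability at least $1-3\alpha$).

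The main conceptual step — and essentially the only place lottery numbers do real work — is the observation in Step~2 that the minimum of $L^s$ is exactly $0$ regardless of $p$. This is what removes the $\sqrt{\opt}$ concentration term that arose from independent Bernoulli sampling in Mechanism~$\M_1$, and replaces it with the $\log n$ overhead that appears in the theorem statement. The rest is a routine sensitivity/accuracy calculation for the exponential mechanism.
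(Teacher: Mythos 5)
Your proposal is correct and follows essentially the same route as the paper's proof: exponential-mechanism accuracy for the price, the observation that $\min_{\tau^s} L^s = \min_{\tau^b} L^b = 0$ combined with sensitivity $2$ to bound the threshold losses by $\frac{4\ln(n/\alpha)}{\epsilon}$ each, and then a union bound with the triangle inequality for the inventory. The only addition you make is spelling out the explicit witness threshold achieving zero loss, which the paper leaves implicit.
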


The proof of Claim \ref{clm:DP_2} is provided in Appendix \ref{app:DPproofs}, and that of Theorem \ref{thm:thm_2} in Appendix~\ref{app:mechanism2}.

\subsection{A Meta Algorithm: Selecting the Best Mechanism Privately}
\label{subsec:meta}

Notice that the first term in the payoff bounds of both Theorems \ref{thm:thm_1} and \ref{thm:thm_2} are identical (as they both correspond to choosing a price using the exponential mechanism) but the remaining terms differ ($\M_1$ relies on binomial coin flips for tie-breaking whereas $\M_2$ tie-breaks via thresholds on lottery numbers). These two bounds are in general not comparable, as one depends on the maximum number of shares $\opt$ that can be cleared, whereas the other one depends on the total number $n$ of agents in the market. The first bound provides better guarantees (up to constants and $\ln \left(1/\alpha \right)$ terms) when $\sqrt{\opt} < \ln \left(n \right) / \epsilon$, i.e. when the number of possible trades is significantly smaller than the total number of agents in the market,\footnote{This models practical situations in repeated financial markets where sellers price a security higher than most buyers are willing to pay. In such situations, buyers may elect to wait until a new seller comes and offers a better price, while sellers may wait for a new buyer willing to buy at the current price.} whereas the second bound provides better guarantees when $\sqrt{\opt} > \ln \left(n \right)/ \epsilon$.

We can achieve the better of the two bounds by comparing the bounds of Theorems \ref{thm:thm_1} and \ref{thm:thm_2} in a differentially-private manner and then running the mechanism with the better bound according to this private computation. 
To do so we compute the difference of payoff bounds of Mechanisms~\ref{alg:mech} and~\ref{alg:mech_lottery}
$$
f \triangleq \frac{2 \ln \left( 1/\alpha \right)}{\epsilon}  + \sqrt{ 6  \left(\opt + \frac{\ln (1/\alpha)}{\epsilon} \right) \ln \left(1/\alpha\right)} - \frac{4 \ln (n/ \alpha)}{\epsilon}
$$
in a differentially privately manner.\footnote{$\opt$ is a function of the input data set, hence a direct comparison of the bounds without addition of noise may leak information about the reported bids.} Then, based on the sign of $f$, the mechanism decides whether to run $\M_1$ or $\M_2$. The private computation of $f$ will unavoidably add an extra term of order $\bigo (1/\epsilon)$ to the final payoff bound. This mechanism is described in Algorithm~\ref{alg:mech_combined}. We provide guarantees on privacy, payoff, and inventory of this mechanism below.

\begin{algorithm}
    \SetAlgoNoLine
	\KwIn{Agents' valuations $(\vs,\vb)$, privacy level $\epsilon$, confidence level $\alpha$.}
    \KwOut{Market price $p$, allocations $\oo=(\os,\ob)$.}
	$f \gets \frac{2 \ln \left( 1/\alpha \right)}{\epsilon}  + \sqrt{ 6  \left(\opt + \frac{\ln (1/\alpha)}{\epsilon} \right) \ln \left(1/\alpha\right)} - \frac{4 \ln (n/ \alpha)}{\epsilon}$
	
	$\widetilde{f} \gets f + Lap \left(\frac{\sqrt{6 \ln(1/\alpha)}}{\epsilon} \right)$ \Comment{Private estimate of $f$}
	
	\eIf{$\widetilde{f} < 0$}
	{
		Run $\M_1 \left(\vs,\vb,\epsilon, \alpha \right)$ (Algorithm \ref{alg:mech}) and get $p,\oo$.
    }
    {
		Run $\M_2 \left(\vs,\vb,\epsilon \right)$ (Algorithm \ref{alg:mech_lottery}) and get $p,\oo$.
	}
	\caption{Private Call Auction with Allocation: A Meta Algorithm ($\M_3$)}
	\label{alg:mech_combined}
\end{algorithm}

\begin{clm}\label{clm:DP_3}
	The allocation mechanism described in Algorithm \ref{alg:mech_combined} satisfies $7\epsilon$ joint differential privacy.
\end{clm}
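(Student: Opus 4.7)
The plan is to prove Claim~\ref{clm:DP_3} by combining a sensitivity analysis of $f$, the privacy guarantee of the Laplace mechanism, and composition with the joint-DP bounds already established for $\M_1$ and $\M_2$ in Claims~\ref{clm:DP_1} and~\ref{clm:DP_2}. The structure of Algorithm~\ref{alg:mech_combined} is that of a data-dependent selector: a noisy statistic $\widetilde{f}$ is released, its sign triggers one of two sub-mechanisms, and the final output is the sub-mechanism's output.

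First I would bound the global sensitivity of the deterministic function $f$ computed from the data. The parameters $n,\alpha,\epsilon$ are public, so $f$ depends on the input valuations only through $\opt$. For neighboring profiles differing in a single agent's report, each of the two counts inside the $\min$ in the definition~\eqref{eq:opt} of $\opt$ can shift by at most $1$ at every price, so $|\opt(D)-\opt(D')|\le 1$. The only term of $f$ that depends on the data is $\sqrt{6(\opt+\ln(1/\alpha)/\epsilon)\ln(1/\alpha)}$, and by concavity of the square root we have $\sqrt{a+1}-\sqrt{a}\le 1$ for every $a\ge 0$. Hence $|f(D)-f(D')|\le \sqrt{6\ln(1/\alpha)}$, i.e.\ $f$ has global sensitivity at most $\sqrt{6\ln(1/\alpha)}$.

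Next I would apply the Laplace mechanism: since $\widetilde{f}=f+Lap(\sqrt{6\ln(1/\alpha)}/\epsilon)$ adds noise of scale equal to the sensitivity divided by $\epsilon$, the release of $\widetilde{f}$ is $\epsilon$-differentially private. The indicator $\mathbf{1}[\widetilde{f}<0]$ that determines which branch is taken is a post-processing of $\widetilde{f}$ and is therefore also $\epsilon$-DP.

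Finally I would invoke adaptive composition of joint differential privacy. Conditional on $\widetilde{f}$, the meta-algorithm runs exactly one of $\M_1$ or $\M_2$, each of which is $3\epsilon$-joint differentially private by Claims~\ref{clm:DP_1} and~\ref{clm:DP_2}; since JDP composes with DP exactly as DP does, a conservative accounting that charges for the $\epsilon$-DP release of $\widetilde{f}$ together with the worst-case $3\epsilon$ cost of each of the two candidate branches yields a total privacy loss of $\epsilon+3\epsilon+3\epsilon=7\epsilon$, giving the claimed $7\epsilon$-JDP guarantee. The main (and really the only nontrivial) obstacle is the sensitivity bound on $f$ in the first step; once that is in hand, the Laplace mechanism, post-processing, and the JDP composition theorem finish the proof mechanically.
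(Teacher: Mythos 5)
Your proposal is correct and the accounting matches the paper's exactly: the same sensitivity bound of $\sqrt{6\ln(1/\alpha)}$ for $f$ via subadditivity of the square root (the paper phrases it as $\sqrt{a}+\sqrt{b}\ge\sqrt{a+b}$), the same $\epsilon$ charge for the Laplace release of $\widetilde{f}$, and the same conservative $1+3+3$ split charging for both candidate branches. The one place where your route differs is the final step. You compose at the level of joint differential privacy, treating $\M_1$ and $\M_2$ as $3\epsilon$-JDP black boxes and asserting that JDP composes with DP ``exactly as DP does.'' The paper instead opens both black boxes: it observes that the seven quantities $(\widetilde{f},p_1,p_2,\widehat{s},\widehat{b},\tau^s,\tau^b)$ are each computed with $\epsilon$-\emph{standard} DP, composes them via Theorem~\ref{thm:composition} into a single $7\epsilon$-DP ``billboard,'' and applies the Billboard Lemma~\ref{lem:billboard} \emph{once} at the end, since every agent's allocation is a function of only their own data and that tuple. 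Your version is cleaner to state but leans on a composition theorem for joint DP that the paper's toolkit (Appendix~\ref{app:dp_tools}) does not actually provide; the statement is true and provable, but the standard way to prove it in this setting is precisely the billboard decomposition the paper uses, so if you want the argument to be self-contained relative to the stated lemmas you should reduce your JDP-composition step to that decomposition (run both sub-mechanisms, note the selection by the sign of $\widetilde{f}$ is a per-agent post-processing of the agent's own outputs together with the public $\widetilde{f}$). Also, your intermediate inequality $\sqrt{a+1}-\sqrt{a}\le 1$ is not literally the one you need --- the argument of the square root shifts by $6\ln(1/\alpha)$ when $\opt$ shifts by $1$, so the relevant bound is $\sqrt{a+c}-\sqrt{a}\le\sqrt{c}$ with $c=6\ln(1/\alpha)$ --- though your stated conclusion is the right one.
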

The proof of Claim \ref{clm:DP_3} is provided in Appendix \ref{app:DPproofs}.

\begin{thm}[Payoff and Inventory of Mechanism \ref{alg:mech_combined}]\label{thm:thm_3}
Suppose $\opt \geq 5 \ln (V/\alpha)/\epsilon$.
\begin{enumerate}
\item Payoff: with probability $1-18\alpha$,
\begin{align*}
&\Pi \left( \M_3 \right)  \\
&\ge \opt - \frac{2 \ln (\frac{V}{\alpha})}{\epsilon} - \min\left\{ \frac{2 \ln \left( \frac{1}{\alpha} \right)}{\epsilon}  + \sqrt{ 6  \left(\opt + \frac{\ln (\frac{1}{\alpha})}{\epsilon} \right) \ln \left(\frac{1}{\alpha}\right)},  \frac{4 \ln (\frac{n}{\alpha})}{\epsilon} \right\} - \frac{\sqrt{6} \ln^{1.5} (\frac{1}{\alpha})}{\epsilon}
\end{align*}

\item Inventory: with probability $1-14\alpha$,
\begin{align*}
&I \left( \M_3 \right) \\
&\le 4 \min\left\{ \frac{2 \ln \left(\frac{1}{\alpha} \right)}{\epsilon}  + \sqrt{ 6  \left(\opt + \frac{\ln \left(\frac{1}{\alpha}\right)}{\epsilon} \right) \ln \left(\frac{1}{\alpha}\right)},  \frac{4 \ln (\frac{n}{\alpha})}{\epsilon} \right\} +  \frac{4 \sqrt{6} \ln^{1.5} (\frac{1}{\alpha})}{\epsilon} + \frac{10 \ln (\frac{1}{\alpha})}{\epsilon} + \frac{4 \ln (\frac{2}{\alpha})}{3}
\end{align*}

\end{enumerate}
\end{thm}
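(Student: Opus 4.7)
The plan is to prove Theorem~\ref{thm:thm_3} by a case analysis on which branch of $\M_3$ is executed, transferring the guarantees of Theorems~\ref{thm:thm_1} and~\ref{thm:thm_2} to the chosen mechanism. Privacy is immediate from Claim~\ref{clm:DP_3}, so I focus on payoff and inventory. Let $B_1 := \frac{2\ln(1/\alpha)}{\epsilon} + \sqrt{6\bigl(\opt+\ln(1/\alpha)/\epsilon\bigr)\ln(1/\alpha)}$ and $B_2 := \frac{4\ln(n/\alpha)}{\epsilon}$ denote the instance-dependent slacks appearing in the payoff bounds of Theorems~\ref{thm:thm_1} and~\ref{thm:thm_2}, so that $f = B_1 - B_2$ and $\widetilde f = f + N$ with $N \sim Lap(\sqrt{6\ln(1/\alpha)}/\epsilon)$. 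The standard Laplace tail bound gives $|N| \le T := \frac{\sqrt{6}\ln^{1.5}(1/\alpha)}{\epsilon}$ with probability at least $1-\alpha$, and this matches exactly the extra additive term appearing in the statement.

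The core technical step, which I would isolate as a short lemma, is: on the event $\{|N| \le T\}$, the slack $B_j$ of whichever mechanism $\M_j$ is executed satisfies $B_j \le \min(B_1, B_2) + T$. This is a two-case check. If $\widetilde f < 0$ so $\M_1$ runs, then either $B_1 \le B_2$ and $B_1 = \min(B_1, B_2)$ trivially, or $B_1 > B_2$ and $f > 0$, in which case $\widetilde f < 0$ forces $N < -f < 0$, so $f = B_1 - B_2 < |N| \le T$ and $B_1 \le B_2 + T$. The case $\widetilde f \ge 0$ (so $\M_2$ runs) is handled symmetrically.

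With this lemma, the payoff bound is immediate: if $\widetilde f < 0$, Theorem~\ref{thm:thm_1} gives $\Pi(\M_3) = \Pi(\M_1) \ge \opt - \frac{2\ln(V/\alpha)}{\epsilon} - B_1$ with failure probability $8\alpha$, and substituting the lemma yields the claimed bound; if $\widetilde f \ge 0$, Theorem~\ref{thm:thm_2} gives the analogous statement with failure probability $3\alpha$. Union-bounding over the Laplace tail and both possible mechanism-failure events yields a valid (and loose) total failure probability well within the claimed $18\alpha$. For inventory, I would first observe that the elementary estimate $\sqrt{\ln(2/\alpha)} \le 2\sqrt{\ln(1/\alpha)}$ (valid for $\alpha \le 1/2$) converts the inventory bound in Theorem~\ref{thm:thm_1} into $I(\M_1) \le 4B_1 + \frac{10\ln(1/\alpha)}{\epsilon} + \frac{4\ln(2/\alpha)}{3}$, while Theorem~\ref{thm:thm_2} directly gives $I(\M_2) = 2B_2 \le 4B_2$. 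Replacing $B_j$ by $\min(B_1, B_2) + T$ via the lemma produces the $4\min\{\cdot,\cdot\} + 4T$ leading terms in the stated bound, and a final union bound over the Laplace event together with the inventory-failure events of Theorems~\ref{thm:thm_1} and~\ref{thm:thm_2} delivers the probability statement.

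The main obstacle is book-keeping rather than any deep new idea: because the branch taken depends on $N$, one needs to justify via a union bound over both possible branches (using that the internal randomness of $\M_1$ and $\M_2$ is independent of $N$) that the guarantees of the realized mechanism hold. The coefficient $4$ in front of $\min\{\cdot,\cdot\}$ in the inventory bound (rather than the tighter $2$) exists precisely to absorb the $\sqrt{\ln(2/\alpha)}$-vs-$\sqrt{\ln(1/\alpha)}$ discrepancy between the inventory and payoff slacks of $\M_1$; beyond this accounting, no probabilistic tool beyond the Laplace tail is needed.
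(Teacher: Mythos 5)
Your proposal is correct and follows essentially the same route as the paper's proof: condition on the Laplace tail event $|\widetilde f - f| \le \sqrt{6}\ln^{1.5}(1/\alpha)/\epsilon$, case-split on the sign of $\widetilde f$, and show that the slack of whichever mechanism is run is at most $\min\{B_1,B_2\}$ plus the Laplace error (the paper does this via the chain $B_1 = B_2 + f \le B_2 + \widetilde f + T \le B_2 + T$ rather than your isolated lemma, but it is the same argument), with the same $\sqrt{\ln(2/\alpha)} \le 2\sqrt{\ln(1/\alpha)}$ conversion producing the factor $4$ in the inventory bound. The only cosmetic difference is that you package the key step as a standalone lemma and are slightly more explicit about the union bound over the two branches; both yield the claimed failure probabilities.
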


This theorem follows from Theorems \ref{thm:thm_1} and \ref{thm:thm_2}, as well as the accuracy guarantee of the Laplace mechanism used in Algorithm \ref{alg:mech_combined} to compute $f$. We defer the full proof to Appendix~\ref{app:mechanism3}.

\subsection{A Lower Bound}\label{sec:lowerbound}
We now provide a lower bound showing that \emph{any} algorithm which computes a price in an $(\epsilon,\delta)$-differentially private manner and allocates among willing participants at this price \emph{must}, for \emph{some} instance, suffer a loss of $\Omega\left(1/\epsilon\right)$ (compared to the optimal number of shares that could be cleared on that instance). 
Because this bound applies to a broader set of mechanisms that reveal \emph{only} the price privately (but may select the optimal allocation absent privacy), it also applies to the mechanisms considered in Section~\ref{sec:mechanisms}.
We will compare the performance of any given differentially private algorithm on several input data sets. To do so, we will define an instance-dependent benchmark below, that we call $\opt(D)$. Formally, given an input data set $D = (\vs,\vb)$, our benchmark is:
\[
\opt \left(D \right) = \max_p \min \left\{\sum_{i \in \S} \1 \left[\vs_i \leq p \right], \sum_{j \in S} \1 \left[\vb_j \geq p \right] \right\}.
\]

\begin{definition}[Loss of an algorithm]
For any (possibly randomized) algorithm $\A : \D^n \to P$ that takes a data set $D =(\vs,\vb)$ as an input and outputs a price $p$, the loss of $\A$ on input data set $D = (\vs,\vb)$ of agents valuations is defined as follows:
$$
L \left(\A,D \right) = 
\opt \left(D \right)
- \E_{p \sim \A(D)} \left[ \min \left\{\sum_{i \in \S} \1 \left[\vs_i \leq p \right], \sum_{j \in S} \1 \left[\vb_j \geq p \right] \right\} \right].
$$
I.e., this loss compares the number of trades that could be cleared in expectation at the price selected by $\A$ 
to the maximum number of trades when the trading price is optimally chosen. We define the worst-case expected loss of $\A$ as the worst-case loss over all data sets, i.e. $L \left(\A \right) = \sup_D \left[ L (\A,D) \right]$.
\end{definition}

Our lower bound will hold so long as $\delta$ is not too large in comparison with $\epsilon$.\footnote{Typically, differentially private algorithms use $\delta << \epsilon$.} We note that our lower bound on the expected loss matches the $\widetilde{\bigo}\left(1/\epsilon\right)$ dependencies\footnote{The instances we construct use $V,n \sim 1/\epsilon$. The logarithmic dependencies of our upper bounds in $n$ and $V$ translate into logarithmic dependencies in $1/\epsilon$, hence the $\widetilde{\bigo}$ notation.} of our high probability upper bounds on the loss for Mechanisms~\ref{alg:mech},~\ref{alg:mech_lottery},~\ref{alg:mech_combined} (and consequently of any upper bound on the expected loss of these mechanisms). Finally, it is worth remarking that our lower bound for $(\epsilon,\delta)$-DP mechanisms matches the upper bound obtained by restricting attention to $(\epsilon,0)$-DP mechanisms; this implies that relaxing $\delta$-privacy requirements of Mechanisms~\ref{alg:mech},~\ref{alg:mech_lottery},~\ref{alg:mech_combined} will not lead to any significant improvements in terms of their accuracy guarantees. 

\begin{thm}\label{thm:lowerbounds}[Lower bound on the loss of private algorithms] Pick any $\epsilon, \delta$ such that  $0 \le \epsilon \le 1$ and $\delta = \bigo (\epsilon)$. There exists a range of (integer) valuations $P(\epsilon)$ and a number of agents $n(\epsilon)$ such that \emph{any} $(\epsilon, \delta)$-DP algorithm $\A : \D^{n(\epsilon)} \to P(\epsilon)$ must suffer worst-case expected loss of $\Omega\left(1 /\epsilon \right)$.
\end{thm}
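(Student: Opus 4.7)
The plan is to use a packing + group-privacy argument, in the style standard for differential-privacy lower bounds on selection problems. First I would build a family of hard instances $\{D_k\}_{k\in[V]}$ on which the optimal price $p=k$ uniquely clears $T=\Theta(1/\epsilon)$ shares while every other price clears zero; any private algorithm must effectively identify $k$ to achieve small loss on $D_k$, and this cannot hold simultaneously for all $k$.

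Concretely, take $T=\lceil c_0/\epsilon\rceil$, $n(\epsilon)=2T$, and $P(\epsilon)=\{1,\ldots,V\}$ with $V$ a sufficiently large constant (or $V=\lceil c_1/\epsilon\rceil$ to match the footnote). For each $k\in[V]$, let $D_k$ consist of $T$ sellers and $T$ buyers, all with valuation $k$. Then $\opt(D_k)=T$, achieved uniquely at price $p=k$, while $\Pi(q,D_k)=0$ for every $q\neq k$ (either no seller or no buyer is willing). Hence
\[
L(\A,D_k) \;=\; T\bigl(1-\P_{\A(D_k)}(\hat p=k)\bigr).
\]
Supposing for contradiction that $L(\A)<T/2$, we would get $\P_{\A(D_k)}(\hat p=k)>1/2$ for every $k\in[V]$.

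To derive a contradiction I would apply group privacy: $D_k$ and $D_1$ differ in at most $2T$ coordinates, so $(\epsilon,\delta)$-DP gives
\[
\P_{\A(D_k)}(\hat p=k) \;\le\; e^{2T\epsilon}\,\P_{\A(D_1)}(\hat p=k) \;+\; 2T\,e^{2T\epsilon}\,\delta.
\]
With $2T\epsilon\le 2c_0$, the multiplicative factor is the constant $e^{2c_0}$. Since the events $\{\hat p=k\}$ are disjoint over $k$, $\sum_k \P_{\A(D_1)}(\hat p=k)\le 1$, so summing the above over $k\in[V]$ yields
\[
\tfrac{V}{2} \;<\; \sum_{k=1}^V \P_{\A(D_k)}(\hat p=k) \;\le\; e^{2c_0} + V\cdot\eta, \qquad \eta := 2T\,e^{2T\epsilon}\,\delta.
\]
Picking $V$ a sufficiently large constant (depending on $c_0$ and $\eta$) then forces a contradiction, so some $D_k$ must satisfy $L(\A,D_k)\ge T/2=\Omega(1/\epsilon)$.

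The main obstacle is controlling $\eta$. Because group privacy is invoked across $\Theta(1/\epsilon)$ coordinates, the $\delta$-slack accumulates to $\eta = \Theta(\delta/\epsilon)$, and this is precisely where the hypothesis $\delta=\bigo(\epsilon)$ enters: the hidden constant must be chosen small enough (relative to $c_0$) to keep $\eta<1/2$. Everything else—the packing construction, verifying $\opt(D_k)=T$ is uniquely achieved at $k$, and the pigeonhole on disjoint events—is routine once this delicate constant balance is set. The resulting bound matches, up to the logarithmic factors hidden in $\widetilde{\bigo}(\cdot)$, the upper bounds of Theorems~\ref{thm:thm_1},~\ref{thm:thm_2},~\ref{thm:thm_3} on instances of size $V,n=\Theta(1/\epsilon)$.
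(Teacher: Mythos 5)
Your proof is correct, but it takes a genuinely different route from the paper's. You use a packing argument: $V$ ``all-or-nothing'' instances $D_k$ in which every agent has valuation $k$, so that exactly $T=\Theta(1/\epsilon)$ shares clear at price $k$ and zero shares clear at any other price; group privacy over all $n=2T$ coordinates plus disjointness of the events $\{\hat p = k\}$ under the single distribution $\A(D_1)$ then forces some instance on which the algorithm misses the unique good price with probability at least $1/2$, losing $T/2$ shares. The paper instead uses a two-dataset (Le Cam-style) argument on a shifted family: $D_0$ has sellers at $\{1,\dots,n\}$ and buyers at $\{n,\dots,2n-1\}$, $D_l$ shifts every valuation up by $l$, and consecutive datasets differ in only two entries. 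The structural fact the paper exploits --- and which your construction deliberately avoids needing --- is that on those instances the loss degrades \emph{gradually}: a price $m$ away from the optimum forfeits exactly $m$ trades, so concentrating near the optimum of $D_0$ within radius $m=\lceil 1/\epsilon\rceil$ costs $k-m$ trades on $D_k$ with $k=2\lceil 1/\epsilon\rceil$, and no single price distribution can be accurate on both. Your approach is arguably more elementary (a pigeonhole over $V$ disjoint events rather than a calibrated trade-off between two loss bounds), and it shows the lower bound already holds with $V=O(1)$ price levels, whereas the paper's instances use $V,n\sim 1/\epsilon$; the theorem only asserts existence of some $P(\epsilon)$ and $n(\epsilon)$, so either is admissible. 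Both arguments consume the hypothesis $\delta = \bigo(\epsilon)$ in the same place: group privacy is invoked across $\Theta(1/\epsilon)$ coordinates, so the additive slack accumulates to $\Theta(\delta/\epsilon)$ and must be kept below a small constant. Your bookkeeping of that slack ($\eta = 2Te^{2T\epsilon}\delta$, controlled because $2T\epsilon=O(1)$) is sound; the only minor slip is that $2T\epsilon \le 2c_0+2$ rather than $2c_0$, which changes nothing.
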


The proof of Theorem~\ref{thm:lowerbounds} relies on constructing a family of data sets $\{D_l\}_l$ such that no differentially private algorithm $\A$ can simultaneously suffer expected loss of $\bigo \left(1 / \epsilon\right)$ on all of them. We do so by carefully calibrating the following trade-off: on the one hand, we require any pair of data sets in $\{D_l\}_l$ be close enough that the stability properties of differential privacy guarantee any private algorithm must pick a similar distribution of prices on both data sets. On the other hand, we require that the data sets furthest from each other are different \emph{enough} such that no fixed distribution can incur a low loss on both.
\subsection{Connections to the Market Impact Literature}\label{sec:marketimpact}

As mentioned in the Introduction, it is possible to draw some informal but interesting connections
between this work and the finance literature
on market impact. Market impact models typically propose strong
stochastic assumptions on price formation (e.g. random walk and diffusion models or martingale assumptions on limit order dynamics) and then solve for the optimal strategy to
minimize trading costs and price impact. In particular, there is a large body of work
on the so-called ``square root law'' (see, eg. \cite{gatheral2010no,gatheralslides}). 
which predicts that the change to price
inflicted by a trade of $k$ shares scales with $\sqrt{k/\mathcal{V}}$, where $\mathcal{V}$ is the total
volume of shares cleared during the trade; the ratio $k/\mathcal{V}$ is referred to as the 
trade's {\em participation rate}. As we note below, $\mathcal{V}$ is typically closely related to
other measures of market activity such as the number of orders placed (as with our $n$)
or the number of quote changes in limit order dynamics.

Our results imply that the change in the expected clearing price in our DP call auction
resulting from an order of $k$ shares is bounded by a multiplicative
factor of  $(e^{k\epsilon} - 1)$. Setting this equal to $\sqrt{k/n}$ to match the square root law\footnote{Here we are assuming that the number
of orders $n$ in our model plays the role of $\mathcal{V}$ above; see subsequent footnote.}
and solving for $\epsilon$ approximately yields 
$\epsilon \approx 1/\sqrt{kn}$ for small participation rates.
Plugging this into our utility bound of Theorem \ref{thm:thm_2}, the shares
we execute at this $\epsilon$ scales like $\opt(1 - \sqrt{kn}/\opt)$. 
Thus as long as $k$ is
$o \left(n \right)$ and $\opt$ scales with $n$,\footnote{This scaling is broadly consistent with recent
data from electronic exchanges. For instance, the ratio of 
shares traded to 
quote changes (a common measure of market activity) across 3443 U.S. equities averaged 0.16
with 
standard deviation 
0.09.}
asymptotically we approach $\opt$ with the same price impact as
that predicted by the square root law but with two major advantages.
First, we have made {\em no} assumptions, stochastic or otherwise,
on the orders placed by market participants. Second, we are not only bounding
the price impact, we are also bounding information leakage of {\em any} form, as
per the promises of differential privacy.
\section{Strategic Framework}\label{sec:game}

In Section \ref{sec:mechanisms}, we focused on the algorithmic form of our mechanism and provided privacy guarantees and optimality guarantees with respect to the reported valuations, without regard to whether those reports are truthful or not. In this section, we embed our mechanisms into a game theoretic framework and examine its properties, including (approximate) truthfulness. More precisely, we now assume the agents are strategic; they may decide to report a bid that differs from their valuation, or even to not participate in the mechanism in the first place. Formally, all sellers $i$ and buyers $j$ have quasi-linear utilities determined by their own valuations and the outcome of the mechanism:
$
\us_{i}(\M) = \os_i \cdot (p - \vs_i)
,
\ub_{j}(\M) = \ob_i \cdot (\vb_j - p),
$
where, with a slight abuse of notation, we omit the dependency of $\M$ on the agents' reports.

Buyers and sellers aim to maximize their utility from participating (or not participating) in the mechanism. In the face of strategic behavior, we will require our mechanisms to be (approximately) truthful and individually rational; i.e., it should never be in an agent's best interest to misreport his valuation, and an agent should always have a strategy that guarantees non-negative utility from participating in the mechanism and so would rather participate than not. Individual rationality and (approximate) truthfulness are formally defined below:

\begin{definition}[Ex-Post Individual Rationality]\label{def:IR}
We say a double-auction mechanism $\M$ satisfies ex-post individual rationality if, for every seller $i \in \S$, there exists a bid $\ts_i$ for agent $i$ such that for every possible set of bids $\mathbf{r}_{-i}$ submitted by all agents but $i$, and every realization of the randomness of the mechanism $\M$,
$
\us_i \left(\M \left( \ts_i, \mathbf{r}_{-i} \right)\right)  \geq 0
$,
and similarly for every buyer $j$, there exists a bid $\tb_j$ for agent $j$ such that for every possible set of bids $\mathbf{r}_{-j}$ submitted by all agents but $j$, and every realization of the randomness of the mechanism $\M$,
$
\ub_j \left(\M \left( \tb_j, \mathbf{r}_{-j} \right) \right) \geq 0.
$
\end{definition}

\begin{definition}[Approximate Dominant-Strategy Truthfulness]\label{def:truthfulness}
We say a double-auction mechanism $\M$ satisfies $\gamma$-approximate dominant-strategy truthfulness if, for every seller $i \in \S$, every possible bid
$\ts_i$ submitted by $i$, and every possible set of bids $\mathbf{r}_{-i}$ submitted by all agents but $i$,
$$
\E_\M \left[ \us_i \left(\M \left( \ts_i, \mathbf{r}_{-i} \right) \right)\right] \le \E_\M \left[ \us_i \left(\M \left( \vs_i, \mathbf{r}_{-i} \right) \right)\right] + \gamma
$$
and similarly for every buyer $j$, for every possible bid
$\tb_j$ submitted by $j$ and every possible set of bids $\mathbf{r}_{-j}$ submitted by all agents but $j$,
$$
\E_\M \left[ \ub_j \left(\M \left( \tb_j, \mathbf{r}_{-j} \right) \right)\right] \le \E_\M \left[ \ub_j \left(\M \left( \vb_j, \mathbf{r}_{-j} \right) \right)\right] + \gamma
$$
where expectations are taken with respect to the randomness of $\M$.
\end{definition}

In Section \ref{sec:iric}, we show that our mechanisms are \emph{individual rational} and (unlike in the standard call auction) approximately \emph{dominant-strategy truthful}. While our results assume that agents wish to trade a single share, we show how our per-share guarantees translate into (gracefully degrading) per-player guarantees in more general setting in which agents can trade multiple shares.

Then, in Sections \ref{sec:learning_noprivacy}-\ref{sec:learning_privacy}, we consider \emph{learning dynamics} under both the standard call auction and our mechanism and show that a system in which agents use a modified exponential weights algorithm (which we call ``Social'' Exponential Weights) to learn to bid will eventually converge to the optimal number of shares cleared. While it is true that truthfulness implies that agents cannot do better than bidding their true values, one might consider learning dynamics for two reasons. First, if agents do not trust the mechanism designer (or share their assumptions), applying a no-regret learning algorithm is a plausible response to guarantee good performance. Second, good outcomes obtained in the presence of decentralized, distributed, and selfish algorithms are compelling evidence of the robustness and quality of our mechanism. 
To our knowledge, the use of no-regret learning algorithms by all agents in a call auction setting has not been studied before, and these results may be of independent interest.

\subsection{Individual Rationality and Truthfulness Properties of Our Algorithms} \label{sec:iric}
In this section, we discuss the incentive properties of our proposed algorithms. To do so, we note that $(\vs, \vb)$ are the true valuations of sellers and buyers and denote their revealed bids by $(\ts,\tb)$, respectively. To study truthfulness, we assume seller $i$ (buyer $j$) can submit a bid $\ts_i$ ($\tb_j$) that may not be equal to their valuation $\vs_i$ ($\vb_j$), and show that it is approximately never in agent $i$'s (resp. $j$'s) best interest to do so.
We start by noting that our mechanisms are individually rational:
\begin{clm}[Individual Rationality] \label{clm:IR}
 The mechanisms described in Algorithms \ref{alg:mech}, \ref{alg:mech_lottery}, and \ref{alg:mech_combined} are ex-post individually rational. 
\end{clm}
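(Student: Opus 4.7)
The plan is to exhibit a specific bid that guarantees non-negative utility for every agent under every realization, namely the truthful bid $\ts_i = \vs_i$ for sellers and $\tb_j = \vb_j$ for buyers. The key structural observation, which I would make up front, is that in all three mechanisms a seller $i$ only receives a non-zero allocation when the indicator $\mathbf{1}[p \geq \vs_i]$ is satisfied, and a buyer $j$ only receives a non-zero allocation when $\mathbf{1}[p \leq \vb_j]$ is satisfied. Since payments are made at the posted clearing price $p$, this immediately forces $\us_i(\M) = \os_i (p - \vs_i) \geq 0$ and $\ub_j(\M) = \ob_j (\vb_j - p) \geq 0$ pointwise over the randomness of the mechanism.

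Concretely, I would handle the three algorithms in order. For Algorithm~\ref{alg:mech} ($\M_1$), the seller allocation is set to $\os_i = \mathbf{1}[p \ge \vs_i] \cdot \mathrm{Bern}(q^s)$, which is at most $\mathbf{1}[p \ge \vs_i]$; multiplying by $(p - \vs_i)$ yields a non-negative quantity, and symmetrically for buyers. For Algorithm~\ref{alg:mech_lottery} ($\M_2$), the seller allocation is $\os_i = \mathbf{1}[p \ge \vs_i, \tau^s \ge i]$, which again is dominated by $\mathbf{1}[p \ge \vs_i]$, so the same sign argument applies; the analogous statement handles buyers. For Algorithm~\ref{alg:mech_combined} ($\M_3$), IR is immediate from the fact that the meta-algorithm simply executes either $\M_1$ or $\M_2$ on the reported profile, and IR has already been established for both.

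The only subtlety worth spelling out is that truthful reporting is \emph{one} bid that certifies IR per Definition~\ref{def:IR}; the definition does not require IR for every possible bid, only that some IR bid exists. So the proof reduces to verifying that the truthful bid is a valid witness, and the allocation conditions in Algorithms~\ref{alg:mech}--\ref{alg:mech_combined} make this verification entirely mechanical. There is no real obstacle here: the argument is essentially a one-line check, and the main content of the claim is the observation that our allocation rules built in the willingness-to-trade indicator before applying any privacy-driven randomization or tie-breaking, so privacy does not interfere with the standard IR guarantee of double auctions.
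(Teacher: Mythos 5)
Your proposal is correct and follows essentially the same route as the paper's own proof: exhibit the truthful bid as the witness required by Definition~\ref{def:IR}, and observe that every allocation rule includes the willingness-to-trade indicator $\mathbf{1}[p \ge \vs_i]$ (resp. $\mathbf{1}[p \le \vb_j]$), so an allocated agent's utility is non-negative pointwise over the mechanism's randomness. The only difference is that you spell out the check for $\M_2$ and $\M_3$ explicitly, whereas the paper dispatches those with "proofs for the other mechanisms are similar."
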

\begin{proof}
We prove the result for Mechanism \ref{alg:mech}; proofs for the other mechanisms are similar. It suffices to show that there exists a strategy for any seller (resp. any buyer) that guarantees him non-negative utility. For any seller $i$, setting $\ts_i = \vs_i$  is a strategy that ensures that whenever $i$ is allocated a trade (i.e. $\os_i = 1$), it must be that $p \geq \ts_i = \vs_i$; this immediately guarantees $i$ gets non-negative utility---independently of how other agents bid and of the randomness of the mechanism. A similar proof holds for buyers. 
\end{proof}

We also show that differential privacy guarantees approximate truthfulness in the dominant-strategy sense: i.e., it does not allow agents (sellers and buyers) to gain too much profit by submitting a bid different than their true valuation, \emph{no matter what the realized bids of the other agents are}\footnote{Truthfulness is desirable not only because it makes computing equilibrium strategies and predicting equilibrium behavior simpler, but also because knowing the \emph{true} valuations allows the mechanism designer to clear the most shares. 
}. 

\begin{clm}[Approximate Truthfulness]\label{clm:appxic}
	The mechanisms described in Algorithms \ref{alg:mech} and \ref{alg:mech_lottery} satisfy $\gamma$-approximate dominant-strategy truthfulness for $\gamma = (e^{3\epsilon}-1)V$; the mechanism described in Algorithm \ref{alg:mech_combined} satisfies $\gamma$-approximate dominant-strategy truthfulness for $\gamma = (e^{7\epsilon} -1) V$.
\end{clm}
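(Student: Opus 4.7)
The plan is to instantiate the standard ``truthfulness-from-DP'' template, adapted to the fact that joint DP only protects the allocations $\oo_{-i}$ seen by other agents, not agent $i$'s own allocation $\os_i$. Two ingredients drive the argument: (i) the randomness that coordinates agent $i$'s allocation --- the price $p$ together with the selection probability $q^s$ in Mechanism~\ref{alg:mech}, or the lottery threshold $\tau^s$ in Mechanism~\ref{alg:mech_lottery} --- is itself differentially private in agent $i$'s bid; and (ii) conditional on that coordinating randomness, truthful bidding pointwise maximizes the agent's expected utility. Combining (i) and (ii) converts the multiplicative DP guarantee into an additive gap of the form $(e^{c\epsilon}-1)V$ for the appropriate composition constant $c$.

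First I would carry out the pointwise dominance step for a seller $i$ in Mechanism~\ref{alg:mech}. The realized utility is $\us_i = \1[p \geq \ts_i] \cdot B^s_i \cdot (p - \vs_i)$ with $B^s_i \sim \mathrm{Bern}(q^s)$; integrating out $B^s_i$ and fixing $(p, q^s)$, the conditional expected utility equals $\1[p \geq \ts_i] \cdot q^s \cdot (p - \vs_i)$. A short case split on whether $p \geq \vs_i$ shows this is pointwise upper bounded by the truthful counterpart $f(p,q^s) := \1[p \geq \vs_i] \cdot q^s \cdot (p - \vs_i)$: over-bidding ($\ts_i > \vs_i$) only forfeits non-negative profit when $p \in [\vs_i, \ts_i)$, while under-bidding ($\ts_i < \vs_i$) only picks up trades at a loss when $p \in [\ts_i, \vs_i)$. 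Crucially, $f$ is non-negative and bounded by $V$, since $|p - \vs_i| \leq V$ on the support of the indicator and $q^s \in [0,1]$.

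Next I would apply composition and DP. The exponential mechanism (sensitivity $1$) produces $p$ that is $\epsilon$-DP, and each of the Laplace estimates $\widehat{s}$, $\widehat{b}$ is $\epsilon$-DP; basic composition then gives that $(p, \widehat{s}, \widehat{b})$ is $3\epsilon$-DP in agent $i$'s bid, and post-processing gives the same guarantee for $(p, q^s)$. Applying the DP inequality to the non-negative, bounded function $f$ and chaining with pointwise dominance yields
\begin{align*}
\E_{\text{dev}}[\us_i] \;\leq\; \E_{\text{dev}}[f(p,q^s)] \;\leq\; e^{3\epsilon}\,\E_{\text{truth}}[f(p,q^s)] \;=\; e^{3\epsilon}\,\E_{\text{truth}}[\us_i] \;\leq\; \E_{\text{truth}}[\us_i] + (e^{3\epsilon}-1)V,
\end{align*}
where the last step uses $\E_{\text{truth}}[\us_i] \leq V$, itself guaranteed by the individual rationality of truthful bidding (Claim~\ref{clm:IR}) and the boundedness of prices and valuations. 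The buyer case is symmetric (swap roles of $\widehat{s}$ and $\widehat{b}$). For Mechanism~\ref{alg:mech_lottery}, the identical script runs with $\tau^s$ in place of $q^s$, relying on the $3\epsilon$-DP of $(p, \tau^s, \tau^b)$ that underlies Claim~\ref{clm:DP_2}. For Mechanism~\ref{alg:mech_combined}, the relevant coordinating statistics are $7\epsilon$-DP by Claim~\ref{clm:DP_3}, yielding $\gamma = (e^{7\epsilon}-1)V$.

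The main subtlety --- and the reason this is not a one-line corollary of JDP --- is that JDP protects only the outputs of \emph{other} agents, whereas the argument requires DP of the intermediate coordinating statistics $(p, q^s)$ or $(p, \tau^s)$ that, together with $\ts_i$, determine $\os_i$. The fix is to peek inside each mechanism and reassemble the DP budget for these statistics directly from the privacy primitives used to produce them, which is precisely what the JDP proofs already do implicitly. Once that DP claim is in hand, the remaining steps are routine bookkeeping.
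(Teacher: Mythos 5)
Your proof is correct and follows essentially the same route as the paper's: establish that, conditional on the privately computed coordinating statistics ($(p,q^s)$ for Algorithm~\ref{alg:mech}, the thresholds for Algorithm~\ref{alg:mech_lottery}), truthful bidding pointwise dominates any deviation, and then invoke the $3\epsilon$- (resp.\ $7\epsilon$-) differential privacy of those statistics to bound the change in expected utility. The only cosmetic difference is in converting multiplicative DP to an additive gap: you use $\E_{\mathrm{dev}}[f]\le e^{3\epsilon}\E_{\mathrm{truth}}[f]$ for the non-negative dominating utility together with $\E_{\mathrm{truth}}[f]\le V$, whereas the paper routes through a separate lemma stating $\left|\E[Y(D)]-\E[Y(D')]\right|\le (e^{\epsilon}-1)K$ for bounded $Y$; both give $(e^{3\epsilon}-1)V$.
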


We defer the full proof to Appendix~\ref{app:IC-IR}. In the proof, we first observe that since the market price is chosen subject to differential privacy, individual agents cannot significantly change it by misreporting their valuations. However, this is not enough to argue truthfulness, as under \emph{joint} differential privacy, an agent's allocation may heavily depend on his report. To complete the proof, we show that the function by which the mechanism determines transactors is a best-response for an agent with the reported valuation given the output of the differentially private mechanism. 

Note that, in general, call auctions are \emph{not} dominant-strategy truthful, since even small bidders may impact the price selected by a mechanism acting on reported bids. This is a consequence of the fact that in the simple call auction (as well as in continuous order book mechanisms) the optimal price is, in general, \emph{not} stable \cite{michael}. 
Importantly, we note that the truthfulness guarantees are a function of $\epsilon$; as $\opt$ grows larger, $\epsilon$ can be made smaller with less and less relative cost. Consequently, the truthfulness guarantee can be made stronger for a given level of privacy as the number of optimal trades cleared increases. 

We highlight that because our strategic framework assumes each bidder controls a single share, our guarantees are at the \emph{per-share} level. Our privacy guarantees generalize, however, to the case where bidders control at most $k$ shares by expanding $\epsilon$ by a factor of $k$.\footnote{An $\epsilon$-differentially private mechanism with respect to a single share is $k \epsilon$-differentially private with respect to the data of a bidder who controls $k$ shares; intuitively, this is because an agent that misreports his valuation over $k$ shares creates a dataset that is a $k$-neighbor of the dataset in which they had bid truthfully. Such a bidder can affect the distribution of prices by an amount of at most $e^{k \epsilon}$, and so their own expected utility by  $(e^{k \epsilon} - 1)V$.} Our truthfulness guarantees also follow by expanding $\epsilon$ by a factor of $k$.

\subsection{Learning in Repeated Call Auctions}

In this section, we consider a \emph{repeated} 
call auction. Agents are initially unaware of each other's valuations and behavior and run simple learning algorithms to learn how to bid. In each time step $t$, each seller $i$ (respectively buyer $j$) reports a bid $\ts_{i,t}$ (resp. $\tb_{j,t})$, which may differ from his valuation, to the mechanism. Given this input $(\ts_t,\tb_t)$, the mechanism computes and publicly releases a price $p_t$ and assigns an allocation $\oo_{i,t}$ to each seller $i$ (respectively $\oo_{j,t}$ to each buyer $j$). We will consider two versions of this mechanism, one that is non-private and is inspired by standard call auctions, in Section~\ref{sec:learning_noprivacy}, and one that is private and is based on Mechanism~\ref{alg:mech}, in Section~\ref{sec:learning_privacy}. The agents then update their bidding strategies based on the quantities outputted by the mechanism, via a simple no-regret algorithm (Exponential Weights). 

We highlight that our agents are \emph{naive} in that they do not compute a counterfactual price $p_t$ and allocation vector $\oo_t$ given alternative bids they could have made. Instead, they only update their bidding strategies with respect to how much better off they could have been by bidding differently, \emph{assuming they had no effect on the price}. The motivation for this is two-fold: first, counterfactual reasoning would require the agents to know the bids of other agents, which are not released by the mechanism (and typically not available in many real-life call auctions). Second, when agents are small relative to the total market, they may believe that their actions \emph{do not} greatly affect these quantities. We note that differential privacy makes this belief into a \emph{property} of our mechanism rather than a naive assumption. Thus, small bidders using naive updates will have a \emph{real} regret guarantee when interacting with a differentially private call auction.

\subsubsection{Learning in the Absence of Privacy}\label{sec:learning_noprivacy}
In this section, we focus on learning dynamics when the mechanism runs a standard call auction, absent privacy; this non-private setting will serve as a natural point of comparison for dynamics with respect to our private mechanism. At every time step $t$, agents submit bids that may differ from their valuations. In response, the mechanism computes a price and allocation, with the goal of maximizing traded shares among willing participants. We denote the agents' reports as $\ts_{i,t}$ and $\tb_{j,t}$ for seller $i$ and buyer $j$, respectively, at time $t$. The mechanism chooses a price $p_t$ to maximize
\[
\Pi \left(p,\ts_t,\tb_t \right) \triangleq \min\left\{\sum_{i \in \S} \1 \left[\ts_{i,t} \leq p \right],\sum_{j \in \B} \1 \left[\tb_{j,t} \geq p \right]\right\},
\]
which is the number of shares the mechanism will trade at price $p$, assuming that sellers will only agree to trade when the price is above their reported bid and buyers when the price is below their reported bid. To compute the allocation $\oo_t = (\os_t,\ob_t)$, the mechanism must choose among these sellers and buyers it believes (based on the reports) are willing to trade at the chosen price $p_t$. When there are an equal number of sellers and buyers willing to trade at price $p_t$, the mechanism allocates a trade to all of them; otherwise, the mechanism randomly selects a subset of $\Pi (p,\ts_t,\tb_t)$ agents from the side with excess number of willing participants. Formally, the mechanism computes $q_t^b = \Pi(p_t,\ts_t,\tb_t)/ \sum_{j \in \B} \1[\tb_{j,t} \geq p_t]$ and $q^s_t = \Pi(p_t,\ts_t,\tb_t) / \sum_{i \in \S} \1 [\ts_{i,t} \leq p_t]$; these probabilities will be less than $1$ on the excess side of the market and exactly $1$ on the short side. We assume the mechanism publicly releases $p_t$, $q_t^s$, and $q_t^b$ to all agents in the market, and communicates to each seller $i$ (resp. buyer $j$) his own allocation $\os_{i,t}$ (resp. $\ob_{j,t}$).

\paragraph{Agents learn via Exponential Weights:}

A natural no-regret (\emph{regret} here is the classic notion of performance in online learning) algorithm for updating bidding strategies is the Exponential Weights mechanism. We describe the classical \emph{Exponential Weights Update} rule for buyers (buyer $j$) in Algorithm~\ref{alg:mw}, and note that this update is defined symmetrically for the sellers.

\begin{algorithm}
    \SetAlgoNoLine
	\KwIn{Learning rate $\eta$.}
	Set $\wb_{j,1}(k)\gets \frac{1}{\vb_j } $ for $k = 1,\ldots, \vb_j$ 
	\Comment{Initialize uniform weights.}
	
	\For{$t \in 1...T$}
	{
        $\tb_{j,t} \sim \wb_{j,t}$
        \Comment{Draw bid from distribution}
        
		$\umodb_{j,t}(k) \gets q_t^b(\vb_j-p_t) \mathbf{1}[k \geq p_t]$ for $k = 1,\ldots, \vb_j$ 
		\Comment{Observe payoff of each bid $k$}
		
	    $\wb_{j,t+1}(k) \gets \frac{\exp(\eta \umodb_{j,t}(k))}{\sum_{j} \wb_{j,t}(j) \exp(\eta \umodb_{j,t}(j))} \cdot  \wb_{j,t}(k)$ for $k = 1,\ldots, \vb_j$ 
	    \Comment{Update the weights.}
	}
	\caption{Exponential Weights}
	\label{alg:mw}
\end{algorithm}

Informally, the updates work as follows. Initially, we assume every seller bids uniformly above their value and every buyer bids uniformly below their value.\footnote{A buyer $j$ cannot improve his utility by bidding over his valuation (as increasing his bid cannot decrease $p_t$ nor increase his probability of allocation), and risks obtaining negative utility by doing so, if $\vb_{j,t} < p_t \leq \tb_{j,t}$. Hence, bidding above his valuation is a dominated strategy for the buyer. Similarly, bidding under his value is a dominated strategy for a seller. The assumption of this prior knowledge can be relaxed at the price of slower convergence.} 
Then, in each round  $t$, for every possible $k \in P$, agents compute what their expected payoff would have been had they reported $k$ as their valuation, given the \emph{current} price $p_t$ and the allocation probabilities $q_t^s$ and $q_t^b$. They use these expected payoffs to update their distribution of bids, in a way that puts exponentially more weight on bids with higher expected utilities; the speed at which these updates happen is controlled by the learning rate parameter $\eta$, taken here to be constant. For appropriate choices of learning rate $\eta$, this algorithm is known to be no-regret.

One may hope these dynamics converge to clearing $\opt$ shares with probability going to $1$, where $\opt$ is defined as in Equation \ref{eq:opt}. However, this may not be the case when agents update their weights according to Algorithm~\ref{alg:mw}. This stems from the fact that agents are indifferent between trading at their valuation, and not trading at all, as both net a payoff of zero. This is reflected in the exponential weight update, and buyers learn to put a significant amount of weight on bids that are strictly less than their valuation (as trades for those bids are strictly profitable). When clearing $\opt$ trades requires many agents to bid exactly at their valuation, the number of shares cleared is bounded away from the benchmark. We show instead that the dynamics will clear the following benchmark, which only considers trades that are strictly profitable for both sides of the market:
\begin{definition}[Optimal Jointly Profitable Trades]\label{def:opt'}
	 We let $\opt'$ be the maximum number of trades achievable for a given $(\vs,\vb)$, such that all trading buyers and sellers get strictly positive utility. Formally, 
	 \[
	 \opt' = \max_{p} \min \left\{\sum_{i \in \S} \1 \left[\vs_i < p\right], \sum_{j \in \B} \1 \left[\vb_j > p\right] \right\}.
	 \]
	 We call this benchmark the ``Optimal Jointly Profitable Trades with Uniform Pricing'' benchmark.
\end{definition}
The statement showing that the mechanism will converge in probability to clearing at least $\opt'$ shares is formalized below:

\begin{thm}[Convergence to (at least) $\opt'$]\label{thm:limit_1}
Suppose buyers and sellers update their bid distributions according to Algorithm \ref{alg:mw} (with any $\eta > 0$). Further, at any time $t$, suppose $p_t$ is chosen uniformly at random among the set of optimal prices at time $t$. Then,
the number of shares cleared at time $t$ satisfies 
\begin{align*}
	\lim_{t \to \infty} \Pr \left[\Pi \left(p_t,\ts_t,\tb_t \right) \ge \opt' \right] = 1.
 \end{align*}

 \end{thm}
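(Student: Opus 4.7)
The natural approach is to combine the no-regret character of Exponential Weights with a strict-dominance argument for the ``core'' traders, i.e., those whose valuations make them eligible to trade in an $\opt'$-optimal outcome. Fix any price $p^*$ achieving $\opt'$, and call a buyer $j$ (resp.\ seller $i$) \emph{core} if $\vb_j > p^*$ (resp.\ $\vs_i < p^*$). I would first observe that the counterfactual payoff used in Algorithm~\ref{alg:mw} treats $p_t$ and $q_t^b$ as exogenous to the bid, so it is an honest no-regret update from each agent's point of view; consequently, Exponential Weights guarantees that in the long run each agent's cumulative counterfactual payoff is close to the best fixed bid in hindsight.

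Second, I would identify the correct benchmark action for each core agent: for a core buyer $j$, the action $k = \vb_j$ ensures $\mathbf{1}[k \geq p_t]$ on every round with $p_t \leq \vb_j$ and earns the strictly positive payoff $q_t^b(\vb_j - p_t)$; any bid $k < p_t$ earns zero. Symmetrically, $k = \vs_i$ dominates every $k > p_t$ for a core seller $i$. Therefore, on the set of rounds where the realized price lies strictly between $\vs_i$ and $\vb_j$ for all core pairs, the counterfactual-payoff gap between truthful and any non-participating bid is uniformly lower-bounded by a positive constant. Using the standard property that Exponential Weights places exponentially more weight on bids with larger cumulative payoffs, a potential argument shows that whenever the gap grows without bound, the mass each core agent places on ``trade-missing'' bids tends to zero.

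Third, I would close the feedback loop: once the core buyers place near-unit mass on bids $\geq \vb_j$ and the core sellers on bids $\leq \vs_i$, the mechanism's price maximization together with the uniform tie-breaking rule over optimal prices forces $p_t$ to land in the interval $(\max_{\text{core }i}\vs_i,\ \min_{\text{core }j}\vb_j)$ with probability approaching one, at which price $\Pi(p_t,\ts_t,\tb_t) \geq \opt'$. A union bound over the finitely many agents and a Hoeffding-type concentration over the independent Bernoulli draws used to sample bids turn the statement about bid distributions into the pointwise-in-$t$ convergence in probability claimed by the theorem.

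The main obstacle is the game-theoretic coupling between agents' updates: the prices $p_t$ and the probabilities $q_t^s, q_t^b$ depend on the realized bid vector, which depends on histories that depend in turn on past prices. A pure no-regret bound gives only \emph{time-averaged} guarantees and does not by itself rule out oscillatory play that averages to $\opt'$ without ever realizing it. To rule this out, I would use a Lyapunov/potential argument on the total ``non-truthful mass'' held by core agents, showing that this quantity is a supermartingale once the system has entered a regime where the counterfactual gap is positive, and combining this with the smoothness of the Exponential Weights update (no jumps, bid distributions evolve continuously in $\eta$) to guarantee the regime, once entered, is not left. This is the delicate step; the rest of the argument is largely bookkeeping once one is willing to take the no-regret guarantee of Algorithm~\ref{alg:mw} as a black box.
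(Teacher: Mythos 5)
Your high-level plan (truthful bids strictly dominate non-participating bids for ``core'' agents, so their mass should drift toward trading bids, after which the price is forced into the profitable interval) matches the spirit of the paper's argument, but the route you propose has a genuine gap at exactly the point you flag as delicate, and the fix is not the one you sketch. The circularity is this: your dominance gap is only positive on rounds where $p_t$ already lies strictly between the core sellers' and core buyers' values, but whether $p_t$ lands there depends on the realized bids, which depend on whether that gap has already accumulated. Neither the no-regret guarantee (which, as you concede, is only time-averaged and is stated against the \emph{realized} price sequence) nor an unconstructed supermartingale on ``non-truthful mass'' breaks this loop: you never establish that the favorable regime is entered in the first place, and the claimed supermartingale property would still require a uniformly signed drift that you have not exhibited.

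The paper breaks the circularity with two structural facts about this specific update that are stronger than anything the no-regret property gives you. First, for every buyer the weight on the true valuation satisfies $1/V \le w^b_{j,t}(\vb_j) \le 1/2$ for \emph{all} $t$ (the upper bound holds because $w^b_{j,t}(\vb_j)$ and $w^b_{j,t}(\vb_j - 1)$ are always multiplied by identical factors and hence remain equal forever). This yields a ``good event'' --- every high-value buyer and low-value seller bids its value while every other agent bids strictly on the non-trading side --- whose probability is bounded below by a constant $\gamma > 0$ \emph{uniformly in $t$ and in the history}, and on that event the optimal price is forced into $(\nu^s, \nu^b)$; hence good rounds occur infinitely often with no bootstrapping needed. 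Second, in place of a supermartingale, the cumulative mass each buyer places on bids $\ge p$ is \emph{deterministically non-decreasing} in $t$ for every $p$, and it grows by a multiplicative factor $C(\varepsilon) > 1$ on each good round (using $q^b_t \ge 1/n^b$ and $\vb_j - p_t \ge 1$). Pigeonholing a single price $p^\star \in (\nu^s,\nu^b)$ that recurs infinitely often then gives $\Pr[\tb_{j,t} \ge p^\star] \to 1$ for the $\opt'$ relevant buyers, and symmetrically for sellers. If you want to salvage your outline, the Lyapunov function you are looking for is exactly this monotone mass, and the missing ingredient is the history-uniform lower bound on the per-round probability of the good configuration.
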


We also provide a variant of the Exponential Weights algorithm, that we will show converges to $\opt$ shares cleared. This variant is described in Algorithm~\ref{alg:mw_variant}. 

\begin{algorithm}
    \SetAlgoNoLine
	\KwIn{Learning rate $\eta$, ``fake'' utility $\xi$.}
	$\wb_{j,1}(k)\gets 1/\vb_j $ for $k = 1,\ldots, \vb_j$ 
	\Comment{Initialize with uniform weights.}
	
	\For{$t = 1, \ldots$, T}
	{
		$\tb_{j,t} \sim \wb_{j,t}$ 
		\Comment{Draw from the current weights.}
		
		\eIf{$\vb_j \neq p_t$}
		{
			$\umodb_{j,t}(k) \gets q_t^b(\vb_j-p_t) \mathbf{1}[k \geq p_t]$ for $k = 1,\ldots, \vb_j$ 
			\Comment{expected utility for bid $k$.}
		}
		{
			$\umodb_{j,t}(k) \gets q_t^b \xi \mathbf{1}[k = \vb_j]$  for $k = 1,\ldots, \vb_j$
			\Comment{Agent pretends getting utility $\xi$ from trading.}
		}		
		$\wb_{j,t+1}(k) \gets \frac{\exp(\eta \umodb_{j,t}(k))}{\sum_{l} \wb_{j,t}(l) \exp(\eta \umodb_{j,t}(l))} \cdot  \wb_{j,t}(k)$ for $k = 1,\ldots, \vb_j$
		\Comment{Update the weights.}		
	}
	\caption{Social Exponential Weights}
	\label{alg:mw_variant}
\end{algorithm}

Algorithm \ref{alg:mw_variant} is a modification of the classic Exponential Weights algorithm. In particular, when the price is equal to agent's valuation, the algorithm assigns a nonzero utility $q_t^b \xi$ to reporting the agent's valuation, for $\xi$ arbitrarily small; this can be seen as agents updating their weights as if they strictly preferred trading to not trading, even when their trade would make no profit. In other words, it implements a preference to break ties (in utility) in favor of trading over not trading. We call this ``Social'' Exponential Weights because incorporating this modified utility allows the system as a whole to reach a better social outcome (one with more shares traded) than otherwise. Crucially, despite this modification, Algorithm \ref{alg:mw_variant} remains no-regret for a fixed horizon $T$ with appropriate choices of learning rate, $\eta$, and ``fake'' utility, $\xi$:

\begin{lemma}[No-regret]\label{cor:noregret_variant}
Algorithm \ref{alg:mw_variant} is no-regret ($\bigo \left( \sqrt{T} \right)$ cumulative regret) for $\eta,\xi = \bigo\left(1/\sqrt{T}\right)$.
\end{lemma}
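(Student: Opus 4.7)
The plan is to reduce Lemma~\ref{cor:noregret_variant} to the textbook regret guarantee for Exponential Weights plus a short perturbation argument that compares the surrogate utility used in the update rule to the true (naive) utility. First I would observe that Algorithm~\ref{alg:mw_variant} is literally classical Exponential Weights applied to the surrogate reward sequence $\{\umodb_{j,t}\}_t$, which agrees with the true naive-utility vector $u_{j,t}(k) := q_t^b(\vb_j - p_t)\mathbf{1}[k \geq p_t]$ on every round \emph{except} those in which $\vb_j = p_t$. Each $\umodb_{j,t}(k)$ lies in $[0,V]$ since $q_t^b \leq 1$, $|\vb_j - p_t| \leq V$, and one may assume $\xi \leq V$ without loss of generality. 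The standard EW regret bound, applied over the at most $V$ admissible bids, therefore gives
\[
\max_k \sum_{t=1}^{T} \umodb_{j,t}(k) \;-\; \E\Bigl[\sum_{t=1}^{T} \umodb_{j,t}(\tb_{j,t})\Bigr] \;\leq\; \bigo\bigl(V\sqrt{T \ln V}\bigr),
\]
when $\eta = \Theta(1/(V\sqrt{T}))$, which is $\bigo(1/\sqrt{T})$; treating $V$ as a constant (as the lemma does) the right-hand side is $\bigo(\sqrt{T})$.

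Second, I would compare $\umodb_{j,t}$ to $u_{j,t}$ pointwise. On rounds with $\vb_j \neq p_t$ they agree exactly. On rounds with $\vb_j = p_t$ the true utility satisfies $u_{j,t}(k) \equiv 0$ since the factor $\vb_j - p_t$ vanishes, while $\umodb_{j,t}(k) \in [0, \xi]$, so $|\umodb_{j,t}(k) - u_{j,t}(k)| \leq \xi$ uniformly in $k$ and $t$. A triangle-inequality step then yields
\[
\max_k \sum_t u_{j,t}(k) - \E\Bigl[\sum_t u_{j,t}(\tb_{j,t})\Bigr] \;\leq\; \max_k \sum_t \umodb_{j,t}(k) - \E\Bigl[\sum_t \umodb_{j,t}(\tb_{j,t})\Bigr] + 2T\xi.
\]
Choosing $\xi = \bigo(1/\sqrt{T})$ keeps the extra $2T\xi$ term at $\bigo(\sqrt{T})$, giving the stated cumulative regret bound. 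The argument for sellers is symmetric.

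I do not anticipate any real obstacle here beyond careful bookkeeping. The two points most worth being explicit about are: (i) the regret in the lemma is measured against a best fixed bid under the \emph{true} utility $u_{j,t}$ rather than against the surrogate $\umodb_{j,t}$ that the algorithm actually updates with, which is precisely what the perturbation step handles, and (ii) the constants hidden in $\bigo(\cdot)$ absorb the dependence on the valuation range $V$, consistent with the paper's convention of treating $V$ as a problem-independent constant. The delicate but routine part is verifying that the classical Exponential Weights analysis still applies verbatim when the reward vectors $\umodb_{j,t}$ depend on the (possibly adaptive) history through $p_t$ and $q_t^b$, which follows since EW's regret guarantee holds against adaptive adversaries as long as the rewards are revealed to the learner at each step (as they are here).
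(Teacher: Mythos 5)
Your proposal is correct and matches the paper's proof in all essentials: both treat Algorithm~\ref{alg:mw_variant} as Exponential Weights run on the surrogate rewards $\umodb_{j,t}$, apply the standard EW regret bound to that sequence, and then transfer the guarantee to the true utilities via the pointwise bound $u_{j,t}\le\umodb_{j,t}\le u_{j,t}+\xi$, paying an additive $\bigo(\xi T)$. The only cosmetic differences are that the paper re-derives the EW bound from first principles rather than citing it, and its one-sided sandwich yields $\xi T$ where your triangle inequality yields $2\xi T$; neither affects the conclusion.
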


The proof is almost identical to that of the no-regret guarantees of traditional exponential weights, and is deferred to Appendix~\ref{app:noregret_variant}. We highlight that we define regret with respect to the single best action in hindsight given the \emph{fixed} sequence of prices observed; that is, we do not consider the notion of \emph{Stackelberg} regret, which is calculated with respect to the best fixed action \emph{given} that the mechanism picks a sequence of prices in response to the selected actions (see, e.g. \cite{dong2018strategic}). If agents are small enough that their actions do not greatly affect the mechanism's responses, then the standard notion of regret and Stackelberg regret do not greatly differ; if, moreover, a mechanism is differentially private, then (for small enough agents) these notions of regret coincide, because differential privacy ensures that agents placing small orders have little impact on the price. 

Under Algorithm~\ref{alg:mw_variant}, the number of shares cleared converges to $\opt$ with probability that tends to $1$ as $t$ grows large. We make this statement formally below:

\begin{thm}[Convergence to $\opt$]\label{thm:limit}
Suppose buyers and sellers that update their bidding strategies according to Algorithm~\ref{alg:mw_variant} (with any $\eta,\xi > 0$). Further, suppose $p_t$ is chosen uniformly at random among the set of optimal prices at time $t$. Then, the number of shares cleared at time $t$ satisfies 
\begin{align*}
	\lim_{t \to \infty} \Pr \left[\Pi \left(p_t,\ts_t,\tb_t \right) = \opt \right] = 1.
 \end{align*}
 \end{thm}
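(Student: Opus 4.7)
The plan is to combine the no-regret guarantee of Algorithm~\ref{alg:mw_variant} (Lemma~\ref{cor:noregret_variant}) with the observation that the ``fake'' utility $\xi > 0$ strictly rewards bidding at one's true valuation whenever the chosen price equals that valuation, forcing each agent's bid distribution to concentrate on its true valuation in the long run. Once this happens, the mechanism's optimal price coincides with an optimal price under the true valuations, and the cleared volume equals $\opt$.

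First, I would observe that for every buyer $j$ (with a symmetric statement for sellers) and every alternative bid $k \in \{1,\dots,\vb_j - 1\}$, the per-round gap $\umodb_{j,t}(\vb_j) - \umodb_{j,t}(k)$ is always nonnegative: it equals $q_t^b(\vb_j - p_t)$ when $p_t \in (k, \vb_j)$, equals $q_t^b \xi$ when $p_t = \vb_j$, and is zero otherwise. Consequently, $\vb_j$ is weakly dominant in every round, hence a best fixed bid in hindsight. By the no-regret property (Lemma~\ref{cor:noregret_variant}), $\sum_{t=1}^T \bigl( \umodb_{j,t}(\vb_j) - \E_{k \sim \wb_{j,t}}[\umodb_{j,t}(k)] \bigr) = o(T)$; combined with the above, the time-averaged expected gap tends to zero, so any bid $k \neq \vb_j$ on which the weights $\wb_{j,t}(k)$ do not vanish in probability can persist only if the price $p_t$ avoids $(k, \vb_j]$ on all but a vanishing fraction of rounds.

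Next, I would argue that the mechanism's uniformly-random tie-breaking among optimal prices forces $p_t = \vb_j$ for each marginal buyer $j$ (and $p_t = \vs_i$ for each marginal seller $i$) with frequency bounded away from zero in the limit. Combined with the previous step, this implies $\wb_{j,t}(\vb_j) \to 1$ in probability for each marginal buyer and, symmetrically, $\ws_{i,t}(\vs_i) \to 1$ for each marginal seller. Once all marginal agents report truthfully with probability tending to $1$, the reported market on the relevant interval coincides with the true market, every optimal reported price clears $\opt$ shares, and the conclusion $\lim_{t \to \infty} \Pr[\Pi(p_t,\ts_t,\tb_t) = \opt] = 1$ follows.

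The hard part will be the previous step: before the weights have converged, the set of optimal reported prices need not contain the true marginal valuations, so the tie-breaking $\xi$-signal that drives $\wb_{j,t}(\vb_j)$ to $1$ is not immediately available, and the convergence arguments for different agents are coupled through the mechanism's price-selection rule. The cleanest resolution I can see is a contradiction argument run simultaneously over all agents: assume $\Pr[\Pi(p_t,\ts_t,\tb_t) < \opt]$ does not tend to zero, so on a non-vanishing fraction of rounds some marginal agent's report differs from their valuation in a way that depresses the cleared volume; exhibit an $\Omega(T \xi)$ accumulated per-agent gap driven by the $\xi$-term; and conclude that this violates the no-regret bound of Lemma~\ref{cor:noregret_variant}.
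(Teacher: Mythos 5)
There is a genuine gap, and it sits exactly where you flagged the difficulty. Your whole plan is powered by the no-regret guarantee of Lemma~\ref{cor:noregret_variant}, but that lemma only gives sublinear regret when $\eta,\xi = \bigo(1/\sqrt{T})$, whereas the theorem is stated for \emph{arbitrary fixed} $\eta,\xi>0$ and an infinite horizon. For fixed parameters the bound in Claim~\ref{lem:mw_var_noregret} is $\xi T + \eta V^2 T + \ln V/\eta = \Theta(T)$, so your inference that ``the time-averaged expected gap tends to zero'' does not follow. Worse, your proposed contradiction --- exhibiting an $\Omega(T\xi)$ accumulated gap --- cannot contradict a regret bound that itself already contains an additive $\xi T$ term; and in the regime $\xi = \bigo(1/\sqrt{T})$ where the lemma does give $\bigo(\sqrt{T})$ regret, an $\Omega(T\xi)=\Omega(\sqrt{T})$ gap is again consistent with it. So the engine you want to run the argument on produces no contradiction in either regime. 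A secondary (and standard) issue is that even a genuine vanishing-average-regret guarantee controls payoffs, not the distribution of play, so by itself it cannot force $\wb_{j,t}(\vb_j)\to 1$; some argument about the dynamics themselves is unavoidable.

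The paper's proof does not use no-regret at all. The two ingredients you are missing are: (i) a pointwise lower bound on truthful bidding --- by the argument of Claim~\ref{clm: bound_weights}, the weight $\wb_{j,t}(\vb_j)$ is non-decreasing in $t$, hence always at least its initial value $1/V$; therefore in \emph{every} round, with probability at least $(1/V)^{\nb+\ns}$ all agents bid their true valuations simultaneously (Lemma~\ref{lem:goodevent_variant}), in which case the mechanism's chosen price is a true optimal price clearing $\opt$ shares. This constant-probability ``good event'' occurs infinitely often, and by pigeonhole some fixed optimal price $p^\star$ recurs infinitely often. (ii) A multiplicative-growth lemma (Lemma~\ref{lem:mass_variant}): each time $p^\star$ is picked with a trade possible, every buyer with $\vb_j \ge p^\star$ multiplies the mass it places on bids $\ge p^\star$ by a factor $C(\varepsilon)>1$ (whenever that mass is below $1-\varepsilon$); the case $\vb_j = p^\star$ is exactly where the fake utility $\xi$ is needed to make the factor strictly exceed $1$. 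Since the relevant mass is also monotone non-decreasing across all rounds (Lemma~\ref{lem: non-decreasing_variant}), infinitely many such multiplicative boosts drive it to $1$ (Corollary~\ref{cor:limto1_variant}), for at least $\opt$ buyers and $\opt$ sellers, which yields the claim. Your intuition that $\xi$ breaks the indifference at $p_t=\vb_j$ is the right one, but it must be cashed out through this direct analysis of the weight dynamics rather than through the regret bound.
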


To prove this result, we show that with a small, constant probability (in $t$) in any given round, all agents bid their valuations. In such cases, the mechanism picks an optimal price, and at least $OPT$ buyers (resp. sellers) increase their probability of bidding above (resp. below) this price. When the number of rounds goes to infinity, this event is repeated infinitely often for some optimal price $p^\star$, and $OPT$ buyers (resp. sellers) bid above (resp. below) $p^\star$ with probability that tends to $1$. The full proof is given in Appendix~\ref{app:dynamics_notindifferent}. A similar argument is used to prove Theorem~\ref{thm:limit_1}, in Appendix~\ref{app:dynamics_indifferent}.

\subsubsection{Learning in Repeated Call Auctions with Differential Privacy}\label{sec:learning_privacy}

We now consider the same dynamic setting as before, with the difference that the centralized designer now computes the price $p_t$ and the allocation $\oo_t$ at time $t$ in a joint-differentially private fashion. For simplicity of exposition, we pick the private mechanism used by the designer to be Mechanism~\ref{alg:mech}, which picks a price via the exponential mechanism and picks agents to allocate from the smaller side of the market via binomial coin flips. We show that when agents play according to the exponential weights (resp. Social EW) algorithm, the dynamics converge to clearing at least $\opt$ (resp. $\opt'$) shares minus inaccuracies introduced by privacy.

\begin{thm}\label{thm:limit_private} Suppose buyers and sellers update their bidding strategies according to Algorithm~\ref{alg:mw_variant} (with any $\eta,\xi > 0$). Further, suppose the market allocation mechanism is Algorithm \ref{alg:mech}. There exists an integer $N(\alpha)$ such that for any $t \ge N ( \alpha )$, the number of shares cleared at time $t$ satisfies
\begin{align*}
	 \Pr \left[\Pi \left(p_t,\ts_t,\tb_t \right) \ge \opt - \frac{2 \ln (V/\alpha)}{\epsilon} - \frac{2 \ln \left( 1/\alpha \right)}{\epsilon}  - \sqrt{6  \left(\opt + \frac{\ln (1/\alpha)}{\epsilon} \right) \ln \left(1/\alpha\right)} \right] \ge 1 - 9\alpha.
\end{align*}
where this probability is taken with respect to the randomness of both Algorithms \ref{alg:mech} and \ref{alg:mw_variant}. 
\end{thm}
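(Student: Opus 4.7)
The plan is to decompose the proof into two parts: (i) show that under Social Exponential Weights combined with Mechanism \ref{alg:mech}, the reported bids $(\ts_t,\tb_t)$ converge to the truthful profile $(\vs,\vb)$ in probability, and (ii) on the high-probability event that bids at round $t$ coincide with the truthful profile, directly invoke Theorem \ref{thm:thm_1}. Since the bound stated in Theorem \ref{thm:limit_private} is exactly the payoff bound of Theorem \ref{thm:thm_1} with an additional $\alpha$ slack, the resulting $1-9\alpha$ will come from a single union bound at the end.

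For part (i), I would adapt the argument sketched for Theorem \ref{thm:limit}. The key observation is that if every agent happens to bid truthfully at some round (an event of strictly positive probability given the initialization and update rule of Algorithm \ref{alg:mw_variant}), then two things hold simultaneously. First, the exponential mechanism in Algorithm \ref{alg:mech} selects an optimal price $p^\star$ with at least a constant probability, because $\Pi(p^\star,\vs,\vb)=\opt$ is maximal among the $V$ candidate prices. Second, conditional on $p_t=p^\star$, the privately-noised allocation probabilities $q_t^s,q_t^b$ are bounded away from zero on the short side of the market with high probability, via standard concentration of the Laplace noise used in the estimates $\widehat s,\widehat b$. Together these imply that for at least $\opt$ sellers and $\opt$ buyers the expected Social EW update strictly increases the weight placed on their truthful bid: for agents whose valuation strictly beats $p^\star$, the gradient $(\vb_j - p^\star)q_t^b$ (or its seller analog) dominates other bids, and for marginal agents whose valuation equals $p^\star$ the fake-utility term $\xi\cdot q_t^b$ plays the same tie-breaking role as in the proof of Theorem \ref{thm:limit}. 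A Borel--Cantelli style argument --- identical in structure to the one used there --- then gives an integer $N(\alpha)$ such that for all $t\ge N(\alpha)$, the event
\[
E_t=\bigl\{\ts_{i,t}=\vs_i\ \forall i\in\S,\ \tb_{j,t}=\vb_j\ \forall j\in\B\bigr\}
\]
occurs with probability at least $1-\alpha$.

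For part (ii), on $E_t$ we have $\opt(\ts_t,\tb_t)=\opt$, and Mechanism \ref{alg:mech} is being run on the truthful input. Applying Theorem \ref{thm:thm_1} (which requires $\opt\ge 5\ln(V/\alpha)/\epsilon$, implicit here as elsewhere in the paper) gives, with probability at least $1-8\alpha$ over the internal randomness of the mechanism at round $t$,
\[
\Pi(p_t,\ts_t,\tb_t)\ \ge\ \opt-\frac{2\ln(V/\alpha)}{\epsilon}-\frac{2\ln(1/\alpha)}{\epsilon}-\sqrt{6\Bigl(\opt+\tfrac{\ln(1/\alpha)}{\epsilon}\Bigr)\ln(1/\alpha)}.
\]
Combining via a union bound over the failure of $E_t$ (probability at most $\alpha$) and the failure of Theorem \ref{thm:thm_1} (probability at most $8\alpha$) yields the claimed $1-9\alpha$ lower bound.

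The main obstacle will be making the convergence argument in part (i) fully rigorous in the presence of both the exponential mechanism's randomness over prices and the Laplace/binomial noise in the allocation step. In particular, one must verify that the random fluctuations of $q_t^s,q_t^b$ (driven by noise of scale $1/\epsilon$) do not flip the sign of the expected Social EW weight update for agents whose valuation lies exactly at $p^\star$; the $\xi$ term of Algorithm \ref{alg:mw_variant} exists precisely to dominate this effect, but care is required to ensure $\xi$ can be chosen small enough to preserve the no-regret guarantee of Lemma \ref{cor:noregret_variant} while still large enough to drive convergence.
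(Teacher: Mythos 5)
Your overall architecture matches the paper's: a constant-probability ``good event'' at each round drives an infinitely-recurring favorable configuration, agents' weights then concentrate, and the final bound is Theorem~\ref{thm:thm_1} applied to the round-$t$ bids plus a union bound giving $1-9\alpha$. But part (i) as you state it is false, and this is a genuine gap rather than a technicality. The event $E_t$ that \emph{every} agent reports exactly their valuation does \emph{not} occur with probability tending to $1$. The Social EW update multiplies the weight of every bid $k \ge p_t$ (for a buyer) by the identical factor $\exp(\eta q_t^b(\vb_j - p_t))$, so it can never separate the truthful bid $\vb_j$ from any other bid on the profitable side of the realized prices; in particular $w^b_{j,t}(\vb_j) = w^b_{j,t}(\vb_j-1)$ for all $t$ whenever $p_t < \vb_j$ (and neither is updated when $p_t > \vb_j$), which is exactly the content of Claim~\ref{clm: bound_weights}: $w^b_{j,t}(\vb_j) \le 1/2$ forever. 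Hence $\Pr[E_t] \le 2^{-\Omega(n)}$ for all $t$, and your claimed $\Pr[E_t] \ge 1-\alpha$ cannot hold. The truthful profile is only used as the \emph{per-round constant-probability} trigger (Lemma~\ref{lem:goodevent_variant_private}), not as the limiting configuration.

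The fix, which is what the paper does, is to weaken the target event: fix (by pigeonhole over the infinitely-recurring good rounds) a single optimal price $p^\star$ that is chosen infinitely often with $q_t^b \ge 1/n^b$ and $q_t^s \ge 1/n^s$, and show that every buyer with $\vb_j \ge p^\star$ eventually bids $\ge p^\star$ w.p.\ $\to 1$ (and symmetrically for sellers). This suffices because on that event $\opt(\ts_t,\tb_t) \ge \opt$, which is all Theorem~\ref{thm:thm_1} needs. This also dissolves the ``main obstacle'' you flag at the end: there is no need to control the \emph{sign of the expected} weight update under the fluctuations of $q_t^s,q_t^b$. The argument is pathwise, not in expectation --- the mass a buyer places on bids $\ge p$ is non-decreasing in $t$ for every $p$ (Lemma~\ref{lem: non-decreasing_variant_private}), and it increases by a multiplicative factor $C(\varepsilon)>1$ on every round where $p$ is chosen and $q_t^b \ge 1/n^b$ (Lemma~\ref{lem:mass_variant_private}); rounds where the noise makes $q_t^b$ small are simply discarded from the count $N_t(p^\star)$, which still diverges. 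Your part (ii) and the final union bound are correct once part (i) is replaced by this weaker convergence statement.
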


The proof idea is the following: despite the price randomness due to privacy, the event in which all agents bid their value and an optimal price is picked  happens infinitely often, as in the non-private case. In turn, (at least) $OPT$ buyers (resp. sellers) eventually learn to bid above (resp. below) an optimal price $p^\star$. However, the mechanism will still pick sub-optimal prices to guarantee privacy, as per the bound of Theorem~\ref{thm:thm_1}. We refer the reader to Appendix~\ref{app:dynamics_private} for a complete proof. 
A similar statement holds, with respect to benchmark $\opt'$ (see Definition \ref{def:opt'}), when agents update according to Algorithm~\ref{alg:mw}.

\begin{thm}
 Suppose buyers and sellers use the Exponential Weights Algorithm \ref{alg:mw} (with any $\eta > 0$) to update their bids. Further, suppose the market allocation mechanism is Algorithm \ref{alg:mech}. There exists an integer $N(\alpha) > 0$ such that for all $t \geq N(\alpha)$, the number of shares cleared at time $t$ satisfies
\begin{align*}
\Pr \left[ \Pi \left(p_t,\ts_t,\tb_t \right) \geq \opt' - \frac{2 \ln (V/\alpha)}{\epsilon} - \frac{2 \ln \left( 1/\alpha \right)}{\epsilon}  - \sqrt{6  \left(\opt' + \frac{\ln (1/\alpha)}{\epsilon} \right) \ln \left(1/\alpha\right)} \right] \ge 1 - 9\alpha.
\end{align*}
\end{thm}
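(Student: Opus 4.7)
The plan is to closely mirror the proof template sketched for Theorem~\ref{thm:limit_private}, substituting its appeal to Theorem~\ref{thm:limit} with the standard-Exponential-Weights convergence result of Theorem~\ref{thm:limit_1}, which yields the benchmark $\opt'$ rather than $\opt$. The high-level structure is: (i) show that after sufficiently many rounds, the realized bids $(\ts_t,\tb_t)$ support at least $\opt'$ clearable trades (i.e.\ $\opt(\ts_t,\tb_t) \ge \opt'$) with high probability over the learning randomness; and (ii) condition on this event and invoke Theorem~\ref{thm:thm_1} to translate the bid-level guarantee into a per-round guarantee on $\Pi(p_t,\ts_t,\tb_t)$.

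To establish (i), I would fix a price $p^\star$ witnessing $\opt'$ strictly profitable trades, so that $\S^\star = \{i \in \S : \vs_i < p^\star\}$ and $\B^\star = \{j \in \B : \vb_j > p^\star\}$ each have cardinality at least $\opt'$. Define the ``good round'' event $G_t$: every agent in $\S^\star$ draws a bid strictly below $p^\star$, every agent in $\B^\star$ draws a bid strictly above $p^\star$, and Algorithm~\ref{alg:mech}'s exponential mechanism selects $p_t = p^\star$. Because standard Exponential Weights (Algorithm~\ref{alg:mw}) maintains strictly positive mass on every action and the exponential mechanism of Algorithm~\ref{alg:mech} assigns strictly positive probability to $p^\star$ whenever $\Pi(p^\star,\ts_t,\tb_t) \ge 1$, the probability of $G_t$ is bounded below by a strictly positive constant that depends only on the current weights. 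Conditional on $G_t$, every agent in $\S^\star \cup \B^\star$ realizes strictly positive utility on the strict-profit side and hence multiplicatively boosts the weight placed on that side by a factor of at least $\exp(\eta \Delta)$, where $\Delta > 0$ is the smallest strict-profit margin to $p^\star$ across the relevant agents.

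A Borel--Cantelli-style argument, analogous to the ones used for Theorems~\ref{thm:limit_1} and~\ref{thm:limit_private}, then forces $G_t$ to recur infinitely often almost surely, and a potential-function argument summing log-weights of the strictly-profitable regions across agents in $\S^\star \cup \B^\star$ shows that the probability that at least $\opt'$ agents in $\S^\star$ (resp.\ $\B^\star$) bid strictly below (resp.\ above) $p^\star$ converges to $1$ as $t \to \infty$. Hence there exists $N(\alpha)$ such that for all $t \ge N(\alpha)$, this event, call it $E_t$, holds with probability at least $1-\alpha$, implying $\opt(\ts_t,\tb_t) \ge \opt'$ on $E_t$. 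Conditioning on $E_t$ and invoking Theorem~\ref{thm:thm_1} on the realized bids yields the stated lower bound on $\Pi(p_t,\ts_t,\tb_t)$ with probability at least $1-8\alpha$ over the randomness of Algorithm~\ref{alg:mech}; a union bound over $E_t$ and the privacy mechanism's failure event produces the $1-9\alpha$ guarantee.

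The principal obstacle is the potential-function argument of Step~(i): one must couple the joint evolution of many agents' weights under a \emph{shared, randomized} price trajectory, and the classical single-agent analysis does not apply directly. The key simplification afforded by joint differential privacy is that the price distribution is largely insensitive to any single agent's bid, so each agent's naive (price-independent) update is a faithful approximation to their true counterfactual update, and the non-private convergence reasoning of Theorem~\ref{thm:limit_1} carries through to the private setting modulo the per-round accuracy loss already quantified by Theorem~\ref{thm:thm_1}.
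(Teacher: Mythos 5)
Your overall route is the one the paper intends: it explicitly states that the $\opt'$ case ``follows the same argument'' as Theorem~\ref{thm:limit_private}, i.e., (i) a recurring good event plus monotonicity of the tail weights forces at least $\opt'$ buyers (resp.\ sellers) to bid above (resp.\ below) a fixed price $p^\star$ witnessing $\opt'$ with probability tending to $1$, and (ii) conditioning on that event and invoking Theorem~\ref{thm:thm_1} with a $(1-8\alpha)(1-\alpha)\ge 1-9\alpha$ union bound. Your simplification of the good event (asking the exponential mechanism to land exactly on a pre-fixed $p^\star$, which it does with probability bounded below uniformly in $t$) is legitimate and in fact avoids the tie-breaking delicacy of Lemma~\ref{lem:goodevent}. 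Also, the ``coupling'' obstacle you flag at the end is not actually an obstacle in the paper's scheme: Lemma~\ref{lem: non-decreasing} gives deterministic per-agent monotonicity of the tail mass in every round, and the boost lemma applies to each relevant agent separately on each good round, so no joint potential over agents is needed --- one simply takes a product of per-agent probabilities at the end.

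The genuine gap is your treatment of the per-round boost. You claim that on a good round each relevant agent multiplies the weight of the profitable side by at least $\exp(\eta\Delta)$; the actual multiplier is $\exp\left(\eta\, q^b_t (\vb_j - p_t)\right)$ (resp.\ the seller analogue), and in the private mechanism $q^b_t = \min\left\{1, \left(\widehat{s}\right)_+ / \left(\widehat{b} - \ln(1/\alpha)/\epsilon\right)_+\right\}$ is a random variable driven by the Laplace noise that can equal $0$ (e.g.\ whenever $\widehat{s}\le 0$). If your event $G_t$ does not also control $q^b_t$ and $q^s_t$, then $G_t$ can recur infinitely often while the weights never move, and step (i) fails. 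The fix --- which is precisely the one new ingredient the private setting requires, supplied by Lemmas~\ref{lem:mass_variant_private} and~\ref{lem:goodevent_variant_private} --- is to include $q^b_t \ge 1/\nb$ and $q^s_t \ge 1/\ns$ in the good event and to show the augmented event still has probability bounded below \emph{uniformly in $t$}, using the sign of the Laplace noise together with the fact that at least one agent on each side is bidding on the tradeable side of $p^\star$. Relatedly, your lower bound on $\Pr[G_t]$ ``depending only on the current weights'' is not sufficient for the recurrence argument; you need the uniform-in-$t$ bound coming from Claim~\ref{clm: bound_weights} (each agent keeps mass at least $1/V$ on its own valuation, hence on the relevant side of $p^\star$, forever).
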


\section{Simulations} \label{sec:sim}

In previous sections, we designed our mechanism and obtained theoretical guarantees of performance; these guarantees were given both in a \emph{one-shot} setting and relative to the optimal result that could be reached given agents' bids and also in a \emph{repeated} setting using no-regret learning. In this section, we conduct experiments on simulated data in both a one-shot and learning setting in order to explore how tightly these guarantees bind in practice. 

We perform all simulations in MATLAB using a similar starting configuration. We have 5000 buyers and 5000 sellers, and valuations must be integer values between 1 and 100. Valuations are drawn from normal distributions centered at 45 for sellers and 55 for buyers, with standard deviations of 15 for both. The draws are rounded to the nearest integer, and draws below 1 or above 100 are replaced with 1 and 100 respectively.

The mechanism we implement is the first we defined (Algorithm \ref{alg:mech}), run once or repeatedly for the one-shot game and learning settings, respectively. We vary $\epsilon$ over a range from $\epsilon=0.01$ to $\epsilon=0.5$.

\begin{figure}[t]
\centering
\includegraphics[width=0.49\textwidth]{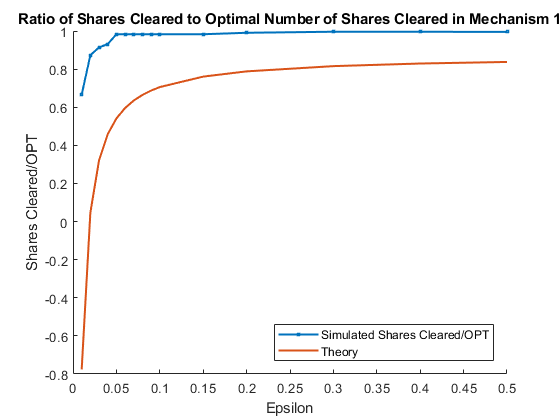}
\includegraphics[width=0.49\textwidth]{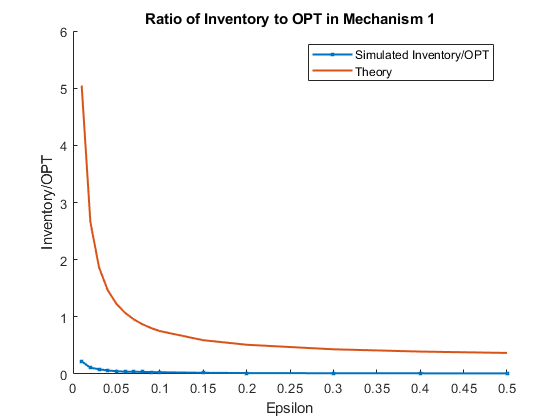}
\caption{Realized payoff and inventory relative to theoretical optimal in one-shot game for varying $\epsilon$.}\label{fig:oneshot}
\end{figure}

\paragraph{Single-shot game}
For the single shot game, we perform 800 trials per value of $\epsilon$ with a fixed set of agent valuations. These valuations were drawn randomly according to the procedure described above. We assume agents bid truthfully (and all of our comparisons are to the truthful optimal). 

In the first plot of Figure \ref{fig:oneshot}, we show the empirical 5\% quantile (i.e. the value for which only 5\% of draws saw lower values) of the \emph{competitive ratio} defined as the shares cleared as a fraction of $\opt$, the optimal number of shares that can be cleared given the realized valuations. This competitive ratio quantile is plotted in blue. The appropriate guarantee to compare to is the lower bound on this quantile given in Theorem~\ref{thm:thm_1}, with confidence parameter $\alpha = 0.05 / 8$; this bound is plotted in orange. While the realized competitive ratio indicates, unsurprisingly, that privacy is not costless for very small levels of $\epsilon$, it remains far above the worst-case guarantee predicted, and rapidly increases to nearly $1$ in the practical regime (i.e.,  even for $\epsilon=0.1$). This shows that, for a large enough number of agents and valuations drawn from well-behaved distributions, reasonable privacy can be achieved in practice with very little loss in utility.  

The second plot of Figure \ref{fig:oneshot} shows the inventory taken on by the mechanism, again plotting this quantile as a ratio of the optimal number of shares cleared in blue (again, limited to the top 95\% of runs) and the theoretical upper bound for $\alpha=0.05/6$ (as per the inventory bound of Theorem~\ref{thm:thm_1}) in orange. Notice that for very small $\epsilon$, the theoretical guarantee can be extremely large; yet, again, the realized inventory is far below the guarantee and never exceeds 23\% for even $\epsilon=0.01$ and is less than 5\% for $\epsilon \geq 0.05$.

\begin{figure}[t]
\centering
\includegraphics[width=0.49\textwidth]{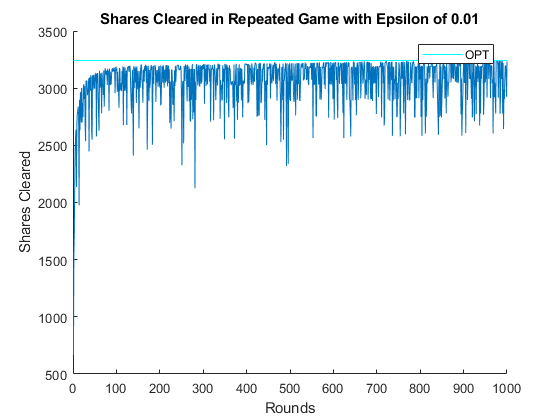}
\includegraphics[width=0.49\textwidth]{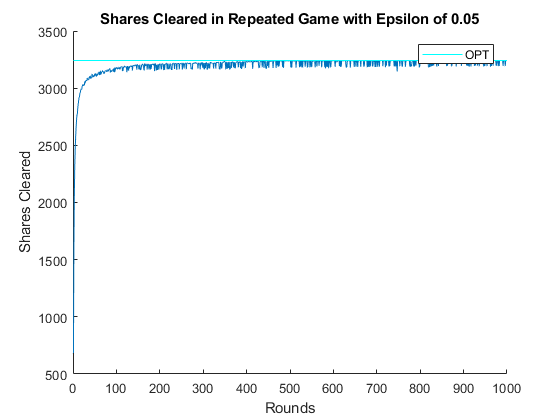}
\includegraphics[width=0.49\textwidth]{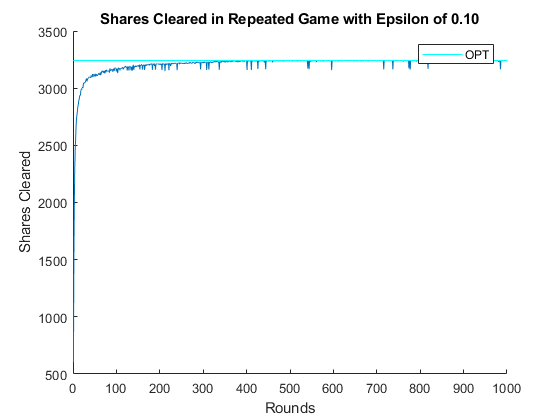}
\includegraphics[width=0.49\textwidth]{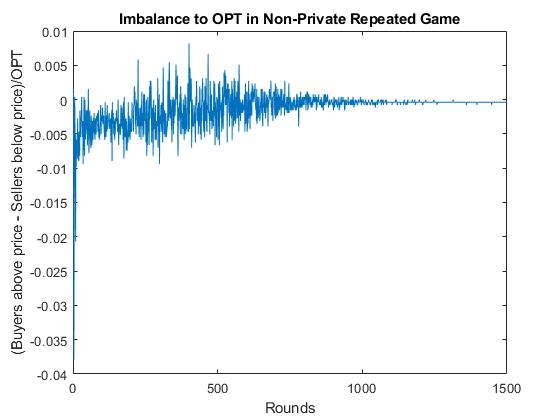}
\caption{The first three plots show the shares cleared over time, in the repeated setting of our mechanism, using Social Exponential Weights (\ref{alg:mw_variant}) for various choices of $\epsilon$. The last plot shows the imbalance between buyers and sellers over time in a repeated (non-private) auction. The agents' updates use $\eta, \xi = 0.1$.}\label{fig:repeated}
\end{figure}

\paragraph{Learning Setting}
In the \emph{learning} setting, we plot the shares cleared \emph{over time} as agents learn to bid given their valuations. We repeat the auction for 1000 rounds (1500 for the imbalance plot) with learning rate $\eta = 0.1$ and ``fake'' utility $\xi = 0.1$. Agents draw fixed valuations and then use the Social Exponential Weights described in Algorithm \ref{alg:mw_variant} to learn and bid each round. We repeat this process for several different values of $\epsilon$.

The first three plots in Figure \ref{fig:repeated} tell similar stories: agents, and thus the system, learn to bid over time in such a way as to clear the optimal number of shares (were the mechanism privacy-free). The noisiness in the plots is due to privacy and depends on the choice of $\epsilon$: the smaller the value of $\epsilon$, the more likely the mechanism is to pick a sub-optimal price, even after agents learn to bid optimally. For $\epsilon=0.01$, the randomness of the mechanism induces enough noise as to occasionally forego a large portion of utility; at larger values of $\epsilon$, the added randomness costs relatively little. 

The fourth plot displays the imbalance between number of buyers bidding above vs. sellers bidding below the price chosen by the repeated \emph{standard} (i.e., non-private) call auction when buyers use Social Exponential Weights. We highlight an interesting connection to real-world behavior: NYSE and NASDAQ perform  pre-opening or pre-closing repeated "hypothetical" auctions aimed at price discovery. In these hypothetical auctions, the exchanges accept bids, announce the current price and imbalance, allow bidders to submit updated bids, and repeat. The pattern in imbalances documented by \cite{challet2018dynamical} agrees broadly with that of Figure \ref{fig:repeated}: that is, the imbalance begins skewed to one side or another, but it repeatedly oscillates as bidders adjust before converging to a settled state.

\bibliographystyle{apalike}
\bibliography{ref}

\begin{thebibliography}{}

\bibitem[Budish et~al., 2015]{budish}
Budish, E., Cramton, P., and Shim, J. (2015).
\newblock { The High-Frequency Trading Arms Race: Frequent Batch Auctions as a
  Market Design Response}.
\newblock {\em The Quarterly Journal of Economics}, 130(4):1547--1621.

\bibitem[Challet, 2019]{challet2019strategic}
Challet, D. (2019).
\newblock Strategic behaviour and indicative price diffusion in paris stock
  exchange auctions.
\newblock In {\em New Perspectives and Challenges in Econophysics and
  Sociophysics}, pages 3--12. Springer.

\bibitem[Challet and Gourianov, 2018]{challet2018dynamical}
Challet, D. and Gourianov, N. (2018).
\newblock Dynamical regularities of us equities opening and closing auctions.
\newblock {\em Market Microstructure and Liquidity}, 4(1).

\bibitem[Chen et~al., 2018]{chen}
Chen, Z., Ni, T., Zhong, H., Zhang, S., and Cui, J. (2018).
\newblock Differentially private double spectrum auction with approximate
  social welfare maximization.
\newblock {\em arXiv preprint arXiv:1810.07873}.

\bibitem[Cummings et~al., 2015]{cummings2015privacy}
Cummings, R., Kearns, M., Roth, A., and Wu, Z.~S. (2015).
\newblock Privacy and truthful equilibrium selection for aggregative games.
\newblock In {\em International Conference on Web and Internet Economics},
  pages 286--299. Springer.

\bibitem[Dong et~al., 2018]{dong2018strategic}
Dong, J., Roth, A., Schutzman, Z., Waggoner, B., and Wu, Z.~S. (2018).
\newblock Strategic classification from revealed preferences.
\newblock In {\em Proceedings of the 2018 ACM Conference on Economics and
  Computation}, pages 55--70.

\bibitem[Dwork et~al., 2006]{dwork}
Dwork, C., McSherry, F., Nissim, K., and Smith, A. (2006).
\newblock Calibrating noise to sensitivity in private data analysis.
\newblock In Halevi, S. and Rabin, T., editors, {\em Theory of Cryptography},
  pages 265--284, Berlin, Heidelberg. Springer Berlin Heidelberg.

\bibitem[Dwork and Roth, 2014]{aaron}
Dwork, C. and Roth, A. (2014).
\newblock The algorithmic foundations of differential privacy.
\newblock {\em Foundations and Trends® in Theoretical Computer Science},
  9(3–4):211--407.

\bibitem[Dwork et~al., 2010]{DworkRV10}
Dwork, C., Rothblum, G.~N., and Vadhan, S. (2010).
\newblock Boosting and differential privacy.
\newblock In {\em Proceedings of the 2010 IEEE 51st Annual Symposium on
  Foundations of Computer Science}, FOCS '10, pages 51--60, Washington, DC,
  USA. IEEE Computer Society.

\bibitem[Even-Dar et~al., 2006]{michael}
Even-Dar, E., Kakade, S.~M., Kearns, M., and Mansour, Y. (2006).
\newblock (in)stability properties of limit order dynamics.
\newblock In {\em Proceedings of the 7th ACM Conference on Electronic
  Commerce}, EC ’06, page 120–129, New York, NY, USA. Association for
  Computing Machinery.

\bibitem[Gatheral, 2010a]{gatheral2010no}
Gatheral, J. (2010a).
\newblock No-dynamic-arbitrage and market impact.
\newblock {\em Quantitative finance}, 10(7):749--759.

\bibitem[Gatheral, 2010b]{gatheralslides}
Gatheral, J. (2010b).
\newblock Three models of market impact.
\newblock In {\em Market Microstructure and High Frequency Data}.

\bibitem[Hsu et~al., 2014]{billboard}
Hsu, J., Huang, Z., Roth, A., Roughgarden, T., and Wu, Z.~S. (2014).
\newblock Private matchings and allocations.
\newblock In {\em Proceedings of the Forty-Sixth Annual ACM Symposium on Theory
  of Computing}, STOC ’14, page 21–30, New York, NY, USA. Association for
  Computing Machinery.

\bibitem[Hsu et~al., 2016]{hsu2016jointly}
Hsu, J., Huang, Z., Roth, A., and Wu, Z.~S. (2016).
\newblock Jointly private convex programming.
\newblock In {\em Proceedings of the twenty-seventh annual ACM-SIAM symposium
  on Discrete algorithms}, pages 580--599. SIAM.

\bibitem[Kannan et~al., 2014]{stableDP}
Kannan, S., Morgenstern, J., Roth, A., and Wu, Z.~S. (2014).
\newblock Approximately stable, school optimal, and student-truthful
  many-to-one matchings (via differential privacy).
\newblock In {\em Proceedings of the twenty-sixth annual ACM-SIAM symposium on
  Discrete algorithms}, pages 1890--1903. SIAM.

\bibitem[Kearns et~al., 2014]{jointdp}
Kearns, M., Pai, M.~M., Roth, A., and Ullman, J. (2014).
\newblock Mechanism design in large games: Incentives and privacy.
\newblock {\em The American Economic Review}, 104(5):431--435.

\bibitem[Lewis, 2014]{lewis2014flash}
Lewis, M. (2014).
\newblock {\em Flash Boys: A Wall Street Revolt}.
\newblock A Wall Street Revolt. W. W. Norton.

\bibitem[McSherry and Talwar, 2007]{McSherryT07}
McSherry, F. and Talwar, K. (2007).
\newblock Mechanism design via differential privacy.
\newblock In {\em Proceedings of the 48th Annual IEEE Symposium on Foundations
  of Computer Science}, FOCS '07, pages 94--103, Washington, DC, USA. IEEE
  Computer Society.

\bibitem[NASDAQ, 2020]{nasdaq}
NASDAQ (2020).
\newblock Nasdaq opening and closing crosses.

\bibitem[NYSE, 2020]{nyse}
NYSE (2020).
\newblock Nyse opening and closing auctions fact sheet.

\bibitem[Parsons et~al., 2006]{parsons2006everything}
Parsons, S., Marcinkiewicz, M., Niu, J., and Phelps, S. (2006).
\newblock Everything you wanted to know about double auctions, but were afraid
  to (bid or) ask.

\bibitem[Parsons et~al., 2011]{parsons2011auctions}
Parsons, S., Rodriguez-Aguilar, J.~A., and Klein, M. (2011).
\newblock Auctions and bidding: A guide for computer scientists.
\newblock {\em ACM Computing Surveys (CSUR)}, 43(2):1--59.

\bibitem[Rogers et~al., 2015]{rogers2015inducing}
Rogers, R., Roth, A., Ullman, J., and Wu, Z.~S. (2015).
\newblock Inducing approximately optimal flow using truthful mediators.
\newblock In {\em Proceedings of the Sixteenth ACM Conference on Economics and
  Computation}, pages 471--488.

\bibitem[Rogers and Roth, 2014]{RR14}
Rogers, R.~M. and Roth, A. (2014).
\newblock Asymptotically truthful equilibrium selection in large congestion
  games.
\newblock In {\em Proceedings of the Fifteenth ACM conference on Economics and
  Computation}, pages 771--782.

\bibitem[Wah and Wellman, 2013]{WahWellman13}
Wah, E. and Wellman, M. (2013).
\newblock Latency arbitrage, market fragmentation, and efficiency: A two-market
  model.
\newblock In {\em Proceedings of the Thirteenth ACM conference on Economics and
  Computation}.

\end{thebibliography}

\appendix

\section{Differential Privacy Tools}\label{app:dp_tools}

In this section, we remind the reader of mechanisms that are classically used to guarantee differential privacy. These mechanisms work by adding appropriately-chosen noise to the choices and outputs of a mechanism, so as to ensure that a change in a single individual's data cannot have a large distributional effect on the mechanism's output. The noise introduced by differentially private mechanisms depends not only on the level $(\epsilon,\delta)$ of privacy one aims to guarantee, but also on the \emph{sensitivity} of the query of interest. This sensitivity measures how much the real-valued function of interest is affected by a change in a single entry of an input data set, and will be formally defined in our introduced DP mechanisms.

A commonly used mechanism for releasing the answer to numerical queries while guaranteeing $(\epsilon,0)$-differential privacy is the Laplace mechanism. The Laplace mechanism takes a numerical query $f$ as an input, and perturbs the value of $f$ on the input data set with zero-mean Laplace noise that has scale proportional to $(\Delta f / \epsilon)$ where $\Delta f$ is the $\ell_1$-sensitivity of $f$.

\begin{definition}[Laplace Mechanism \cite{dwork}]\label{def:laplace}
Given a function $f: \D^n \to \Real^k$ with $\ell_1$-sensitivity $\Delta f$:
\[
\Delta f = \max_{\overset{D,D' \, \in \, \mathcal{D}^n}{D \sim D'}} \left\Vert f(D) - f(D') \right\Vert_1,
\]
a data set $D \in \D^n$, and a privacy parameter $\epsilon$, the Laplace mechanism outputs:
\[
f_{\epsilon} \left(D \right) = f \left(D \right) + \left(W_1, \ldots, W_k \right)
\]
where $W_i$'s are $i.i.d.$ random variables drawn from $\text{Lap} \left( \Delta f/ \epsilon \right)$.
\end{definition}

We provide the privacy and accuracy guarantees of the Laplace mechanism below:
\medskip
\begin{thm}[Privacy vs. Accuracy of the Laplace Mechanism \cite{dwork}]\label{thm:laplace}
The Laplace Mechanism guarantees $(\epsilon,0)$-differential privacy and that with probability at least $1-\delta$,
$$
\left\Vert f_\epsilon \left(D \right) - f \left(D \right) \right\Vert_\infty \le \ln \left(\frac{k}{\delta} \right) \cdot \left( \frac{\Delta f}{\epsilon} \right)
$$
\end{thm}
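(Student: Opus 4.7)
The proof naturally splits into two parts: establishing $(\epsilon,0)$-differential privacy and establishing the $\ell_\infty$-accuracy tail bound. Both are standard computations given the density of the Laplace distribution, so the plan is mostly to be careful about where $\Delta f$ and the dimension $k$ enter.

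For the privacy claim, I would fix two neighboring data sets $D \sim D'$ and an arbitrary output point $y \in \Real^k$ and compare the joint densities of $f_\epsilon(D)$ and $f_\epsilon(D')$ at $y$. Since the noise coordinates are independent $\mathrm{Lap}(\Delta f/\epsilon)$, the density of $f_\epsilon(D)$ at $y$ factorizes as $\prod_{i=1}^k \tfrac{\epsilon}{2\Delta f}\exp(-\epsilon |y_i - f(D)_i|/\Delta f)$. Taking the ratio and cancelling the normalizing constants yields
\[
\frac{p_{f_\epsilon(D)}(y)}{p_{f_\epsilon(D')}(y)} = \exp\!\left(\frac{\epsilon}{\Delta f}\sum_{i=1}^k \bigl(|y_i - f(D')_i| - |y_i - f(D)_i|\bigr)\right).
\]
By the reverse triangle inequality each summand is at most $|f(D)_i - f(D')_i|$, so the sum is bounded by $\|f(D)-f(D')\|_1 \le \Delta f$. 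Hence the ratio is at most $e^\epsilon$, which by integrating over any measurable set $S$ gives the $(\epsilon,0)$-DP guarantee (with $\delta=0$ since no failure set is needed).

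For the accuracy claim, I would use that $\|f_\epsilon(D)-f(D)\|_\infty = \max_{1\le i\le k} |W_i|$ where $W_i \sim \mathrm{Lap}(\Delta f/\epsilon)$. The standard Laplace tail bound $\Pr[|W_i| \ge t] = \exp(-t\epsilon/\Delta f)$ follows by direct integration of the density. Setting $t = \ln(k/\delta)\cdot \Delta f/\epsilon$ makes each per-coordinate failure probability exactly $\delta/k$; a union bound over the $k$ coordinates then yields $\Pr[\max_i |W_i| \ge t] \le \delta$, which is the desired high-probability bound.

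There is no real obstacle here; the only subtle point is making sure that the $\ell_1$-sensitivity (rather than some other norm) appears in the privacy argument, which is exactly what is needed to absorb the telescoping sum $\sum_i |f(D)_i - f(D')_i|$ into $\Delta f$. Everything else is routine manipulation of Laplace densities and a one-line union bound.
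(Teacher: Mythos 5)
Your proof is correct and is exactly the canonical argument: the density-ratio calculation with the triangle inequality absorbing $\sum_i |f(D)_i - f(D')_i|$ into the $\ell_1$-sensitivity $\Delta f$ for privacy, and the Laplace tail bound $\Pr[|W_i|\ge t]=e^{-t\epsilon/\Delta f}$ plus a union bound over the $k$ coordinates for accuracy. The paper does not prove this theorem itself (it is imported from Dwork et al.\ as a standard tool), so there is nothing to compare against beyond noting that your argument is the textbook one and is complete.
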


We remark that the Laplace mechanism can be used to privately output the answer to numerical queries. However, suppose we want to privately output the solution to a maximization problem defined on the input data. Then, directly adding noise to the optimal solution could completely destroy the objective value of the maximization problem in question (for example, in an auction, adding a small amount of noise on the price of an item could significantly reduce revenue). In such situations, the Laplace mechanism performs poorly, and a better choice of private mechanism is the Exponential Mechanism, defined below:

\begin{definition}[Exponential Mechanism \cite{McSherryT07}]\label{def:exp}
Let $U: \D^n \times P \to \Real$ be a utility function that takes a data set $D \in \D^n$ and a parameter $p \in P$ as inputs, and let $\Delta U$ be its sensitivity. In other words,
$$
\Delta U = \max_{p \,\in \, P} \max_{\overset{D, D' \, \in \, \mathcal{D}^n}{D \sim D'}} \left\vert U \left(D,p \right) - U \left(D',p \right) \right\vert.
$$
Given a data set $D \in \D^n$ and a privacy parameter $\epsilon$, the exponential mechanism outputs $p \in P$ with probability proportional to $\exp \left(\frac{\epsilon U(D,p)}{2 \Delta U} \right)$ where $\exp(\cdot)$ is the exponential function.
\end{definition}

\begin{thm}[Privacy vs. Accuracy of the Exponential Mechanism \cite{McSherryT07}]\label{thm:exp}
The Exponential Mechanism guarantees $(\epsilon,0)$-differential privacy. Further, let $p_\epsilon \in P$ be the output of the Exponential mechanism, we have that with probability at least $1-\delta$,
$$
\left\vert U \left(D,p_\epsilon \right) - \max_{p \in P} U \left(D, p \right) \right\vert \le \ln \left(\frac{|P|}{\delta} \right) \cdot \left( \frac{2 \Delta U}{\epsilon} \right)
$$
\end{thm}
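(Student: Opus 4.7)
The plan is to prove the two parts of the statement separately: the $(\epsilon,0)$-differential privacy guarantee of the exponential mechanism, and the high-probability utility bound comparing $U(D,p_\epsilon)$ to the optimum.

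For the privacy claim, I would fix an arbitrary output $p \in P$ and a pair of neighboring data sets $D \sim D' \in \D^n$, and then bound the ratio $\Pr[p_\epsilon = p \mid D]/\Pr[p_\epsilon = p \mid D']$ by $e^\epsilon$. Writing each probability as $\exp(\epsilon U(\cdot,p)/(2\Delta U))$ divided by the normalizing sum $Z(\cdot) = \sum_{p' \in P}\exp(\epsilon U(\cdot,p')/(2\Delta U))$, the ratio factors as a product of two terms: a ``numerator'' ratio $\exp(\epsilon(U(D,p)-U(D',p))/(2\Delta U))$, which by the definition of $\Delta U$ is at most $e^{\epsilon/2}$, and a ``denominator'' ratio $Z(D')/Z(D)$. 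For the latter, I would use the sensitivity bound termwise, $U(D',p') \le U(D,p') + \Delta U$, to show $Z(D') \le e^{\epsilon/2} Z(D)$. Multiplying the two factors yields the desired $e^\epsilon$ bound, giving $(\epsilon,0)$-DP.

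For the accuracy claim, let $U^\star = \max_{p \in P} U(D,p)$ and fix the target gap $t = (2\Delta U/\epsilon)\ln(|P|/\delta)$. I would upper bound the probability that the mechanism outputs some $p$ with $U(D,p) \le U^\star - t$ by a standard ratio-of-sums argument: the numerator is at most $|P|\cdot \exp(\epsilon(U^\star - t)/(2\Delta U))$ because it sums at most $|P|$ terms each bounded by $\exp(\epsilon(U^\star - t)/(2\Delta U))$, and the denominator is at least $\exp(\epsilon U^\star/(2\Delta U))$ by restricting the sum to an optimizing $p^\star$. The ratio simplifies to $|P|\exp(-\epsilon t/(2\Delta U))$, and substituting the chosen $t$ gives exactly $\delta$. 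Since $U(D,p_\epsilon) \le U^\star$ always holds, the absolute value in the statement is redundant, and the high-probability bound follows.

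I do not expect a serious obstacle here; both parts are short calculations once the right expressions are written down. The most delicate step is the $Z(D')/Z(D)$ bound in the privacy argument, where one must apply the sensitivity inequality inside each exponential before summing rather than trying to compare the sums directly. The utility argument is just a union bound over ``bad'' outputs combined with a one-term lower bound on the partition function, and the only care needed is to choose $t$ so that the resulting $|P|e^{-\epsilon t/(2\Delta U)}$ equals $\delta$.
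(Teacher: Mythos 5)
Your proof is correct, and both halves are the standard argument from the cited reference \cite{McSherryT07} (see also the treatment in \cite{aaron}): the paper itself states Theorem~\ref{thm:exp} as an imported classical result and provides no proof of its own, so there is nothing internal to compare against. Your privacy argument correctly splits the likelihood ratio into the numerator factor $e^{\epsilon/2}$ and the partition-function factor $Z(D')/Z(D) \le e^{\epsilon/2}$ via the termwise sensitivity bound, and your accuracy argument correctly bounds the probability of outputting any $p$ with $U(D,p) \le U^\star - t$ by $|P|\exp\left(-\epsilon t/(2\Delta U)\right)$ and chooses $t$ so this equals $\delta$; the observation that $U(D,p_\epsilon) \le U^\star$ always, so the absolute value is harmless, is also right. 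No gaps.
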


An important property of differential privacy is that it is robust to \textit{post-processing}. Applying any data-independent function to the output of an $(\epsilon,\delta)$-DP algorithm preserves $(\epsilon,\delta)$-differential privacy.
\medskip
\begin{lemma}[Post-Processing \cite{dwork}]\label{lem:postproc}
Let $\M: \D^n \to \R$ be an $(\epsilon, \delta)$-DP algorithm and let $f: \R \to \R'$ be any function. We have that the algorithm $f \, o \, \M: \D^n \to \R'$ is $(\epsilon, \delta)$-DP.
\end{lemma}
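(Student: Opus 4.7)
The plan is to prove the post-processing lemma by reducing it to the definition of $(\epsilon,\delta)$-DP applied to $\M$, first handling the case where $f$ is deterministic and then extending to randomized $f$ via a conditioning argument.

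Fix any pair of neighboring data sets $D \sim D' \in \D^n$ and any (measurable) subset $S' \subseteq \R'$. The goal is to show
\[
\Pr\!\left[f(\M(D)) \in S'\right] \le e^{\epsilon} \Pr\!\left[f(\M(D')) \in S'\right] + \delta.
\]
Suppose first that $f$ is deterministic. Define the preimage $S = f^{-1}(S') = \{r \in \R : f(r) \in S'\} \subseteq \R$. Then the events $\{f(\M(D)) \in S'\}$ and $\{\M(D) \in S\}$ coincide, so applying the $(\epsilon,\delta)$-DP guarantee of $\M$ to the set $S$ yields
\[
\Pr\!\left[f(\M(D)) \in S'\right] = \Pr\!\left[\M(D) \in S\right] \le e^{\epsilon} \Pr\!\left[\M(D') \in S\right] + \delta = e^{\epsilon} \Pr\!\left[f(\M(D')) \in S'\right] + \delta,
\]
which is exactly what we want.

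For the general case where $f$ is randomized, I would represent $f$ in the standard way as a deterministic map $\tilde{f}(r,\omega)$ together with an independent random seed $\omega$ drawn from some distribution $\mu$ on an auxiliary space, independent of $\M$'s internal randomness. Conditioning on $\omega$, the map $r \mapsto \tilde{f}(r,\omega)$ is deterministic, so the deterministic case above yields, pointwise in $\omega$,
\[
\Pr_{\M}\!\left[\tilde{f}(\M(D),\omega) \in S'\right] \le e^{\epsilon} \Pr_{\M}\!\left[\tilde{f}(\M(D'),\omega) \in S'\right] + \delta.
\]
Taking expectation over $\omega \sim \mu$ on both sides and using Fubini/linearity of expectation (which preserves the inequality and the additive $\delta$ term) gives the desired bound for the full distribution of $f \circ \M$.

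There is essentially no serious obstacle here: the only subtlety worth flagging is that the post-processing $f$ must not use the private data set $D$ as additional input (otherwise the guarantee would not hold), and that the randomness of $f$ must be independent of that of $\M$ — both of which are implicit in the statement's description of $f$ as ``any function'' applied to the output of $\M$. Modulo these standard modeling conventions, the argument is a one-line application of the definition.
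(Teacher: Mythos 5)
Your proof is correct: it is the standard textbook argument (reduce to deterministic $f$ via preimages, then handle randomized $f$ as a mixture over an independent seed), which is precisely the proof given in the differential privacy literature that the paper cites. The paper itself states this lemma as a known result from Dwork et al.\ without reproducing a proof, so there is nothing to compare beyond noting that your argument, including the caveats about $f$ not touching the raw data and its randomness being independent of the mechanism's, is the canonical one and is complete.
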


Another important property of differential privacy is that DP algorithms can be composed adaptively with a graceful degradation in their privacy parameters.
\medskip
\begin{thm}[(Simple) Composition \cite{DworkRV10}]\label{thm:composition}
Let $\M_t$ be an $(\epsilon_t, \delta_t)$-DP algorithm for $t \in [T]$. We have that the composition $\M = (\M_1, \M_2, \ldots, \M_T)$ is $(\epsilon, \delta)$-DP where $\epsilon = \sum_{t} \epsilon_t$ and $\delta = \sum_t \delta_t$.
\end{thm}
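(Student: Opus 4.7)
The plan is to prove the theorem by induction on $T$. The base case $T=1$ is just the definition of $(\epsilon_1,\delta_1)$-DP. For the inductive step, it suffices to establish a two-mechanism adaptive composition lemma: if $\mathcal{M}_1$ is $(\epsilon_1,\delta_1)$-DP and, for every fixed first output $y_1$, the mechanism $\mathcal{M}_2(\cdot,y_1)$ is $(\epsilon_2,\delta_2)$-DP, then the adaptive composition $\mathcal{M}(D)=(\mathcal{M}_1(D), \mathcal{M}_2(D,\mathcal{M}_1(D)))$ is $(\epsilon_1+\epsilon_2,\delta_1+\delta_2)$-DP. The full theorem then follows by grouping the first $t-1$ mechanisms into a single composed mechanism and iterating.

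The central tool I would use is an equivalent characterization of $(\epsilon,\delta)$-DP: a mechanism $\mathcal{M}$ is $(\epsilon,\delta)$-DP iff for every pair of neighbors $D\sim D'$ there exist measurable events $E_D,E_{D'}\subseteq \mathcal{R}$ with $\Pr[\mathcal{M}(D)\in E_D]\ge 1-\delta$ and $\Pr[\mathcal{M}(D')\in E_{D'}]\ge 1-\delta$, such that on $E_D$ the Radon--Nikodym derivative of $\mathcal{M}(D)$ with respect to $\mathcal{M}(D')$ is at most $e^\epsilon$ pointwise (and symmetrically on $E_{D'}$). I would either cite this reformulation or prove it as a preliminary lemma; it is the device that lets the $\delta$ terms add linearly rather than accumulating multiplicative $e^{\epsilon_t}$ factors.

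Equipped with this, fix neighbors $D\sim D'$. Let $E_1$ be the good event supplied by the characterization applied to $\mathcal{M}_1$, and for each $y_1$ let $E_2(y_1)$ be the good event for $\mathcal{M}_2(\cdot,y_1)$. Define the joint good set $G=\{(y_1,y_2):y_1\in E_1,\ y_2\in E_2(y_1)\}$. By the law of total probability and a union bound,
\[
\Pr[\mathcal{M}(D)\notin G] \le \Pr[\mathcal{M}_1(D)\notin E_1] + \mathbb{E}_{y_1\sim \mathcal{M}_1(D)}\!\left[\Pr[\mathcal{M}_2(D,y_1)\notin E_2(y_1)]\right] \le \delta_1+\delta_2.
\]
On $G$, the joint Radon--Nikodym derivative of $\mathcal{M}(D)$ with respect to $\mathcal{M}(D')$ factors as the product of two pointwise-bounded conditional derivatives, so it is at most $e^{\epsilon_1}\cdot e^{\epsilon_2}=e^{\epsilon_1+\epsilon_2}$.

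To conclude, for any measurable $S$, I would write
\[
\Pr[\mathcal{M}(D)\in S] \le \Pr[\mathcal{M}(D)\in S\cap G] + \Pr[\mathcal{M}(D)\notin G] \le e^{\epsilon_1+\epsilon_2}\Pr[\mathcal{M}(D')\in S\cap G] + (\delta_1+\delta_2) \le e^{\epsilon_1+\epsilon_2}\Pr[\mathcal{M}(D')\in S] + (\delta_1+\delta_2),
\]
where the first inequality on $S\cap G$ uses the pointwise likelihood-ratio bound on $G$ by integration. Induction over $t=1,\dots,T$ yields privacy parameters $\sum_t \epsilon_t$ and $\sum_t \delta_t$. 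The main obstacle is the preliminary equivalence between the standard $(\epsilon,\delta)$-DP definition and the ``good-event with pointwise likelihood ratio'' characterization, particularly when the output spaces are continuous or uncountable, where one must handle Radon--Nikodym derivatives carefully; once that tool is in hand, the composition argument reduces to bookkeeping with a union bound and multiplying two bounded likelihood ratios.
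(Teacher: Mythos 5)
The paper does not prove this statement; it is quoted as a black-box result from \cite{DworkRV10}, so there is no internal proof to compare against. Your overall architecture --- reduce each round to a pointwise likelihood-ratio bound holding outside a probability-$\delta_t$ bad event, multiply the ratios across rounds, and union-bound the bad events --- is indeed the standard route to getting the $\delta$'s to add linearly, and is essentially how the cited literature argues.

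However, the preliminary ``equivalence'' you lean on is false in the direction you need. The forward implication (existence of events $E_D,E_{D'}$ of mass at least $1-\delta$ on which the Radon--Nikodym derivative is pointwise at most $e^\epsilon$ implies $(\epsilon,\delta)$-DP) is fine, but the converse fails: $(\epsilon,\delta)$-DP does not guarantee any high-probability event on which the likelihood ratio is pointwise bounded by $e^\epsilon$. Concretely, take $\epsilon=0$, $\delta=0.1$, let $\M(D')$ be uniform on $\{1,\dots,100\}$ and let $\M(D)$ put mass $0.0101$ on each of $1,\dots,99$ and the remaining $0.0001$ on the point $100$; every set $S$ satisfies $\Pr[\M(D)\in S]\le \Pr[\M(D')\in S]+0.0099$ and symmetrically, so the pair is $(0,0.1)$-indistinguishable, yet the only point where the density ratio is at most $1$ carries $\M(D)$-mass $0.0001$, far below $1-\delta$. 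Since your induction begins with ``let $E_1$ be the good event supplied by the characterization applied to $\M_1$,'' it breaks at the first step. The standard repair (e.g.\ Lemma 3.17 of the Dwork--Roth monograph, or the approximate-max-divergence formulation) replaces ``condition on a good event'' with ``couple to auxiliary distributions $Y',Z'$ within statistical distance $\delta$ of $\M(D),\M(D')$ whose pointwise max-divergence is at most $\epsilon$''; one then composes the auxiliary distributions and pays the statistical distances additively at the end. With that substitution your bookkeeping goes through, but as written the key lemma is not a technicality to be handled carefully --- it is the crux, and the version you state is not true.
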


To prove that our mechanisms satisfy $(\epsilon,\delta)$-joint differential privacy, we will leverage the billboard lemma (\cite{billboard}). The billboard lemma shows that for every individual $i$ in the data set, restricting $i$'s output to be a function of only the output of a differentially private mechanism (run on all agents' data) and his own input guarantees joint differential privacy. 
\begin{lemma}[Billboard Lemma \cite{billboard}]\label{lem:billboard}
Suppose $\mathcal{M}: \D^n \to \mathcal{R}'$ is $(\epsilon,\delta)$-differentially private. Consider any set of functions $f_i: \mathcal{D}_i \times \R' \to \R$, where $\mathcal{D}_i$ is the portion of the data set containing $i$'s data. The composition $\left\{ f_i( \Pi_i (D), \mathcal{M}(D) ) \right\}$ is $(\epsilon,\delta)$-jointly differentially private, where $\Pi_i: \D^n \to \D_i$ is the projection to $i$'s data.
\end{lemma}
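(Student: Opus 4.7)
My strategy is to reduce the joint-DP guarantee to the ordinary DP guarantee on $\mathcal{M}$ by exhibiting the vector of outputs to agents other than $i$ as a \emph{post-processing} of $\mathcal{M}$'s output. The two ingredients I will combine are (i) the post-processing property (Lemma~\ref{lem:postproc}) and (ii) the simple but crucial observation that for any $j \neq i$, the projection $\Pi_j$ returns the same value on $i$-neighboring databases $D \sim_i D'$.

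\textbf{Main steps.} First, fix an index $i$ and two $i$-neighbors $D \sim_i D'$. Let $S \subseteq \R^{n-1}$ be any event on the outputs to agents other than $i$; I want to bound
\[
\Pr\bigl[(f_j(\Pi_j(D), \mathcal{M}(D)))_{j \neq i} \in S\bigr]
\]
in terms of the analogous probability under $D'$. Second, I define the (data-dependent) function $g_{D, i}: \R' \to \R^{n-1}$ by
\[
g_{D, i}(r) \;=\; \bigl(f_j(\Pi_j(D), r)\bigr)_{j \neq i}.
\]
Here comes the key observation: because $D$ and $D'$ agree on every coordinate except the $i$-th, we have $\Pi_j(D) = \Pi_j(D')$ for all $j \neq i$, hence $g_{D, i} \equiv g_{D', i}$ as functions of $r$. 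Call this common function $g$; note that $g$ does not depend on $i$'s data at all. Third, the vector of outputs to agents other than $i$ is exactly $g(\mathcal{M}(D))$ (respectively $g(\mathcal{M}(D'))$), so
\[
\Pr\bigl[g(\mathcal{M}(D)) \in S\bigr] \;\le\; e^{\epsilon}\,\Pr\bigl[g(\mathcal{M}(D')) \in S\bigr] + \delta
\]
by applying the post-processing lemma to $\mathcal{M}$ together with the $(\epsilon,\delta)$-DP guarantee on neighbors $D, D'$ (they differ in at most one entry, namely the $i$-th). This is precisely the joint-DP condition, and since $i$, $D \sim_i D'$, and $S$ were arbitrary, the lemma follows.

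\textbf{Where the subtlety lies.} There is no heavy technical obstacle; the only thing to be careful about is conceptual. The function $g$ must be a \emph{single}, $i$-data-independent map of the billboard output $\mathcal{M}(D)$, since post-processing only preserves DP when the post-processing map itself does not peek at the private data. That is guaranteed here because $\Pi_j(D)$ for $j \neq i$ is common to $D$ and $D'$, so $g$ can be defined using only data that is shared between the two neighbors. I should be explicit that I am \emph{not} including $f_i(\Pi_i(D), \mathcal{M}(D))$ in the composition: joint DP permits the output to $i$ itself to depend arbitrarily on $i$'s own input, and this is exactly why we excise the $i$-th coordinate before invoking post-processing.
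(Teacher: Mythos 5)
Your proof is correct and is exactly the standard argument for the billboard lemma: the paper itself states this result without proof, citing \cite{billboard}, and your reduction --- observing that $\Pi_j(D)=\Pi_j(D')$ for all $j\neq i$ when $D\sim_i D'$, so that the map $r \mapsto \bigl(f_j(\Pi_j(D),r)\bigr)_{j\neq i}$ is a single post-processing function common to both neighbors --- is precisely the proof given in that reference. You also correctly flag the one genuine subtlety, namely that the post-processing map must not depend on agent $i$'s data, which is why the $i$-th coordinate is excised before invoking Lemma~\ref{lem:postproc}.
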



\section{Proofs of Privacy guarantees of our mechanisms}\label{app:DPproofs}

\begin{proof}[proof of Claim \ref{clm:DP_1}]
We start the proof by noticing that the sensitivity of $\Pi$ (as per Definition~\ref{def:exp}) is 1: indeed, changing one element in the data $(\vs,\vb)$, i.e. the valuation of a single agent, will change the number of shares cleared by at most one, for any fixed price $p$. We can therefore conclude that by Theorem \ref{thm:exp} the mechanism that takes the data set as input and outputs a price $p \propto \exp \left( \epsilon \Pi (p,\vs,\vb) / 2 \right)$ is $\epsilon$-DP. One can similarly argue that given a fixed price $p$, quantities $\sum_{i \in \S}  \mathbf{1} [p \geq \vs_i]$ and $\sum_{j \in \B} \mathbf{1} [p \leq \vb_j ]$ have sensitivity 1 (see Definition \ref{def:laplace}), and therefore by Theorem~\ref{thm:laplace}, $\widehat{s}$ and $\widehat{b}$ both satisfy $\epsilon$-DP. We can now invoke the Composition Theorem \ref{thm:composition} to conclude that the triplet $(p,\widehat{s}, \widehat{b})$ computed in Algorithm \ref{alg:mech} satisfies $3\epsilon$-differential privacy. The claim then follows by the Billboard Lemma \ref{lem:billboard} and noticing that each agent's allocation depends only on their own data and the triplet $(p,\widehat{s}, \widehat{b})$.
\end{proof}

\begin{proof}[Proof of Claim \ref{clm:DP_2}]
Notice first that according to Definition \ref{def:exp}, the sensitivity of $\Pi$ is 1 because for any $p$, changing one agent's valuation can change $\Pi$ by at most 1. Now fixing the price $p$ output by the first exponential mechanism, it similarly follows that the sensitivity of loss functions $L^s$ and $L^b$ are 2. We therefore have that the exponential mechanisms outputting $p$, $\tau^s$, and $\tau^b$ are all $\epsilon$-DP by Theorem \ref{thm:exp}, and hence the triplet $(p,\tau^s,\tau^b)$ satisfies $3\epsilon$-DP by the Composition Theorem \ref{thm:composition}. The claim then follows by the Billboard Lemma \ref{lem:billboard} and noticing that each agent's allocation depends only on their own data and the triplet $(p,\tau^s, \tau^b)$.
\end{proof}

\begin{proof}[Proof of Claim \ref{clm:DP_3}]
First, we note that the sensitivity of $f$ is upper-bounded by $\sqrt{6 \ln(1/\alpha)}$. Indeed, we remind the reader that $\opt$ has sensitivity of 1, and note that for any $x \ge 0$
\[
\sqrt{6 \ln(1/\alpha)} \geq \sqrt{ 6  \left(x + 1 + \frac{\ln (1/\alpha)}{\epsilon} \right) \ln \left(1/\alpha\right)} - \sqrt{ 6  \left(x + \frac{\ln (1/\alpha)}{\epsilon} \right) \ln \left(1/\alpha\right)},
\]
using the classical inequality $\sqrt{a} + \sqrt{b} \geq \sqrt{a+b}$. Therefore, by Theorem~\ref{thm:laplace}, the computation of $\widetilde{f}$ is $\epsilon$-differential private. In each of mechanisms $\M_1$ and $\M_2$, $p_1$ the price output by $\M_1$, respectively $p_2$ the price output by $\M_2$, are computed in an $\epsilon$-differentially private manner. Similarly, $\hat{s},~\hat{b}$ (resp. $\tau^s,~\tau^b$), the private counts of the number of agents willing to trade in $\M_1$ at price $p_1$ (resp. the thresholds picked by mechanism $\M_2$ for price $p_2$) are each the result of an $\epsilon$-differentially private conputation (conditional on $p_1,~p_2$). In turn, our mechanism can be seen as one that computes $(\widetilde{f},p_1,p_2,\hat{s},\hat{b},\tau^s,\tau^b)$ in a $7\epsilon$-differentially private manner (by the composition guarantee of Theorem~\ref{thm:composition}), then outputs an allocation $\os_i$ for each given seller $i$ (resp. $\ob_j$ for each buyer $j$) as a function of only $\vs_i$ (resp. $\vb_j$) and $(\widetilde{f},p_1,p_2,\hat{s},\hat{b},\tau^s,\tau^b )$. Hence, by Lemma~\ref{lem:billboard}, $\M$ is $7\epsilon$-joint differentially private.  
\end{proof}

\section{Proofs of Profit and Inventory of our Mechanisms}
\subsection{Proof of Theorem~\ref{thm:thm_1}}\label{app: mechanism1}
\begin{proof}
We will be using the following concentration inequalities in our proof.
\begin{fact}[Multiplicative Chernoff Bound]\label{fact:multchernoff}
Let $\{X_i\}_{i=1}^n$ be a collection of independent random variables where $ X_i \in [0,1]$ for all $i$. Let $S=\sum_{i=1}^n X_i$ and $\mu = \E [S]$. We have that for any $0 \le t \le 1$,
$$
\Pr \left[ S < (1-t) \mu \right] \le e^{-\frac{\mu t^2}{2}}
$$
\end{fact}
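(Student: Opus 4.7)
The plan is to prove the multiplicative Chernoff bound by the standard Chernoff-method approach: apply Markov's inequality to a well-chosen exponential moment of $S$, exploit independence to factor the moment generating function, bound the per-coordinate MGF using convexity, optimize the Chernoff parameter, and finally use an elementary analytic inequality to massage the resulting expression into the clean $e^{-\mu t^2/2}$ form.

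First, I would handle the boundary case $t=1$ separately, where $\Pr[S < 0] = 0$ since the $X_i$ are nonnegative, and assume $t \in [0,1)$ hereafter. For a parameter $\lambda > 0$ (to be chosen later), I would write
\[
\Pr[S < (1-t)\mu] = \Pr[e^{-\lambda S} > e^{-\lambda (1-t)\mu}] \leq e^{\lambda(1-t)\mu}\, \mathbb{E}[e^{-\lambda S}]
\]
by Markov's inequality. Independence of the $X_i$ gives $\mathbb{E}[e^{-\lambda S}] = \prod_{i=1}^n \mathbb{E}[e^{-\lambda X_i}]$. Since $x \mapsto e^{-\lambda x}$ is convex on $[0,1]$, I bound $e^{-\lambda x} \leq 1 - x(1-e^{-\lambda})$ for $x \in [0,1]$, and then use $1+u \leq e^u$ to obtain
\[
\mathbb{E}[e^{-\lambda X_i}] \leq 1 - \mathbb{E}[X_i](1-e^{-\lambda}) \leq \exp\!\bigl(-\mathbb{E}[X_i](1-e^{-\lambda})\bigr).
\]
Taking the product and using $\sum_i \mathbb{E}[X_i] = \mu$ yields $\mathbb{E}[e^{-\lambda S}] \leq \exp(-\mu(1-e^{-\lambda}))$, hence
\[
\Pr[S < (1-t)\mu] \leq \exp\!\bigl(\mu(e^{-\lambda} - 1 + \lambda(1-t))\bigr).
\]

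Next I would optimize the exponent in $\lambda$: differentiating $e^{-\lambda} - 1 + \lambda(1-t)$ yields $\lambda^\star = \ln\frac{1}{1-t}$, so $e^{-\lambda^\star} = 1-t$. Substituting gives the classical multiplicative Chernoff inequality
\[
\Pr[S < (1-t)\mu] \leq \exp\!\bigl(\mu(-t + (1-t)\ln(1-t))\bigr).
\]
To finish, I would show the analytic inequality $(1-t)\ln(1-t) + t \leq t^2/2$ for $t \in [0,1)$. Define $f(t) := (1-t)\ln(1-t) + t - t^2/2$; then $f(0)=0$, $f'(t) = -\ln(1-t) - t$, $f'(0)=0$, and $f''(t) = \tfrac{1}{1-t} - 1 = \tfrac{t}{1-t} \geq 0$, so $f' \geq 0$ and hence $f \geq 0$ on $[0,1)$. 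This gives $-t + (1-t)\ln(1-t) \leq -t^2/2$ and completes the bound.

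There is no real obstacle here: the whole argument is a textbook Chernoff derivation, and the only place that requires a bit of care is the final elementary inequality $(1-t)\ln(1-t) + t \leq t^2/2$, which I handled above via monotonicity of $f$. A couple of trivial checks (the $t=1$ boundary, and that the per-coordinate bound $e^{-\lambda x} \leq 1 - x(1-e^{-\lambda})$ actually uses $x \in [0,1]$) ensure the hypotheses of the fact are fully used.
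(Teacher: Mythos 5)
The paper does not prove this statement: it is quoted as a known Fact (a textbook concentration inequality) and used as a black box in the proof of Theorem~\ref{thm:thm_1}, so there is no in-paper argument to compare against. Your derivation is the standard Chernoff-method proof of the lower tail and its mathematical content is sound, but two sign slips need repair before it reads correctly. First, substituting $\lambda^{\star}=\ln\frac{1}{1-t}$ (so that $\lambda^{\star}(1-t)=-(1-t)\ln(1-t)$ and $e^{-\lambda^{\star}}=1-t$) into the exponent $e^{-\lambda}-1+\lambda(1-t)$ gives $-t-(1-t)\ln(1-t)$, not $-t+(1-t)\ln(1-t)$ as you display. Second, the analytic inequality you actually need --- and the one your function $f$ genuinely establishes, since you prove $f\ge 0$ --- is $(1-t)\ln(1-t)+t\ \ge\ t^{2}/2$, the reverse of the direction you state; from $f\ge 0$ one gets $-t-(1-t)\ln(1-t)\le -t^{2}/2$, which, plugged into the corrected optimized bound, yields $e^{-\mu t^{2}/2}$. (As written, your two errors happen to compensate: the inequality $-t+(1-t)\ln(1-t)\le -t^{2}/2$ is true but only trivially, and it is paired with an optimized bound that is not the one the calculation produces.) Everything else --- the $t=1$ boundary case, the Markov step, the factorization by independence, the chord bound $e^{-\lambda x}\le 1-x\bigl(1-e^{-\lambda}\bigr)$ for $x\in[0,1]$, the $1+u\le e^{u}$ step, the choice of $\lambda^{\star}$, and the second-derivative argument showing $f\ge 0$ --- is correct, so with those two signs fixed the proof is complete.
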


\begin{fact}[Bernstein's Inequality]\label{fact:bernstein}
Let $\{X_i\}_{i=1}^n$ be a collection of $i.i.d$ random variables where for each $i$, $ X_i \in [0,1]$, $E[X_i] = \mu$, and $Var(X_i) = \sigma^2$. Let $S=\sum_{i=1}^n X_i$. We have that for any $t \ge 0$,
$$
\Pr \left[ \left| S - n\mu \right| > t \right] \le 2 e^{-\frac{t^2}{2n\sigma^2 + 2t/3}}
$$ 
\end{fact}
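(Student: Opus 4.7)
My plan is to prove Bernstein's inequality via the classical Chernoff/moment generating function (MGF) method, which transforms a tail bound on the sum into an optimization over a single scalar parameter $\lambda$. By Markov's inequality applied to $e^{\lambda(S - n\mu)}$, for any $\lambda > 0$ we get
\[
\Pr[S - n\mu > t] \le e^{-\lambda t}\, \mathbb{E}\bigl[e^{\lambda(S - n\mu)}\bigr] = e^{-\lambda t}\, \bigl(\mathbb{E}[e^{\lambda(X_1 - \mu)}]\bigr)^n,
\]
using i.i.d.\ independence in the last equality. The entire task then reduces to bounding the MGF of the centered random variable $Y := X_1 - \mu$, which has mean zero, variance $\sigma^2$, and satisfies $|Y| \le 1$ since $X_1, \mu \in [0,1]$.

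The key step — and the one requiring some care — is to produce an MGF bound of the Bernstein form $\mathbb{E}[e^{\lambda Y}] \le \exp\!\bigl(\tfrac{\sigma^2 \lambda^2}{2(1-\lambda/3)}\bigr)$, valid for $0 < \lambda < 3$. I would Taylor-expand $e^{\lambda Y} = 1 + \lambda Y + \sum_{k \ge 2} \lambda^k Y^k/k!$, take expectations (killing the first-order term), and use the observation that because $|Y| \le 1$, we have $|Y|^k \le Y^2$ for all $k \ge 2$, so $|\mathbb{E}[Y^k]| \le \sigma^2$. Plugging in the elementary numerical fact $k! \ge 2 \cdot 3^{k-2}$ for $k \ge 2$ (verifiable by induction) yields
\[
\mathbb{E}[e^{\lambda Y}] \le 1 + \sigma^2 \sum_{k=2}^\infty \frac{\lambda^k}{k!} \le 1 + \frac{\sigma^2 \lambda^2}{2}\sum_{j=0}^\infty \left(\frac{\lambda}{3}\right)^j = 1 + \frac{\sigma^2 \lambda^2}{2(1 - \lambda/3)},
\]
and the inequality $1 + x \le e^x$ converts this into the desired exponential form. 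Replacing the factor $3$ by $1$ here would only give a Hoeffding-style bound; obtaining the right coefficient on $t/3$ in the denominator of the final exponent is where this step matters.

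Combining the MGF bound with the initial Markov step gives $\Pr[S - n\mu > t] \le \exp\!\bigl(-\lambda t + \tfrac{n \sigma^2 \lambda^2}{2(1 - \lambda/3)}\bigr)$, and I would then choose $\lambda = t/(n\sigma^2 + t/3)$ — the choice that (up to the denominator structure) makes the two summands in the exponent comparable. A short algebraic computation shows the exponent equals exactly $-t^2/(2 n \sigma^2 + 2t/3)$, and $\lambda < 3$ is automatic, producing the claimed one-sided bound. For the absolute value, I would apply the identical argument to $\{-X_i\}$, which are still in $[-1,0]$ with the same variance $\sigma^2$ and centered by $-\mu$; a union bound over the two tails contributes the factor of $2$.

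The main obstacle is the MGF bound: getting the constant $1/3$ (rather than something weaker) requires the specific numerical inequality $k! \ge 2 \cdot 3^{k-2}$, which cleanly reorganizes the tail of the Taylor series into a geometric sum. Everything else — Markov, independence, the choice of $\lambda$, and the symmetry argument for the two-sided bound — is routine once this key inequality is in place.
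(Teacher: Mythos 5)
Your proof is correct: the Markov/MGF step, the moment bound $|\E[Y^k]|\le\sigma^2$ via $|Y|\le 1$, the inequality $k!\ge 2\cdot 3^{k-2}$, and the choice $\lambda = t/(n\sigma^2+t/3)$ all check out and yield exactly the stated exponent, with the factor of $2$ coming from the union bound over the two tails. The paper states this Bernstein bound as a standard fact without proof, so there is nothing to compare against; yours is the canonical Chernoff-style derivation and is complete (the only degenerate case, $\sigma^2=0$, where $\lambda=3$, is trivially handled since then $S=n\mu$ almost surely).
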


Let $s(p) = \sum_{i \in S}  \mathbf{1}\left[p \geq \vs_i\right]$ and $b(p) = \sum_{j \in B} \mathbf{1} \left[p \leq \vb_j \right]$ be the number of sellers and buyers available at price $p$, where $p$ is the price chosen by the exponential mechanism. Note that as $p$ is a random variable, so are $s(p)$ and $b(p)$. From now on, for simplicity of notations, we omit the dependency of $s$ and $b$ in $p$. We start the proof by noting that by the accuracy guarantee of the Laplace mechanism (see Theorem \ref{thm:laplace}) and a union bound, we have with probability at least $1-2\alpha$ that 
\begin{align}\label{eq:Laplace_acc}
\left\vert \widehat{b} -  b\right\vert \leq \frac{\ln \left( 1/\alpha \right)}{\epsilon}, \quad \left\vert \widehat{s} -  s \right\vert \leq \frac{\ln \left( 1/\alpha \right)}{\epsilon},
\end{align}
and by the accuracy guarantee of the Exponential mechanism (see Theorem \ref{thm:exp}), we have with probability at least $1-\alpha$ that,
\begin{align}\label{eq:exp_acc}
\Pi \left( p, \vs, \vb \right) = \min \left\{ s, b \right\} \ge \opt - \frac{2 \ln (V/\alpha)}{\epsilon}.
\end{align}

By a union bound, Equations~(\ref{eq:Laplace_acc}) and~(\ref{eq:exp_acc}) hold simultaneously with probability at least $1 - 3\alpha$, and throughout this proof we condition on these events. Let $\widetilde{s} = \sum_{i \in \S} \os_i$ and $\widetilde{b} = \sum_{j \in \B} \ob_j$ be the number of sellers and buyers who participate in a trade, output by the mechanism.
First, let's focus on $\widetilde{s}$. Observe that
$$\widetilde{s} \, | s, \widehat{s}, \widehat{b} \sim Binomial \left(s, \widehat{q} = \min\left\{1, \frac{\left( \widehat{b} \right)_+ }{\left( \widehat{s} - \frac{\ln (1/\alpha)}{\epsilon}  \right)_+ }\right\} \right).
$$
Note that we have
\begin{align*}
\widehat{s} - \frac{\ln (1/\alpha)}{\epsilon} 
&\geq s - \frac{2\ln (1/\alpha)}{\epsilon} \tag{by Equation~\eqref{eq:Laplace_acc}}
\\&\geq \opt - \frac{2 \ln (1/\alpha)}{\epsilon} - \frac{2 \ln (V/\alpha)}{\epsilon} \tag{by Equation~\eqref{eq:exp_acc}}
\\&\geq \opt - \frac{4 \ln (V/\alpha)}{\epsilon} \tag{$V\ge1$}
\\&\geq \frac{\ln (V/\alpha)}{\epsilon} 
\tag{by assumption, $OPT \geq  \frac{5 \ln (V/\alpha)}{\epsilon}$}
\\&> 0,
\end{align*}
and also
\begin{align*}
\widehat{b}
&\geq b - \frac{\ln (1/\alpha)}{\epsilon} \tag{by Equation~\eqref{eq:Laplace_acc}}
\\&\geq \opt - \frac{ \ln (1/\alpha)}{\epsilon} - \frac{2 \ln (V/\alpha)}{\epsilon} \tag{by Equation~\eqref{eq:exp_acc}}
\\&\geq \opt - \frac{3 \ln (V/\alpha)}{\epsilon} \tag{$V\ge1$}
\\&\geq \frac{2 \ln (V/\alpha)}{\epsilon} 
\tag{by assumption, $OPT \geq  \frac{5 \ln (V/\alpha)}{\epsilon}$}
\\&> 0.
\end{align*}
As such, $\widehat{q}$ is well-defined, and we can rewrite it as
\begin{align*}
\widehat{q} = \min \left(1, \frac{\widehat{b}}{\widehat{s} - \frac{\ln (1/\alpha)}{\epsilon}}\right).
\end{align*}

We have by the multiplicative Chernoff Bound (Fact \ref{fact:multchernoff}) with $t= \sqrt{\frac{2 \ln (1/\alpha)}{s \widehat{q}}}$ that,
\begin{equation}\label{eq:Hoeffding}
 \widetilde{s} \ge s \widehat{q} - \sqrt{2s \widehat{q} \ln (1/\alpha)}
\end{equation}
with probability at least $1- \alpha$ when $t \leq 1$. Note that the bound applies when $t > 1$ too, noting that then $s \widehat{q} - \sqrt{2s \widehat{q} \ln (1/\alpha)} < 0$ but $\widetilde{s} \geq 0$. In what follows, we will provide an upper bound and a lower bound for the term $s \widehat{q}$ so that we can further lower bound $\widetilde{s}$ in Equation (\ref{eq:Hoeffding}). Symmetrically, we can get a similar lower bound for $\widetilde{b}$ which completes the first part of the proof because $\Pi (\M) = \min \{ \widetilde{s}, \widetilde{b} \}$.

On the one hand, note that
\begin{align}\label{eq:lower_bound_sp}
\begin{split}
s \widehat{q} 
&= s \cdot \frac{ \min \left\{ \widehat{s} - \frac{\ln (1/\alpha)}{\epsilon}  ,  \widehat{b} \right\}}{ \widehat{s} - \frac{\ln (1/\alpha)}{\epsilon}  } 
\\&\geq \frac{s}{s } \cdot  \min \left\{  \widehat{s} - \frac{\ln (1/\alpha)}{\epsilon}   ,  \widehat{b}  \right\}
\\& =  \min \left\{  \widehat{s} - \frac{\ln (1/\alpha)}{\epsilon}   ,  \widehat{b}  \right\}
\\&\geq \min \left\{ s, b \right\} - \frac{2 \ln \left( 1/\alpha \right)}{\epsilon}
\\&\geq \opt - \frac{2 \ln  \left(1/\alpha \right)}{\epsilon} -  \frac{2 \ln (V/\alpha)}{\epsilon}.
\end{split}
\end{align}
The first inequality follows from $\widehat{s} - \frac{\ln (1/\alpha)}{\epsilon}  \leq s$ by Equation~\eqref{eq:Laplace_acc}. The second inequality follows from
\[
\widehat{s} - \frac{\ln (1/\alpha)}{\epsilon}  \geq s - \frac{2 \ln \left( 1/\alpha \right)}{\epsilon}, \quad \widehat{b} \geq b - \frac{\ln (1/\alpha)}{\epsilon}
\] 
by Equation~\eqref{eq:Laplace_acc}. The third inequality is a direct application of Equation~\eqref{eq:exp_acc}. On the other hand, 
\begin{align}\label{eq:upper_bound_sp}
\begin{split}
s \widehat{q} &= s \cdot \frac{ \min \left\{  \widehat{s} - \frac{\ln (1/\alpha)}{\epsilon}   ,  \widehat{b}  \right\}}{ \widehat{s} - \frac{\ln (1/\alpha)}{\epsilon}  }
\\&\leq  \frac{s}{ s - \frac{2 \ln (1/\alpha)}{\epsilon}  } \cdot \min \left\{ s ,  b + \frac{\ln (1/\alpha)}{\epsilon}  \right\}
\\&\leq  \frac{s}{ s - \frac{2 \ln (1/\alpha)}{\epsilon}  } \cdot  \left(\min \left\{ s ,  b \right\} + \frac{\ln (1/\alpha)}{\epsilon} \right)
\\&\leq  \frac{s}{ s - \frac{2 \ln (V/\alpha)}{\epsilon}  } \cdot  \left(\min \left\{ s ,  b \right\} + \frac{\ln (1/\alpha)}{\epsilon} \right)
\\&\leq  3 \left(\min \left\{ s ,  b \right\} + \frac{\ln (1/\alpha)}{\epsilon} \right)
\\&\leq 3 \left( \opt + \frac{\ln (1/\alpha)}{\epsilon} \right).
\end{split}
\end{align}
The first inequality follows from 
\[
s - \frac{\ln (1/\alpha)}{\epsilon} \le \widehat{s} \leq s + \frac{\ln (1/\alpha)}{\epsilon}, \quad \widehat{b} \leq b + \frac{\ln (1/\alpha)}{\epsilon}
\]
by Equation~\eqref{eq:Laplace_acc}. The second-to-last inequality follows from the fact that
\[
f:(0,+\infty) \to \mathbb{R}, \quad f(x) =  \frac{x}{ x - \frac{2 \ln (V/\alpha)}{\epsilon} } = \frac{1}{ 1 - \frac{2 \ln (V/\alpha)}{\epsilon x} }
\]
is a non-increasing function of $x$, and that by Equation~\eqref{eq:exp_acc},
\[
s \geq \opt - \frac{2 \ln (V/\alpha)}{\epsilon} \geq \frac{3 \ln (V/\alpha)}{\epsilon}.
\]

Combining Equations~\eqref{eq:Hoeffding},~\eqref{eq:lower_bound_sp}, and~\eqref{eq:upper_bound_sp}, we obtain that with probability $1-4\alpha$,
\begin{align}\label{eq:lower_bound_s}
\begin{split}
&\widetilde{s} 
\geq  \opt - \frac{2 \ln \left( 1/\alpha \right)}{\epsilon} -  \frac{2 \ln (V/\alpha)}{\epsilon} - \sqrt{ 6 \left(\opt + \frac{\ln (1/\alpha)}{\epsilon} \right) \ln \left(1/\alpha\right)}.
\end{split}
\end{align}
Symmetrically, we can get the same bound for $\widetilde{b}$: with probability $1-4\alpha$,
\begin{align}\label{eq:lower_bound_b}
\begin{split}
&\widetilde{b} 
\geq  \opt - \frac{2 \ln \left( 1/\alpha \right)}{\epsilon} -  \frac{2 \ln (V/\alpha)}{\epsilon} - \sqrt{ 6 \left(\opt + \frac{\ln (1/\alpha)}{\epsilon} \right) \ln \left(1/\alpha\right)}
\end{split}
\end{align}
Combining Equations~\eqref{eq:lower_bound_s} and~\eqref{eq:lower_bound_b} and noting that $\Pi (\M_1) = \min \{ \widetilde{s}, \widetilde{b}\}$ proves the first part of the theorem. We conclude the proof by noting that the statements hold with probability at least $1-8\alpha$ by union bound.

Let us now analyze the inventory of the mechanism. We have by the triangle inequality that
\begin{equation}\label{eq:cost}
I \left(\M_1 \right) = \left| \widetilde{s} - \widetilde{b} \right| \le \left| \widetilde{s} - \min \left\{s,b \right\} \right| + \left| \widetilde{b} - \min \left\{s,b \right\} \right|
\end{equation}
We will provide an upper bound for the first term, and by symmetry, an upper bound on the second will follow immediately. We have that by the triangle inequality
\begin{equation}\label{eq:cost_s}
\left| \widetilde{s} - \min \left\{s,b \right\} \right| \le \left| \widetilde{s} - s \widehat{q} \right| + \left| s \widehat{q} - \min \left\{s,b \right\} \right|
\end{equation}
First
\begin{align}\label{eq:cost_s_part1}
\begin{split}
\left| \widetilde{s} - s \widehat{q} \right| &\le \sqrt{2s \widehat{q} (1- \widehat{q}) \ln (2/\alpha)} + \frac{2 \ln (2/\alpha)}{3} \\
&\le \sqrt{2s \widehat{q} \ln (2/\alpha)} + \frac{2 \ln (2/\alpha)}{3} \\
&\le \sqrt{6 \left(\opt + \frac{\ln (1/\alpha)}{\epsilon} \right) \ln (2/\alpha) } + \frac{2 \ln (2/\alpha)}{3}
\end{split}
\end{align}
where the first inequality holds with probability $1-\alpha$ and follows from the Bernstein's inequality (Fact \ref{fact:bernstein}) with $\sigma^2 = \widehat{q} (1-\widehat{q})$ and taking $t=\frac{\ln (2/\alpha)}{3} + \sqrt{\frac{\ln^2 (2/\alpha)}{9}+2n\sigma^2 \ln(2/\alpha)}$. Notice that $t \le \frac{2\ln (2/\alpha)}{3} + \sqrt{2n\sigma^2 \ln(2/\alpha)}$. The last inequality follows from the upper bound developed in Equation (\ref{eq:upper_bound_sp}). Second,
\begin{equation}\label{eq:cost_s_part2}
\left| s \widehat{q} - \min \left\{s,b \right\} \right| \le \frac{9 \ln (1/\alpha)}{\epsilon}.
\end{equation}
This is because as we showed in Equation (\ref{eq:lower_bound_sp}), 
\[
- \frac{2 \ln \left( 1/\alpha \right)}{\epsilon} \le s \widehat{q} - \min \left\{ s, b \right\},
\]
and as we showed in Equation (\ref{eq:upper_bound_sp}),
\begin{align*}
s \widehat{q} - \min \left\{ s, b \right\} 
&\leq \frac{s}{ s - \frac{2 \ln (1/\alpha)}{\epsilon}  } \cdot  \left(\min \left\{ s ,  b \right\} + \frac{\ln (1/\alpha)}{\epsilon} \right)
- \min \left\{ s, b \right\} 
\\& = \left( \frac{s}{ s - \frac{2 \ln (1/\alpha)}{\epsilon} } - 1 \right) \cdot \min \left\{ s ,  b \right\} + \frac{s}{ s - \frac{2 \ln (1/\alpha)}{\epsilon} } \cdot \frac{\ln (1/\alpha)}{\epsilon}\\
&\le \left( \frac{ {2 \ln (1/\alpha) / \epsilon} }{s - \frac{2 \ln (1/\alpha)}{\epsilon}  } \right) \cdot s + \frac{s}{ s - \frac{2 \ln (1/\alpha)}{\epsilon} } \cdot \frac{\ln (1/\alpha)}{\epsilon} \\
&= \frac{s}{s - \frac{2 \ln (1/\alpha)}{\epsilon}} \cdot \frac{3 \ln (1/\alpha)}{\epsilon} \\
&\le \frac{9 \ln (1/\alpha)}{\epsilon}
\end{align*}
Because we showed in Equation~(\ref{eq:upper_bound_sp}) that
$
\frac{s}{s - \frac{2 \ln (1/\alpha)}{\epsilon}  } \le 3 
$.
Putting together Equations (\ref{eq:cost_s}), (\ref{eq:cost_s_part1}), and (\ref{eq:cost_s_part2}) we get that with probability $1-3\alpha$,
\begin{align*}
\left| \widetilde{s} - \min \left\{s,b \right\} \right|
&\le \sqrt{6 \left(\opt + \frac{\ln (1/\alpha)}{\epsilon} \right) \ln (2/\alpha) } + \frac{2 \ln (2/\alpha)}{3} 
+ \frac{9 \ln(1/\alpha)}{\epsilon}.
\end{align*}
Swapping the roles of the buyers and a similar proof yields the same bound on $\left| \widetilde{b} - \min \left\{s,b \right\} \right|$.
A union bound completes the proof, by Equation (\ref{eq:cost}).
\end{proof}

\subsection{Proof of Theorem~\ref{thm:thm_2}}\label{app:mechanism2}
\begin{proof}
Let us define $\widetilde{s} = \sum_{i \in \S} \os_i$ and $\widetilde{b} = \sum_{j \in \B} \ob_j$ to be the number of sellers and buyers who participate in a trade under output allocation $\oo$. We have that with probability at least $1-3\alpha$, by the accuracy guarantee of the Exponential mechanism (see Theorem \ref{thm:exp}),
\begin{align}\label{eq:exp_acc1}
\Pi \left(p, \vs, \vb \right) \ge \opt - \frac{2 \ln (V/ \alpha)}{\epsilon},
\end{align}
and that since $\min_{\tau^s} L^s(\tau^s, p, \vs, \vb) = \min_{\tau_b} L^b(\tau^b, p, \vs, \vb) = 0$,
\begin{align}\label{eq:exp_acc2}
\left| \widetilde{s}  - \Pi(p, \vs, \vb) \right| = L^s(\tau^s, p, \vs, \vb) \le \frac{4 \ln \left(\ns/ \alpha \right)}{\epsilon} \quad \Longrightarrow \quad \widetilde{s} \ge \Pi(p, \vs, \vb) - \frac{4 \ln \left(\ns/ \alpha \right)}{\epsilon},
\end{align}
\begin{align}\label{eq:exp_acc3}
\left| \widetilde{b}  - \Pi(p, \vs, \vb) \right| = L^b(\tau^b, p, \vs, \vb) \le \frac{4 \ln \left(\nb/ \alpha \right)}{\epsilon} \quad \Longrightarrow \quad \widetilde{b} \ge \Pi(p, \vs, \vb) - \frac{4 \ln \left(\nb/ \alpha \right)}{\epsilon}
\end{align}
We therefore have that
\begin{align}
\Pi \left( \M_2 \right) = \min \left\{ \widetilde{s}, \widetilde{b} \right\} \ge \opt - \frac{2 \ln (V/ \alpha)}{\epsilon} - \frac{4 \ln (n/ \alpha)}{\epsilon}
\end{align}
Let's now analyze the inventory introduced by the private mechanism. We have that
\begin{align*}
I \left( \M_2 \right) &= \left| \widetilde{s} - \widetilde{b} \right| \\
&\le \left| \widetilde{s} - \Pi \left(p, \vs, \vb \right) \right| + \left| \widetilde{b} - \Pi \left(p, \vs, \vb \right) \right| \\
&= L^s(\tau^s, p, \vs, \vb) + L^b(\tau^b, p, \vs, \vb) \\
&\le \frac{4 \ln \left(\ns/ \alpha \right)}{\epsilon} + \frac{4 \ln \left(\nb/ \alpha \right)}{\epsilon} \\
&\le \frac{8 \ln (n/ \alpha)}{\epsilon}
\end{align*}
where the second inequality holds with probability $1-2\alpha$ by Equations (\ref{eq:exp_acc2}) and (\ref{eq:exp_acc3}).
\end{proof}

\subsection{Proof of Theorem~\ref{thm:thm_3}}\label{app:mechanism3}

\begin{proof}
This theorem follows from Theorems \ref{thm:thm_1} and \ref{thm:thm_2} and conditioning on the accuracy guarantee of the additional Laplace mechanism used in Algorithm \ref{alg:mech_combined}:
\begin{equation}\label{eq:lap_f}
\text{w.p. } 1 - \alpha, \quad \quad \left| \widetilde{f} - f \right| \le \frac{\sqrt{6} \ln^{1.5} (1/\alpha)}{\epsilon}
\end{equation}
Suppose $\widetilde{f} < 0$. Note that in this case,
\begin{align*}
\opt - \Pi \left( \M_3 \right) &= \opt - \Pi \left( \M_1 \right) \\
&\le \frac{2 \ln (V/\alpha)}{\epsilon} + \frac{2 \ln \left( 1/\alpha \right)}{\epsilon}  + \sqrt{6  \left(\opt + \frac{\ln (1/\alpha)}{\epsilon} \right) \ln \left(1/\alpha\right)} \tag{$\star$} \\
&= \frac{2 \ln (V/\alpha)}{\epsilon} + \frac{4 \ln (n/ \alpha)}{\epsilon} + f \\
&\le \frac{2 \ln (V/  \alpha)}{\epsilon} + \frac{4 \ln (n/ \alpha)}{\epsilon} + \widetilde{f} + \frac{\sqrt{6} \ln^{1.5} (1/\alpha)}{\epsilon} \\
&\le \frac{2 \ln (V/  \alpha)}{\epsilon} + \frac{4 \ln (n/ \alpha)}{\epsilon} + \frac{\sqrt{6} \ln^{1.5} (1/\alpha)}{\epsilon} \tag{$\star \star$}
\end{align*}
where the first inequality follows from Theorem \ref{thm:thm_1}, with probability $1-8\alpha$. The second inequality follows from Equation \ref{eq:lap_f}, with probability $1-\alpha$. Combining the bounds given by the second and the last inequalities (specified by $\star$ and $\star \star$), we get that with probability $1-9\alpha$,
\begin{align*}
\opt - \Pi \left( \M_3 \right) 
&\le  \min\left\{ \frac{2 \ln \left( 1/\alpha \right)}{\epsilon}  + \sqrt{ 6  \left(\opt + \frac{\ln (1/\alpha)}{\epsilon} \right) \ln \left(1/\alpha\right)},  \frac{4 \ln (n/ \alpha)}{\epsilon} \right\} 
\\&+ \frac{2 \ln (V/  \alpha)}{\epsilon} + \frac{\sqrt{6} \ln^{1.5} (1/\alpha)}{\epsilon}.
\end{align*}
A similar analysis for $\widetilde{f} \ge 0$ which uses Theorem \ref{thm:thm_2} gives us the same bound and proves the first part of the theorem. 

Now let's look at the inventory. Suppose $ \widetilde{f} \le 0$. We have that
\begin{align*}
    I \left( \M_3 \right) &= I \left( \M_1 \right) \\
    &\le \frac{18 \ln (1/\alpha)}{\epsilon} + 2\sqrt{6 \left(\opt + \frac{\ln (1/\alpha)}{\epsilon} \right) \ln (2/\alpha) } + \frac{4 \ln (2/\alpha)}{3} \\
    &\le \frac{18 \ln (1/\alpha)}{\epsilon} + 4\sqrt{6 \left(\opt + \frac{\ln (1/\alpha)}{\epsilon} \right) \ln (1/\alpha) } + \frac{4 \ln (2/\alpha)}{3} \tag{$\star$}\\
    &= 4 \left( f + \frac{4 \ln (n/ \alpha)}{\epsilon} \right) + \frac{10 \ln (1/\alpha)}{\epsilon} + \frac{4 \ln (2/\alpha)}{3} \\
    &\le 4 \left( \widetilde{f} + \frac{\sqrt{6} \ln^{1.5} (1/\alpha)}{\epsilon} + \frac{4 \ln (n/ \alpha)}{\epsilon} \right) + \frac{10 \ln (1/\alpha)}{\epsilon} + \frac{4 \ln (2/\alpha)}{3} \\
    &\le 4 \left(\frac{\sqrt{6} \ln^{1.5} (1/\alpha)}{\epsilon} + \frac{4 \ln (n/ \alpha)}{\epsilon} \right) + \frac{10 \ln (1/\alpha)}{\epsilon} + \frac{4 \ln (2/\alpha)}{3} \tag{$\star \star$}
\end{align*}
where the first inequality follows from Theorem \ref{thm:thm_1}, with probability $1-6\alpha$. The second inequality follows because $\alpha < 1/2$ (note we need $\alpha < 1/18$ to give non-trivial guarantee for the payoff of the mechanism). The third inequality follows from Equation \ref{eq:lap_f} with probability $1-\alpha$. Looking at the bounds given by the third and the last lines of the above equation (specified by $\star$ and $\star \star$), we get that with probability $1-7\alpha$,
\begin{align*}
I \left( \M_3 \right) 
&\le 4 \min\left\{ \frac{2 \ln \left( 1/\alpha \right)}{\epsilon}  + \sqrt{ 6  \left(\opt + \frac{\ln (1/\alpha)}{\epsilon} \right) \ln \left(1/\alpha\right)},  \frac{4 \ln (n/ \alpha)}{\epsilon} \right\} 
\\&+ \frac{4 \sqrt{6} \ln^{1.5} (1/\alpha)}{\epsilon} + \frac{10 \ln (1/\alpha)}{\epsilon} + \frac{4 \ln (2/\alpha)}{3}.
\end{align*}
A similar analysis for $\widetilde{f} \ge 0$ which uses Theorem \ref{thm:thm_2} gives us the same bound and proves the second part of the theorem.
\end{proof}

\section{Proof of Theorem~\ref{thm:lowerbounds}}\label{app:lowerbounds}

Consider the following family of data sets: first, we initialize $D_0$ as the data set that has $n$ sellers with valuations $\{1,\ldots,n\}$, and $n$ buyers with valuations $\{n,\ldots,2n-1\}$.
We then recursively construct $D_l$ for all $l$. To construct $D_{l+1}$ from $D_l$, we increase all valuations in $D_l$ by $1$, and assign buyers' (resp. sellers) identities in $D_{l+1}$ such that all buyers (resp. sellers) except one have the same valuation as in $D_l$. Equivalently, our construction works as follows: for any $l \in \mathbb{N}$,
$$
D_l = \begin{cases} \vs = \{l+1,\ldots,l+n\} & \text{sellers' valuations}\\
\vb = \{l+n,\ldots,l+2n-1\} & \text{buyers' valuations},
\end{cases}
$$
up to re-ordering of the agents' identities. The result will follow from the fact that a differentially private algorithm should output similar distributions of prices on data sets $D_0$ and $D_l$, but that at the same time, for $l$ large enough, $D_0$ and $D_l$ are far enough from each other that no distribution of prices can perform well over both of them. 

We first show the following lemma, which will be of use in the proof of Theorem~\ref{thm:lowerbounds}:
\begin{lemma}\label{lem:comp}
	 Let $\{D_l\}$ be the family of data sets described above. If $\A:~\D^n \to P$ is an $(\epsilon,\delta)$-DP algorithm, then for every price $p \in P$ and every $k,m \in \mathbb{N}$:
	 \begin{align*}
	 \Pr\left[\left|\A(D_k) - p\right| < m\right] \ge e^{-2k \epsilon} \Pr\left[\left|\A(D_0) - p\right| < m\right] - 2 k \delta.
	 \end{align*}
\end{lemma}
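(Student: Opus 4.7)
The plan is to reduce Lemma~\ref{lem:comp} to group privacy, using the fact that the datasets $D_l$ and $D_{l+1}$ in the constructed family are close in Hamming distance. The key observation is that although going from $D_l$ to $D_{l+1}$ shifts every valuation up by one, after relabeling identities only \emph{two} agents' reports actually differ: on the sellers' side, $D_l$ has valuations $\{l+1,\ldots,l+n\}$ while $D_{l+1}$ has $\{l+2,\ldots,l+n+1\}$, so $l+1$ is dropped and $l+n+1$ is added, meaning exactly one seller's report has changed. By the same argument, exactly one buyer's report has changed. Thus $D_l$ and $D_{l+1}$ are at Hamming distance~$2$, and consequently $D_0$ and $D_k$ are at Hamming distance at most $2k$ along the chain $D_0 \leadsto D_1 \leadsto \cdots \leadsto D_k$.

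Given this, I would apply $(\epsilon,\delta)$-DP iteratively along the chain. Concretely, between two datasets at Hamming distance~$1$, $(\epsilon,\delta)$-DP gives, for any measurable $S \subseteq P$,
\begin{align*}
\Pr[\A(D') \in S] \ge e^{-\epsilon}\Pr[\A(D) \in S] - e^{-\epsilon}\delta \ge e^{-\epsilon}\Pr[\A(D) \in S] - \delta,
\end{align*}
using $e^{-\epsilon} \le 1$. A straightforward induction on the number of single-entry changes then yields, for datasets $D, D''$ at Hamming distance $t$,
\begin{align*}
\Pr[\A(D'') \in S] \ge e^{-t\epsilon}\Pr[\A(D) \in S] - t\delta,
\end{align*}
since the accumulated additive error satisfies $\delta(1 + e^{-\epsilon} + \cdots + e^{-(t-1)\epsilon}) \le t\delta$.

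To conclude, I would instantiate $t = 2k$ with $D = D_0$, $D'' = D_k$, and take $S = \{p' \in P : |p' - p| < m\}$, which is a valid (finite) subset of the output range. Plugging in gives exactly
\begin{align*}
\Pr[|\A(D_k) - p| < m] \ge e^{-2k\epsilon}\Pr[|\A(D_0) - p| < m] - 2k\delta,
\end{align*}
as required. The only nontrivial step is the combinatorial bookkeeping that establishes the Hamming distance between $D_l$ and $D_{l+1}$ is~$2$ (and not, say, $2n$ as a naive reading of the ``all valuations shift'' description might suggest); once this is in place, group privacy via iterated single-entry DP finishes the argument with no further subtleties.
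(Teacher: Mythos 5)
Your proof is correct and follows essentially the same route as the paper's: identify that $D_l$ and $D_{l+1}$ differ in exactly two entries after relabeling, iterate the $(\epsilon,\delta)$-DP inequality $2k$ times, bound the accumulated geometric error sum by $2k\delta$, and specialize to the ball $\{p' : |p'-p| < m\}$. No differences worth noting.
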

\begin{proof}
	By the definition of $(\epsilon,\delta)$-DP, if $D$ and $D'$ are neighboring data sets, we must have that for any event $E$,
	\begin{align*} 
	\Pr[\A(D) \in E] \leq e^{\epsilon} \Pr[ \A(D') \in E] + \delta,
	\end{align*}	
	or equivalently
	\begin{align}\label{eq:dp-inequality}
	\Pr[\A(D') \in E] \geq e^{-\epsilon}\left(\Pr[\A(D) \in E]-\delta\right)
	\end{align}
    Notice that for every $k$, by construction, $D_k$ and $D_{k+1}$ differ by only two entries (one buyer's and one seller's valuation). This immediately implies that $D_k$ and $D_0$ differ by at most $2k$ entries, hence we can apply inequality~\eqref{eq:dp-inequality} recursively $2k$ times to obtain that for any event $E$,
	\begin{align*}
	\Pr[\A(D_k) \in E] &\geq e^{-\epsilon} \left( e^{-\epsilon} \ldots  \left(e^{-\epsilon}\Pr[\A(D_0)\in E] - \delta \right) \ldots -\delta \right)-\delta\\ & = e^{-2k\epsilon}\Pr[\A(D_0)\in E] - \delta(e^{-(2k-1)\epsilon} +e^{-(2k-2)\epsilon} + .... + e^{-\epsilon} + 1)
	\\& \geq e^{-2k\epsilon} \Pr[\A(D_0) \in E] - 2 k \delta 
	\end{align*}
	where the last inequality follows from the fact that $e^x \le 1$ for $x \le 0$. Fixing the price $p$ and $k, m$, and taking $E$ to be the ball of radius $m$ around $p$, i.e.
	$$
	E= \{p': |p' - p| < m \}
	$$
	concludes the proof.
\end{proof}

We are now ready to prove Theorem \ref{thm:lowerbounds}. 
\begin{proof}[Proof of Theorem \ref{thm:lowerbounds}]
In this proof, for any given data set $D = (\vs,\vb)$, we let
\[
u(\A, D) 
\triangleq 
 \min \left\{\sum_{i \in \S} \mathbbm{1} \left[\vs_i \leq p \right], \sum_{j \in S} \mathbbm{1} \left[\vb_j \geq p \right] \right\}
\]
where $p$ is drawn according to $\A(D)$.

First, we note that in data set $D_0$, at most $n$ trades (where every trading agent gets non-negative utility) can occur, setting a price of $n$. Further, $n$ is the unique price that makes $n$ trades possible, noting that decreasing (resp. increasing) the price leads to strictly less than $n$ sellers (resp. buyers) willing to trade at that price. We let $p_0^* = n$ be this (unique) optimal price that clears $n$ shares on data set $D_0$. For a given $(\epsilon,\delta)$-DP algorithm $\A:~\D^n \to P$ that outputs a price $p$ given an input data set $D = (\vs,\vb)$, let us define, for any $k, m \in \mathbb{N}$ (we will choose these values later on),
	\begin{align*}
	q_m^0 := \Pr\left[|\A(D_0) -p_0^*| < m\right],~q_m^k := \Pr\left[|\A(D_k) -p_0^*| < m\right].
	\end{align*}
Notice by Lemma \ref{lem:comp} that
\begin{align}\label{eq:probbound}
q_m^k \geq e^{-2k\epsilon} q_m^0 - 2 k \delta.
\end{align}

Now, fix $m = \lceil \frac{1}{\epsilon} \rceil$, $k = 2\lceil\frac{1}{\epsilon}\rceil$, and take $n \geq m$. We have that the expected loss of $\A$ on $D_0$ is
\begin{align}\label{eq:D0-reg}
\begin{split}
\E_{\A} \left[ L(\A, D_0) \right] &= \opt(D_0) - \E_{\A} \left[ u(\A, D_0) \right] \\
&= n - \E_{\A} \left[ u(\A, D_0) \right] \\
&\ge n - \left( q_m^0 \cdot n+ (1- q_m^0) \cdot (n-m) \right) \\
&= (1- q_m^0) \cdot m \\
&\ge (1- q_m^0) \cdot \left(\frac{1}{\epsilon} \right).
\end{split}
\end{align}
The first inequality follows from a simple application of the law of total expectation on event $E= \{p: |p - p_0^*| < \frac{m}{2n} \}$ and its complement: with probability $1-q_m^0$ the outputted price is outside $E$, which implies that it can only clear at most $n-m \geq 0$ shares (picking a price that is $m$ away from $n$ necessarily leads to either $m$ fewer buyers or $m$ fewer sellers willing to trade); the rest of the time, with probability $q_m^0$, algorithm $\A$ clears at most $n$ shares. The second inequality is an immediate consequence of the choice of $m$. Similarly, on data set $D_k$,
\begin{align}\label{eq:Dk-reg}
\begin{split}
\E_{\A} \left[ L(\A, D_k) \right] &= \opt_k - u(\A, D_k) \\
&= n - \E_{\A} \left[ u(\A, D_k) \right] \\
&\ge n - \left( q_m^k \cdot (n - (k-m)) + (1- q_m^k) \cdot n \right) \\
&= q_m^k \cdot (k-m) \\
&\ge \left( e^{-2k\epsilon} q_m^0 - 2 k \delta \right) \cdot (k-m) \\
&\ge \left( e^{-8} q_m^0 - 8 (\delta / \epsilon) \right) \cdot \left(\frac{1}{\epsilon} \right)
\end{split}
\end{align}
where the first inequality follows from another use of the law of total expectation on the event $E$ and its complement (notice we choose our parameters so that $k > m$ and $n \geq k - m$): with probability $q_m^k$, the price is at most $n+m$, and there are $k-m$ sellers that are willing to trade at price $n+m$ but not at price $n+k$, implying that such a price clears at most $n- (k-m)$ shares; the rest of the time, the number of shares cleared is at most $n$ always. The second follows from Equation (\ref{eq:probbound}) and the last one follows from the choice of $k$ and $m$ and the fact that $\epsilon \lceil \frac{1}{\epsilon} \rceil \le 1 + \epsilon \le 2$ for $0 \le \epsilon \le 1$. Now let $L(\A)$ be the worst-case expected loss of $\A$. We have that
\begin{align*}
L(\A) &\ge \max \left\{(1- q_m^0),   \left( e^{-8} q_m^0 - 8 (\delta / \epsilon) \right) \right\} \cdot \left(\frac{1}{\epsilon} \right) \\
&\ge \left( \frac{e^{-8} - 8 (\delta/\epsilon)}{1 + e^{-8}} \right) \cdot \left(\frac{1}{\epsilon} \right)
\end{align*}
where the first inequality follows from Equations (\ref{eq:D0-reg}) and (\ref{eq:Dk-reg}) and the second is a simple observation that $f(q_m^0) := \max \left\{(1- q_m^0),   \left( e^{-8} q_m^0 - 8 (\delta / \epsilon) \right) \right\}$ is minimized at $q_m^0 = \frac{1 + 8 (\delta / \epsilon)}{1 + e^{-8}}$. Notice the lower bound is valid only when $\delta < \frac{e^{-8}}{8}\epsilon = \bigo (\epsilon)$. This proves our claim that $L(\A) = \Omega \left( \frac{1}{\epsilon} \right)$.
\end{proof}

\section{Proofs of Approximate Truthfulness}\label{app:IC-IR}

Our proof of truthfulness for Mechanism~\ref{alg:mech} will leverage the following lemma, which shows the output of an $(\epsilon,0)$-DP mechanism does not change by much in expectation when the input data set is changed by at most one element.

\begin{lemma}  \label{lem:distance}
	Let $Y=\M(D)$ where $\M:D \to \mathcal{Y}$ is an $(\epsilon,0)$-DP mechanism, and let $\max_{y \in \mathcal{Y}} |y| \leq K$. Then for any neighboring data sets $D \sim D'$,
	\begin{align*}
 |	\E \left[ Y(D) \right] - \E \left[ Y (D') \right] |  \leq (e^\epsilon - 1) K
	\end{align*}
\end{lemma}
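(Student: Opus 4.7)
The plan is to bound the difference of expectations by exploiting both directions of the $(\epsilon,0)$-DP guarantee, applying the ``stronger'' side $e^{\epsilon}$ to the positive part of $Y$ and the ``weaker'' side $e^{-\epsilon}$ to the negative part. This asymmetric use of the DP inequality is what will yield the clean constant $(e^\epsilon - 1)K$ instead of the factor-of-two-weaker $2(e^\epsilon - 1)K$ one would get by a naive shift-to-nonnegative argument.

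Concretely, I would first record the two pointwise (in the measure-theoretic sense) consequences of $(\epsilon,0)$-differential privacy: for every measurable $S \subseteq \mathcal{Y}$,
\begin{equation*}
e^{-\epsilon}\Pr[\M(D') \in S] \;\le\; \Pr[\M(D) \in S] \;\le\; e^{\epsilon}\Pr[\M(D') \in S].
\end{equation*}
Then I would decompose $Y = Y_+ - Y_-$ with $Y_+ = \max(Y,0)$ and $Y_- = \max(-Y, 0)$, both nonnegative and bounded by $K$. Using the layer-cake identity $\E[Z] = \int_0^{\infty} \Pr[Z \ge t]\, dt$ valid for any nonnegative $Z$, the above DP inequalities applied to $\{Y_+ \ge t\}$ give $\E[Y_+(D)] \le e^{\epsilon} \E[Y_+(D')]$, and applied to $\{Y_-\ge t\}$ give $\E[Y_-(D)] \ge e^{-\epsilon} \E[Y_-(D')]$.

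Subtracting,
\begin{align*}
\E[Y(D)] - \E[Y(D')]
&= \bigl(\E[Y_+(D)] - \E[Y_+(D')]\bigr) - \bigl(\E[Y_-(D)] - \E[Y_-(D')]\bigr)\\
&\le (e^{\epsilon}-1)\E[Y_+(D')] + (1 - e^{-\epsilon})\E[Y_-(D')].
\end{align*}
Using $1 - e^{-\epsilon} \le e^{\epsilon}-1$, the right-hand side is at most $(e^{\epsilon}-1)\bigl(\E[Y_+(D')] + \E[Y_-(D')]\bigr) = (e^{\epsilon}-1)\E[|Y(D')|] \le (e^{\epsilon}-1)K$. Swapping the roles of $D$ and $D'$ in the entire argument (which uses DP symmetrically) yields the matching lower bound, and together these give $|\E[Y(D)] - \E[Y(D')]| \le (e^{\epsilon}-1)K$.

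The only subtle point, and the one I would flag as the main obstacle to getting the stated constant, is recognizing that a one-sided application of DP (for instance, shifting $Y$ to $Y+K$ to make it nonnegative and then using $\E[Y(D)+K] \le e^{\epsilon}\E[Y(D')+K]$) loses a factor of two because it wastes the bound on the deterministic shift $K$. The sign-aware decomposition above is what saves this factor: one must apply the ``upper'' DP inequality exactly on the region where $y \ge 0$ and the ``lower'' DP inequality on $y < 0$, so that the coefficient multiplying $Y_+(D')$ and $Y_-(D')$ both collapse into a single $(e^{\epsilon}-1)$ factor times $\E[|Y(D')|]$, which is genuinely bounded by $K$ (not $2K$) under the hypothesis $\max_y|y| \le K$.
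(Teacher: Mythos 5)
Your proof is correct, but it follows a genuinely different route from the paper's. The paper works pointwise on the (discrete) output space: it writes $\E[Y(D)] - \E[Y(D')] = \sum_{y \in \mathcal{Y}} \left( \Pr_{\mathcal{P}}[Y=y] - \Pr_{\mathcal{P}'}[Y=y] \right) y$ and bounds each term by $(e^\epsilon-1)$ times one of the two probabilities, then sums against $|y| \le K$. (As written the paper uses $\max\{\Pr_{\mathcal{P}}[Y=y],\Pr_{\mathcal{P}'}[Y=y]\}$, whose total mass can exceed $1$, so the displayed chain literally only yields $2(e^\epsilon-1)K$; the step closes cleanly with $\min$ in place of $\max$, since $\Pr_{\mathcal{P}}[Y=y]-\Pr_{\mathcal{P}'}[Y=y] \le (e^\epsilon-1)\min\{\Pr_{\mathcal{P}}[Y=y],\Pr_{\mathcal{P}'}[Y=y]\}$.) Your argument instead decomposes $Y = Y_+ - Y_-$, uses the layer-cake identity to transfer the two one-sided DP inequalities to expectations, and collects the coefficients into $(e^\epsilon-1)\E[|Y(D')|] \le (e^\epsilon-1)K$. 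This buys two things: it applies verbatim to non-discrete output spaces, and it makes fully explicit the one step where the constant is actually at stake --- precisely the step the paper's use of $\max$ leaves loose. Your side remark about why the shift-to-nonnegative argument loses a factor of two is also accurate. Both approaches are valid; yours is the more careful of the two at the critical point.
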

\begin{proof}
$Y(D)$ and $Y(D')$ are random variables; we represent the possible values they may take on as $y\in\mathcal{Y}$, and represent the probability distribution of $Y$ under $D$, $D'$ as $\mathcal{P}$, $\mathcal{P}'$, respectively. It follows that
	\begin{align*}
	\E \left[ Y(D) \right] - \E \left[ Y (D') \right] = \E_{Y \sim \mathcal{P}} Y - \E_{Y \sim \mathcal{P}'} Y = \sum_{y \in \mathcal{Y}} \left( \Pr_{\mathcal{P}}[Y=y] - \Pr_{\mathcal{P}'} [Y=y] \right) y
	\end{align*}
Therefore,
	\begin{align*}
	\left\vert\E \left[ Y(D) \right] - \E \left[ Y (D') \right] \right\vert 
	&\leq \sum_{y \in \mathcal{Y}} 
    \left\vert\Pr_{\mathcal{P}}[Y=y] - \Pr_{\mathcal{P}'}[Y=y]\right\vert \left\vert y \right\vert
    \\&\leq  
    \sum_{y \in \mathcal{Y}} (e^\epsilon-1) \max\left\{\Pr_{\mathcal{P}}[Y=y],~\Pr_{\mathcal{P}'}[Y=y]\right\} \left\vert y \right\vert 
    \\&\leq (e^\epsilon - 1) K,
	\end{align*}
	where the second inequality follows from the definition of $(\epsilon,0)$-differential privacy.
\end{proof}

\begin{proof}[Proof of Claim \ref{clm:appxic}]
	We prove the claim for any seller. A similar proof holds for buyers. Fix an index $i$, and any reports/bid vector $\left(\ts_{-i},\tb\right)$ for the remaining buyers and sellers. For simplicity of notation, let us denote $(\vs_i, \ts_{-i}, \tb)$ where $i$ submits his bid truthfully as data set $D$, and $(\ts_i, \ts_{-i}, \tb)$ for some (other) report $
	\ts_i$ as data set $D'$. Notice $D$ and $D'$ are neighboring data sets. Writing $\E_{\M}$ for the expectation with respect to the mechanism $\M$, we have that: 
\begin{align*}
\E_\M \left[\us_i (\M (D')) \right] &= \E_\M [ \os_i \cdot (p - \vs_i ) | D' ] \\
&= \E_\M \left[  \1 \left[ p \ge \ts_i \right] Bern(q^s) (p - \vs_i ) | D' \right] \\
&= \E_\M \left[  \1 \left[ p \ge \vs_i \right] \cdot \1 \left[ p \ge \ts_i \right] Bern(q^s)  (p - \vs_i ) | D' \right] \\
&+ \E_\M \left[  \1 \left[ p < \vs_i \right] \cdot \1 \left[ p \ge \ts_i \right] Bern(q^s)   (p - \vs_i ) | D' \right] \\
&\le \E_\M \left[  \1 \left[ p \ge \vs_i \right]  Bern(q^s) (p - \vs_i ) | D' \right] \\
&\le \E_\M \left[  \1 \left[ p \ge \vs_i \right] Bern(q^s)   (p - \vs_i ) | D \right] + \left(e^{3\epsilon} -1\right)V \\
&= \E_\M \left[\us_i (\M (D)) \right] + \left(e^{3\epsilon} -1\right)V
\end{align*}
where the first inequality follows because the second term appearing in the sum is nonpositive and that $\1 \left[ p \ge \vs_i \right] \cdot \1 \left[ p \ge \ts_i \right] \le \1 \left[ p \ge \vs_i \right]$. The second inequality follows from Lemma \ref{lem:distance} and the fact that the computation of the pair of random variables $(p,q^s)$ combined with any post-processing of the pair $(p,q^s)$ that is independent of the reported data $D' = (\ts,\tb)$ satisfies $(3\epsilon,0)$-differential privacy by the Post-processing Lemma \ref{lem:postproc} and the Composition Theorem \ref{thm:composition}. Also note that the price/bids range is $\{1,2, \ldots, V\}$, so we can take $K=V$ in Lemma \ref{lem:distance}.
\end{proof}

The proofs of approximate truthfulness of Mechanisms~\ref{alg:mech_lottery} and~\ref{alg:mech_combined} follow the exact same argument that leverages the stability properties of differential privacy. The only difference comes in the choice of tie-breaking rule and the level of differential privacy of Mechanisms~\ref{alg:mech_lottery} and~\ref{alg:mech_combined}. Rewriting the above proofs with the corresponding tie-breaking rules yields the argument.
\section{Proofs for Learning Dynamics}\label{app:dynamics}

\subsection{Proof of No-Regret Lemma~\ref{cor:noregret_variant}}\label{app:noregret_variant}
We first show the claim below:
\begin{clm}\label{lem:mw_var_noregret}
Let $R_{j,t}$ be the random variable representing the reward of buyer $j$ in Algorithm \ref{alg:mw_variant} at round $t$, and let $R^*_j (T)$ be the total reward of buyer $j$'s best fixed action in hindsight, over $T$ rounds. Moreover, let $\xi \leq V$
and $\eta \leq \frac{1}{V}$.  
Then, the regret of buyer $j$ over $T$ rounds is bounded as follows:
\begin{align} \label{eqn:noregret}
     R^*_j \left(T \right) - \E \left[ \sum_{t=1}^T R_{j,t} \right] \le \xi T + \eta V^2 T + \frac{\ln V}{\eta}
\end{align}
\end{clm}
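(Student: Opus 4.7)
The plan is to treat the surrogate reward vector $\umodb_{j,t}(\cdot)$ fed into the multiplicative update as a proxy for the true per-bid utility, apply the classical regret bound for exponential weights to this surrogate, and then translate the guarantee back to the true reward $R_{j,t}$ by noting that the two differ by at most $\xi$ in each round. With a slight abuse of notation, let $R_{j,t}(k) := q_t^b(\vb_j - p_t) \1[k \geq p_t]$ denote the expected reward buyer $j$ would accrue at round $t$ by bidding $k$, so that $\E[R_{j,t}] = \E[R_{j,t}(\tb_{j,t})]$ and $R_j^*(T) = \max_{k^*} \sum_t R_{j,t}(k^*)$.

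First, I would note that Algorithm~\ref{alg:mw_variant} is syntactically identical to the standard exponential weights procedure, the only nonstandard ingredient being the definition of the reward vector $\umodb_{j,t}$. Under the hypotheses $\xi \leq V$ and $\eta \leq 1/V$, every coordinate $\umodb_{j,t}(k)$ lies in $[0,V]$ and $\eta \, \umodb_{j,t}(k) \leq 1$. The textbook analysis (via the partition-function bound together with $e^x \leq 1 + x + x^2$ for $x \leq 1$) then yields the classical regret bound for the surrogate rewards, over an action set of size at most $V$:
\begin{align*}
\max_{k^*} \sum_{t=1}^T \umodb_{j,t}(k^*) - \E\left[\sum_{t=1}^T \umodb_{j,t}(\tb_{j,t})\right] \;\leq\; \eta V^2 T + \frac{\ln V}{\eta}.
\end{align*}

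Next, I would compare $\umodb_{j,t}$ to the true reward $R_{j,t}(\cdot)$ case by case. When $\vb_j \neq p_t$, Algorithm~\ref{alg:mw_variant} sets $\umodb_{j,t}(k) = R_{j,t}(k)$ directly. When $\vb_j = p_t$, the true reward vector is identically zero while $\umodb_{j,t}(k) = q_t^b \xi \, \1[k = \vb_j] \in [0,\xi]$. Hence, for every bid $k$ and every round $t$,
\begin{align*}
R_{j,t}(k) \;\leq\; \umodb_{j,t}(k) \;\leq\; R_{j,t}(k) + \xi.
\end{align*}
The left inequality implies $R_j^*(T) \leq \max_{k^*} \sum_t \umodb_{j,t}(k^*)$, while the right one implies $\E[\sum_t R_{j,t}] \geq \E[\sum_t \umodb_{j,t}(\tb_{j,t})] - \xi T$. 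Subtracting these and combining with the surrogate regret bound gives exactly $\xi T + \eta V^2 T + \ln V / \eta$, as claimed.

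The main subtlety is the asymmetric nature of the surrogate: $\umodb$ sits above $R$ but never below, which is why the slack $\xi T$ appears exactly once rather than twice. The feedback loop through $p_t$ and $q_t^b$ is not an obstacle, since the classical regret bound for exponential weights holds pathwise against any adaptively generated sequence of reward vectors; the randomness of the mechanism and of the other agents is simply absorbed by the expectation taken at the end.
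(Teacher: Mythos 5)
Your proposal is correct and follows essentially the same route as the paper's proof: run the standard exponential-weights potential argument on the modified utilities $\umodb_{j,t}$ (using $\eta\,\umodb_{j,t}(k)\le 1$ and $e^x\le 1+x+x^2$), then translate back to the true rewards via the sandwich $u_{j,t}\le \umodb_{j,t}\le u_{j,t}+\xi$, with the one-sided slack contributing the single $\xi T$ term. The paper's argument is just the fully expanded version of this sketch.
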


\begin{proof}
We can think of Algorithm \ref{alg:mw_variant} as Exponential Weights with a modified utility function:
\begin{align*}
    \text{buyer $j$'s modified utility at time $t$ for bid $k$}: \quad \umodb_{j,t}(k) = \begin{cases} \xi \cdot q_t^b & k=\vb_j \text{ and } p_t = \vb_j \\
    u_{j,t}(k) & \text{ otherwise }
    \end{cases}
\end{align*}
where $u_{j,t}(k)$ is the actual utility of buyer $j$ at time $t$ if he were to bid $k$. Importantly, we show that using this modified utility function we can still achieve vanishing regret (with respect to the \emph{original} reward $R_{j,t}$ which is the agent's \emph{true/realized} utility).

First, notice that $u_{j,t}$ is always upper-bounded by $\umodb_{j,t}$: $u_{j,t} \le \umodb_{j,t}$; but also that $\umodb_{j,t} \leq u_{j,t} + \xi$. Recall $R^*_j (T)$ is the reward of the best fixed action in hindsight, with respect to the sequence of prices $p_1,\ldots,p_T$ and probabilities $q_1^b, \ldots, q_T^b$ as chosen by an adversary. Let $r^*_j$ be the report that leads to achieving $R^*_j$, i.e.
\begin{align*}
    R^*_j (T) \triangleq \max_{k \in \{1,...,V\}} \sum_{t=1}^{T} u_{j,t}(k),
    \quad r^*_j \triangleq \argmax_{k \in \{1,...,V\}} \sum_{t=1}^{T} u_{j,t}(k)
\end{align*}
Our goal will be to show that Equation (\ref{eqn:noregret}) holds. Our proof technique will mostly follow standard arguments. In this proof -- and this proof only -- we let $w$ denote the unnormalized weights that may not sum to $1$, and note they induce probability distributions $\rho$ by normalizing each weight by the sum of the weights.
First, let $W_{j,t} = \sum_{k=1}^{V} w_{j,t}(k)$. By definition: 
\begin{align*}
 \frac{W_{j,t+1}}{W_{j,t}} = \frac {\sum_{k=1}^{V} {w_{j,t+1}(k)}}{\sum_{k=1}^{V} w_{j,t}(k)} = \sum_{k=1}^{V} \frac{w_{j,t}(k)e^{\eta \umodb_{j,t}(k)}}{\sum_{k=1}^{V} w_{j,t}(k)}
\end{align*}
We will write $\rho_{j,t}(k)\triangleq w_{j,t}(k) / \sum_{k=1}^{V} w_{j,t}(k)$ as the probability distribution induced by weights $w_{j,t}(k)$, for all $k$. We can rewrite the above as 
\begin{align*}
    \frac{W_{j,t+1}}{W_{j,t}} = \sum_{k=1}^{V} \rho_{j,t}(k) e^{\eta \umodb_{j,t}(k)}.
\end{align*}
For $\eta \leq \frac{1}{V}$ and $\xi \leq 1$, we have $\eta \mu_{j,t}(k) \leq 1$ for all $k$. Using the upper bound that $e^{x} \leq 1+ x + x^2$ for all $x \in [0,1]$, we obtain that
\begin{align*}
    \frac{W_{j,t+1}}{W_{j,t}} \leq 1 + \sum_{k=1}^{V} \rho_{j,t} (k) \cdot \eta \umodb_{j,t}(k) + \sum_{k=1}^{V} \rho_{j,t} (k) \cdot \eta^2 \umodb_{j,t}(k)^2
\end{align*}
Then
\begin{align}\label{ineq:logweights}
\begin{split}
 \ln \frac{W_{j,t+1}}{W_{j,t}} 
 &\leq \ln \left(1+ \eta \sum_{k=1}^{V} \rho_{j,t}(k) \umodb_{j,t}(k) + \eta^2 \sum_{k=1}^{V} \rho_{j,t}(k) \umodb_{j,t}(k)^2 \right) 
 \\&\leq \eta \sum_{k=1}^{V} \rho_{j,t} (k) \umodb_{j,t}(k) + \eta^2 \sum_{k=1}^{V} \rho_{j,t} (k) \umodb_{j,t}(k)^2,
 \end{split}
\end{align}
where we have used the fact that $\ln \left(1+x \right) \leq x$ for $x >-1$ (which holds in this case because payoffs are nonnegative given buyers (sellers) never bid above (below) their valuations). 
Now noting that $\frac{W_{j,t+1}}{W_{j,1}} = \frac{W_{j,t+1}}{W_{j,t}} \frac{W_{j,t}}{W_{t-1}}\dots \frac{W_{j,2}}{W_{j,1}}$, we can express \begin{align*}
    \ln \frac{W_{j,t+1}}{W_{j,1}} 
    = \ln \frac{W_{j,t+1}}{W_{j,t}}\dots \frac{W_{j,2}}{W_{j,1}} 
    = \sum_{\tau=1}^{t} \ln \frac{W_{\tau+1,j}}{W_{\tau,j}}
\end{align*}
And applying Inequality \eqref{ineq:logweights}, we have that
\begin{align}\label{ineq:lwtime}
    \ln  \frac{W_{j,t+1}}{W_{j,1}} \leq \eta \sum_{\tau=1}^{t}\sum_{k=1}^{V} \rho_{j,\tau} (k) \umodb_{j,\tau}(k) + \eta^2 \sum_{\tau = 1}^{t} \sum_{k=1}^{V} \rho_{j,\tau}(k) \umodb_{j,\tau}(k)^2
\end{align}

On the other hand, since $W_{j,t+1} \geq w_{j,t}(k)$ for all $k$, \emph{including} for the best action in hindsight $k=r^*_j$, we have that
\begin{align*}
    \ln \frac{W_{j,t+1}}{W_{j,1}} \geq \ln \frac{w_{j,t+1}(r^*_j)}{W_{j,1}} &= \ln( e^{\eta \umodb_{j,t}(r^*_j)} w_{j,t}(r^*_j)/W_{j,1}) \\ &= \ln(e^{\eta \umodb_{j,t}(r^*_j)} e^{\eta \umodb_{j,t-1}(r^*_j)} w_{j,t-1}(r^*_j)) - \ln W_{j,1} \\&=.... \\ &= \ln \left(\prod_{\tau=1}^{t} e^{\eta \umodb_{j,\tau}(r^*_j)} w_{j,1}(r^*_j)\right) - \ln W_{j,1}.
\end{align*}
Now, using the fact that the weights can be initialized with $w_{j,1}(k) = 1 \ \forall k$ and $W_{j,1} = V$, this gives 
\begin{align}\label{ineq:rstar}
     \ln \frac{W_{j,t+1}}{W_{j,1}}  \geq \eta \sum_{\tau=1}^{t} \umodb_{j,\tau}(r^*_j) - \ln V
\end{align}
But now combining Inequalities (\ref{ineq:lwtime}) and (\ref{ineq:rstar}) gives:
\begin{align*}
    \eta \sum_{\tau=1}^{t} \umodb_{j,\tau}(r^*_j) - \ln V \leq \eta \sum_{\tau=1}^{t} \sum_{k = 1}^{V} \rho_{j,\tau}(k) \umodb_{j,\tau}(k) + \eta^2 \sum_{\tau=1}^{t}\sum_{k=1}^{V} \rho_{j,\tau}(k) \umodb_{j,\tau}(k)^2
\end{align*}
Now notice that $\sum_{k=1}^{V} \rho_{j,\tau}(k) \umodb_{j,\tau}(k) = \E_k [ \umodb_{j,\tau}(k)]$. So rearranging and letting $t=T$, we have that 
\begin{align*}
    \sum_{\tau=1}^{T} \umodb_{j,\tau}(r^*_j) - \sum_{\tau=1}^{T} \E_k[\umodb_{j,\tau}(k)] \leq \frac{\ln V}{\eta} + \eta \sum_{\tau=1}^{T} \E_k[\umodb_{j,\tau}(k)^2] \leq \frac{\ln V}{\eta} + \eta T V^2 
\end{align*}
where the inequality follows from the fact that $\umodb_{j,t}$ is bounded by $\max(V,\xi) = V$ (remembering that $u_{j,t} \leq V$). But since $\umodb_{j,\tau}(k) \geq u_{j,\tau}(k)$, we have that 
\begin{align*}
    R^*_j (T) - \sum_{\tau=1}^{T} \E_k[\umodb_{j,\tau}(k)]
     &=\sum_{\tau=1}^{T} u_{j,\tau}(r^*_j) - \sum_{\tau=1}^{T} \E_k[\umodb_{j,\tau}(k)] \\ 
     &\leq \sum_{\tau=1}^{T} \umodb_{j,\tau}(r^*_j) - \sum_{\tau=1}^{T} \E_k[\umodb_{j,\tau}(k)] \\
     &\leq \frac{\ln V}{\eta} + \eta T V^2
\end{align*}
Further, since $\umodb_{j,\tau}(k) \leq u^b_{j,t}(k) + \xi$, we also have that 
\begin{align*}
    R^*_j (T) - \sum_{\tau=1}^{T} \E[\umodb_{j,\tau}(k_{j,\tau})] &= \sum_{\tau=1}^{T} u_{j,\tau}(r^*_j) - \sum_{\tau=1}^{T} \E_k[\umodb_{j,\tau}(k)] \\
    &\ge \sum_{\tau=1}^{T} \umodb_{j,\tau}(r^*_j) - \sum_{\tau=1}^{T} \E_k[\umodb_{j,\tau}(k)] - \xi T\\
    &\ge \sum_{\tau=1}^{T} u_{j,\tau}(r^*_j) - \sum_{\tau=1}^{T} \E_k[u_{j,\tau}(k)] - \xi T \\
    &= R^*_j (T) - \sum_{t=1}^{T} \E[R_{j,t}] - \xi T. 
\end{align*}
Combining the last two inequalities, we get
\begin{align*}
    R^*_j \left(T \right) - \sum_{t=1}^{T} \E \left[R_{j,t} \right] \leq \xi T + \frac{\ln V}{\eta} + \eta V^2 T,
\end{align*}
as desired.
\end{proof}
We can now conclude the proof, noting that Lemma \ref{lem:mw_var_noregret} gives that the total regret of Algorithm \ref{alg:mw_variant} over $T$ rounds for agent $j$ is bounded by:
\begin{align*}
    \text{Regret} \leq  \xi T + \frac{\ln V}{\eta} + \eta V^2 T
\end{align*}
Choose $\eta = \frac{1}{V \sqrt{T}}$ 
and $\xi = \frac{1}{\sqrt{T}}$. Then we have that 
\begin{align*}
    \text{Regret} \leq \sqrt{T} + V \ln V \sqrt{T}  +  \frac{1}{ V\sqrt{T}} V^2 T = \sqrt{T} + V \ln V \sqrt{T} +  V \sqrt{T}. 
\end{align*}
Then average regret can be bounded as:
\begin{align*}
    \frac{1}{T} \text{ Regret} \leq \frac{1}{\sqrt{T}} + \frac{V \ln V}{\sqrt{T}} + \frac{V}{\sqrt{T}} = \bigo\left(\frac{1}{\sqrt{T}}\right).
\end{align*}
That is, average regret vanishes as $T \to \infty$.

\subsection{Proof of Theorem~\ref{thm:limit_1}}\label{app:dynamics_indifferent}
To prove Theorem~\ref{thm:limit_1}, we will examine how the $\opt'$ sellers with the lowest values and the $\opt'$ buyers with the highest values update their weights. To do so, we will need the following definition:
\begin{definition}[Highest (resp. lowest) value buyers (resp. sellers)]
Let $n^b(v) = \sum_{i=1}^{n^b} \1 [\vb_j \geq v ]$ be the number of buyers  with value bigger than or equal to $v$, and let $\nu^b = \max \{v:~ n^b(v) \geq \opt'\}$. Similarly, let $n^s(v) = \sum_{i=1}^{n^s} \1 [ \vb_j \leq v ]$ be the number of sellers with value smaller than or equal to $v$, and let $\nu^s = \min \{v:~ n^s(v) \geq \opt'\}$.
\end{definition}

We note the following property of $\nu^b, \nu^s$:
\begin{clm}\label{clm: buyer-seller gap}
Suppose $\opt' > 0$. Then,
\[
\nu^b \geq \nu^s + 2.
\]
\end{clm}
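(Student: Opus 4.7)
My plan is to exploit the integrality of valuations together with the definition of $\opt'$. Since valuations are drawn from $P = \{1, 2, \ldots, V\}$, strict inequalities translate into gaps of at least one unit, which is exactly what is needed to separate $\nu^b$ and $\nu^s$ by at least $2$.

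First, I will fix a price $p^\star$ that witnesses $\opt'$, i.e.\ a price achieving the maximum in Definition~\ref{def:opt'}. Since $\opt' > 0$, such a $p^\star$ satisfies
\[
\sum_{i \in \S} \1[\vs_i < p^\star] \ge \opt' \quad\text{and}\quad \sum_{j \in \B} \1[\vb_j > p^\star] \ge \opt'.
\]
Because all valuations are integers, $\vs_i < p^\star$ is equivalent to $\vs_i \le p^\star - 1$, and $\vb_j > p^\star$ is equivalent to $\vb_j \ge p^\star + 1$. Substituting these into the counting functions $n^s(\cdot)$ and $n^b(\cdot)$ from the definition gives $n^s(p^\star - 1) \ge \opt'$ and $n^b(p^\star + 1) \ge \opt'$.

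Finally, I will translate these inequalities into bounds on $\nu^s$ and $\nu^b$ using their definitions as extremal prices. Since $n^s(p^\star - 1) \ge \opt'$ and $\nu^s = \min\{v : n^s(v) \ge \opt'\}$, we obtain $\nu^s \le p^\star - 1$. Similarly, since $n^b(p^\star + 1) \ge \opt'$ and $\nu^b = \max\{v : n^b(v) \ge \opt'\}$, we obtain $\nu^b \ge p^\star + 1$. Subtracting yields $\nu^b - \nu^s \ge (p^\star + 1) - (p^\star - 1) = 2$, which is the desired inequality.

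There is no real obstacle here; the only subtlety is remembering that the strict inequalities in the definition of $\opt'$ combined with integer valuations produce a genuine unit gap on each side of $p^\star$. If valuations were continuous, $\nu^b$ and $\nu^s$ could be arbitrarily close, so integrality is the entire content of the claim.
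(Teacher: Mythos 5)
Your proof is correct and follows essentially the same route as the paper's: pick an optimal price $p^\star$ witnessing $\opt' > 0$, use integrality to convert the strict inequalities into $n^s(p^\star - 1) \ge \opt'$ and $n^b(p^\star + 1) \ge \opt'$, and conclude $\nu^s \le p^\star - 1$ and $\nu^b \ge p^\star + 1$. Your writeup simply makes explicit the integrality step that the paper's one-line proof leaves implicit.
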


\begin{proof}
By definition of $\opt'$, there exists a price $p^\star$ such that at least $\opt'$ buyers have value above or equal to $p^\star+1$ and $\opt'$ sellers below or equal to $p^\star-1$. But then, $\nu^s \leq p^\star - 1$ and $\nu^b \geq p^\star +1$, which concludes the proof.
\end{proof}

First of all, we show that if a given price $p$ is picked infinitely many times, every agent $j$ with $\vb_j > p$ sees their probability of bidding more than $p$ converge to $1$. This is the object of Corollary~\ref{cor:limto1}, whose proof relies on Lemmas~\ref{lem: non-decreasing} and~\ref{lem:mass} below. We state the Lemmas for a buyer $j$ and note that similar results hold for a seller $i$ as well.

\begin{lemma}\label{lem: non-decreasing}
For all $t$, for all $p \in [V]$, for all $j \in [\nb]$, 
\[
\sum_{k = p}^{\vb_j} w^b_{j,t+1}(k) \geq \sum_{k = p}^{\vb_j} w^b_{j,t}(k).
\]
\end{lemma}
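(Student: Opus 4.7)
The plan is to prove the monotonicity statement by a direct case analysis on the Exponential Weights update in Algorithm \ref{alg:mw}. The key structural observation is that $\mu^b_{j,t}(k)$ is a two-level function of $k$: it equals $a := q_t^b(v^b_j - p_t)$ on $\{k \geq p_t\}$ and equals $0$ on $\{k < p_t\}$. Moreover the weights of buyer $j$ are supported on $\{1, \ldots, v^b_j\}$, so if $p_t > v^b_j$ then $\mu^b_{j,t} \equiv 0$ on the support, the update leaves all weights unchanged, and the claimed inequality holds with equality. From now on I assume $p_t \leq v^b_j$, so $a \geq 0$ and $e^{\eta a} \geq 1$.

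Next I would introduce the shorthand $A = \sum_{k \geq p_t} w^b_{j,t}(k)$ and $B = 1 - A = \sum_{k < p_t} w^b_{j,t}(k)$ for the ``above-price'' and ``below-price'' mass, so that the normalizer in the update rule is $Z_t = e^{\eta a} A + B$. Then I would split the claim into two subcases based on whether the threshold $p$ in the lemma statement lies above or below the realized price $p_t$.

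In the first subcase ($p > p_t$), every index in the tail sum $\sum_{k=p}^{v^b_j}$ receives the common multiplicative factor $e^{\eta a}/Z_t$, so the desired inequality reduces to $e^{\eta a}/Z_t \geq 1$, which is immediate from $e^{\eta a}(1-A) = e^{\eta a} B \geq B$. In the second subcase ($p \leq p_t$), write $B' := \sum_{k=p}^{p_t - 1} w^b_{j,t}(k) \leq B$. The tail sum after the update equals $(B' + e^{\eta a} A)/Z_t$ and the tail sum before the update equals $B' + A$; clearing denominators and substituting $B = 1-A$ reduces the claim to the nonnegativity of $A(e^{\eta a} - 1)(1 - A - B')$, which holds because $A \geq 0$, $e^{\eta a} \geq 1$, and $A + B' \leq A + B = 1$.

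The main (minor) obstacle is the algebraic simplification in the $p \leq p_t$ subcase; it is routine but it is the place where one must use $B' \leq B$ rather than just $B' \leq 1$, together with the normalization identity $A + B = 1$, to extract the clean factored form whose nonnegativity is manifest. Everything else is either the definition of the update or the sign of $a$ and $e^{\eta a} - 1$.
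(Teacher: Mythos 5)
Your proof is correct and follows essentially the same route as the paper's: a case split on whether the lemma's threshold $p$ lies above or below the realized price $p_t$, with the inequality in each case ultimately reducing to the fact that the multiplicative factor $e^{\eta a}$ on the above-price weights is at least $1$. The only difference is presentational --- the paper establishes monotonicity of the odds ratio (above-$p$ mass over below-$p$ mass) and then converts to the mass inequality via normalization, whereas you verify the tail-mass inequality directly by clearing denominators; your version has the minor advantage of sidestepping the paper's separate treatment of the degenerate case in which the below-$p$ mass is zero.
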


\begin{proof}
If $\sum_{k < p} w^b_{j,t}(k) = 0$, the result is immediate: it must be that for all $k < p$, $w^b_{j,t}(k) = 0$, so by exponential update, $w^b_{j,t+1}(k) = 0$, leading to $\sum_{k < p} w^b_{i,t + 1}(k) = 0$. In turn, 
\[
\sum_{k \geq p} w^b_{j,t+1}(k) = \sum_{k \geq p} w^b_{j,t}(k) = 1.
\]
We now focus on the case when $\sum_{k < p} w^b_{j,t}(k) > 0$. Remember that $p_t$ is the optimal price at time $t$. If $p \leq p_t$, 
\begin{align*}
\frac{\sum_{k \geq p} w^b_{j,t+1}(k)}{\sum_{k < p} w^b_{j,t+1}(k)}  
= 
\frac{\sum_{k = p}^{p_t-1} w^b_{j,t}(k) +  \sum_{k \geq p_t} w^b_{j,t}(k) \exp(\eta q^b_t (\vb_j - p_t))}
{\sum_{k < p} w^b_{j,t}(k)}
 \geq \frac{\sum_{k \geq p} w^b_{j,t}(k)}{\sum_{k < p} w^b_{j,t}(k)}.
\end{align*}
 When $p > p_t$, 
 \begin{align*}
\frac{\sum_{k \geq p} w^b_{j,t+1}(k)}{\sum_{k < p} w^b_{j,t+1}(k)}  
&= 
\frac{\sum_{k \geq p} w^b_{j,t}(k) \exp(\eta q^b_t (\vb_j - p_t))}{\sum_{k < p_t} w^b_{j,t}(k) + \sum_{k = p_t}^{p-1} w^b_{j,t}(k) \exp(\eta q^b_t (\vb_j - p_t))}
\\&\geq 
\frac{\sum_{k \geq p} w^b_{j,t}(k) \exp(\eta q^b_t (\vb_j - p_t))}{\left(\sum_{k < p_t} w^b_{j,t}(k) + \sum_{k = p_t}^{p-1} w^b_{j,t}(k) \right) \exp(\eta q^b_t (\vb_j - p_t))}
\\& = \frac{\sum_{k \geq p} w^b_{j,t}(k)}{\sum_{k < p} w^b_{j,t}(k)}. 
\end{align*}
Since 
\[
\sum_{k \geq p} w^b_{j,t+1}(k) + \sum_{k < p} w^b_{j,t+1}(k) = 1, \sum_{k \geq p} w^b_{j,t}(k) + \sum_{k < p} w^b_{j,t}(k) = 1,
\]
we have that for all $p$,
 \begin{align*}
\frac{\sum_{k \geq p} w^b_{j,t+1}(k)}{1 - \sum_{k \geq p} w^b_{j,t+1}(k)}  
&\geq  \frac{\sum_{k \geq p} w^b_{j,t}(k)}{1-\sum_{k \geq p} w^b_{j,t}(k)}.
\end{align*}
This in particular implies that for all $p$,
\[
\sum_{k \geq p} w^b_{j,t+1}(k) \left(1 - \sum_{k \geq p} w^b_{j,t}(k)\right) \geq \sum_{k \geq p} w^b_{j,t}(k) \left(1 - \sum_{k \geq p} w^b_{j,t+1}(k)\right),
\]
hence
\[
\sum_{k \geq p} w^b_{j,t+1}(k) \geq \sum_{k \geq p} w^b_{j,t}(k).
\]
 \end{proof}

\begin{lemma}[Update moves mass up by a constant amount] \label{lem:mass} Suppose at time $t$, at least one buyer and one seller can trade.
There exists a constant $C(\varepsilon) > 1$ such that for any buyer $j$ with $\vb_j > p_t$  and $\sum_{k = p_t}^{\vb_j} w^b_{j,t}(k) \leq 1-\varepsilon$, we have that 
\[
\frac{\sum_{k = p_t}^{\vb_j} w^b_{j,t+1}(k)}{\sum_{k = p_t}^{\vb_j} w^b_{j,t}(k)} \geq C(\varepsilon).
\]
\end{lemma}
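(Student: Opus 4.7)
The proof is a direct algebraic calculation with the exponential-weights update. Since $\vb_j > p_t$ and both are integers in $\{1,\dots,V\}$, we have $\vb_j - p_t \geq 1$ and we are in the first branch of Algorithm~\ref{alg:mw_variant}'s utility definition, so $\mu^b_{j,t}(k) = g \cdot \mathbf{1}[k \geq p_t]$ with $g \triangleq q^b_t(\vb_j - p_t) \geq q^b_t$. I would first argue that $g$ is uniformly bounded below by a positive constant independent of $t$: the hypothesis that at least one buyer and one seller can trade means $\Pi(p_t, \mathbf{r}^s_t, \mathbf{r}^b_t) \geq 1$, and since at most $n^b$ buyers are willing to trade at $p_t$, the allocation probability satisfies $q^b_t \geq 1/n^b$, hence $g \geq 1/n^b$.

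Letting $A \triangleq \sum_{k=p_t}^{\vb_j} w^b_{j,t}(k)$, the normalizer in the update is
\[
Z_{j,t} = \sum_{l=1}^{\vb_j} w^b_{j,t}(l) \exp(\eta \mu^b_{j,t}(l)) = (1-A) + e^{\eta g} A = 1 + A(e^{\eta g}-1),
\]
and applying the update termwise and summing over $k \in \{p_t, \dots, \vb_j\}$ gives
\[
\frac{\sum_{k=p_t}^{\vb_j} w^b_{j,t+1}(k)}{\sum_{k=p_t}^{\vb_j} w^b_{j,t}(k)} = \frac{e^{\eta g}}{1 + A(e^{\eta g} - 1)}.
\]
Using $A \leq 1 - \varepsilon$ and $e^{\eta g} > 1$, the denominator is at most $1 + (1-\varepsilon)(e^{\eta g}-1)$. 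Writing $a \triangleq e^{\eta g}-1 > 0$, the ratio is at least $\tfrac{1+a}{1+(1-\varepsilon)a} = 1 + \tfrac{\varepsilon a}{1+(1-\varepsilon)a}$. Since $g \geq 1/n^b$, we have $a \geq a_0 \triangleq e^{\eta/n^b} - 1 > 0$, and since $x \mapsto \varepsilon x / (1 + (1-\varepsilon) x)$ is increasing on $(0,\infty)$, we can take
\[
C(\varepsilon) \triangleq 1 + \frac{\varepsilon a_0}{1 + (1-\varepsilon) a_0} > 1,
\]
which depends only on $\varepsilon, \eta, n^b$ and not on $t$.

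There is really no serious obstacle here: the calculation is just unpacking the multiplicative weights update. The only subtlety is ensuring that the multiplicative gain is \emph{uniform} in $t$, which requires two inputs (both clean): the integrality of valuations so that $\vb_j - p_t \geq 1$ rather than something that could vanish, and the hypothesis that at least one trade is possible at round $t$ so that $q^b_t \geq 1/n^b$ is bounded away from $0$. Symmetrically, the same argument applies to sellers with $\vs_i < p_t$, swapping the roles of ``above $p_t$'' and ``below $p_t$'' throughout.
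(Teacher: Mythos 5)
Your proof is correct and follows essentially the same route as the paper's: both unpack the exponential-weights normalizer, bound the mass above $p_t$ by $1-\varepsilon$, and use $q^b_t \geq 1/n^b$ together with $\vb_j - p_t \geq 1$ to get a time-uniform gain; indeed your constant $1 + \frac{\varepsilon a_0}{1+(1-\varepsilon)a_0}$ with $a_0 = e^{\eta/n^b}-1$ algebraically equals the paper's $\frac{1}{1-\varepsilon(1-e^{-\eta/n^b})}$. The only nit is that this lemma concerns the plain Exponential Weights update (Algorithm~\ref{alg:mw}), not the Social variant, though under the hypothesis $\vb_j > p_t$ the two updates coincide so nothing changes.
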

\begin{proof}
Let  $X_t(p)$ be the probability that buyer $j$ bids at least $p$ on round $t$. For simplicity of notations, we omit the $j$ subscripts in the proof.
Trivially:
	\begin{align*}
	X_t(p_t) = \sum_{k=p_t}^{\vb_j} w^b_{j,t}(k).
	\end{align*}
	Now, by the definition of exponential weights, we have that 
	\begin{align*}
	X^{t+1}(p_t) = \frac{e^{\eta q^b_t (\vb_j-p_t)} X_t(p_t)}{e^{\eta q^b_t (\vb_j-p_t)} X_t(p_t) + (1-X_t(p_t))}
	\end{align*}
since the buyer updates $w^b_{j,t}(k)$ with $e^{\eta q^b_t (\vb_j-p_t)}$ for all bids $k$ above $p_t$ up to $\vb_j$, and updates weights on bids $k \leq  p_t$  with $e^{\eta q^b_t \cdot 0} = 1$. It immediately follows that
	\begin{align*}
	\frac{X^{t+1}(p_t)}{X_t(p_t)} = \frac{e^{\eta q^b_t(\vb_j-p_t)}}{X_t(p_t) (e^{\eta q^b_t (\vb_j-p_t)}-1) + 1}
	\end{align*}
Now by assumption, $X_t(p_t)  < 1-\epsilon$, so 
	\begin{align*}
	\frac{X^{t+1}(p_t)}{X_t(p_t)} &> \frac{e^{\eta q^b_t(\vb_j-p_t)}}{(1-\epsilon) (e^{\eta q^b_t (\vb_j-p_t)}-1) + 1} 
	\\&=\frac{e^{\eta q^b_t (\vb_j-p_t)}}{e^{\eta q^b_t(\vb_j-p_t)} -\epsilon e^{\eta q^b_t(\vb_j-p_t)} + \epsilon } 
	\\& = \frac{e^{\eta q^b_t(\vb_j-p_t)}}{e^{\eta q^b_t(\vb_j-p_t)} + \epsilon(1-e^{\eta q^b_t(\vb_j-p_t)})} \\
	&= \frac{1}{1-\epsilon(1-\frac{1}{e^{\eta q^b_t(\vb_j-p_t)}})}
	\end{align*}
Using the fact that $q^b_t \geq \frac{1}{n^b}$, as there are at most $n^b$ buyers and at least one possible seller to trade with, and the fact that $\vb_j-p_t \geq 1$, we get that
\begin{align*}
	e^{\eta/n^b} \leq e^{\eta q^b_t(\vb_j-p_t)}.
\end{align*} 	
In turn, 
\begin{align*}
	\frac{X^{t+1}(p_t)}{X_t(p_t)} 
	\geq \frac{1}{1-\epsilon(1-e^{-\eta/n^b})} > 1.
\end{align*}
Letting $C(\varepsilon) = \frac{1}{1-\epsilon(1-e^{-\eta/n^b})}$ is enough to conclude the proof.
\end{proof}

\begin{cor}\label{cor:limto1}
Pick any buyer $j$, and let $p < \vb_j$. Let $N_t(p)$ be the number of times price $p$ is picked by the mechanism so that at least one trade is possible at $p$, up until time $t$. In other words,
$$
N_t (p) = \sum_{t' \le t} \1 \left[ \Pi_{t'} \left(p, \ts_{t'}, \tb_{t'} \right) \ge 1 \right]
$$
If $\lim_{t \to \infty} N_t(p) = +\infty$, then
\begin{align*}
	\lim_{t \to \infty} \Pr[\tb_{j,t} \geq p] = 1
\end{align*}
\end{cor}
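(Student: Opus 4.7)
The plan is a monotone-convergence argument combined with the multiplicative-update guarantee of Lemma~\ref{lem:mass}. Let $X_t := \Pr[\tb_{j,t} \geq p] = \sum_{k=p}^{\vb_j} w^b_{j,t}(k)$. By Lemma~\ref{lem: non-decreasing}, the sequence $(X_t)_t$ is non-decreasing, and it is bounded above by $1$, so it converges to some limit $L \in [0,1]$. Our goal will be to show $L = 1$.

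Suppose for contradiction that $L < 1$, and fix any $\varepsilon \in (0, 1-L)$, so that $X_t \leq L \leq 1 - \varepsilon$ for every $t$. I would then split the timeline into two kinds of rounds. On the $N_t(p)$ rounds where $p_{t'} = p$ and at least one trade is possible (so that $q^b_{t'} \geq 1/n^b$ as used in the proof of Lemma~\ref{lem:mass}), the hypothesis $\vb_j > p = p_{t'}$ places us exactly in the setting of Lemma~\ref{lem:mass}, yielding $X_{t'+1} \geq C(\varepsilon)\, X_{t'}$ for the constant $C(\varepsilon) > 1$ of that lemma. On every other round, Lemma~\ref{lem: non-decreasing} still gives $X_{t'+1} \geq X_{t'}$. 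Chaining these inequalities and using the uniform initialization $w^b_{j,1}(k) = 1/\vb_j$, so that $X_1 = (\vb_j - p + 1)/\vb_j > 0$, we obtain
\[
X_t \geq X_1 \cdot C(\varepsilon)^{N_t(p)}.
\]
Since $C(\varepsilon) > 1$ and $N_t(p) \to \infty$ by hypothesis, the right-hand side diverges, contradicting $X_t \leq 1$. Hence $L = 1$, which is precisely the claim.

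The two places that require care, and that I view as the only non-routine parts of the argument, are verifying (i) that the event ``$p_t = p$ and a trade is possible'' in the definition of $N_t(p)$ does enforce the lower bound $q^b_t \geq 1/n^b$ implicit in Lemma~\ref{lem:mass} (which follows because if at least one buyer–seller pair can trade, the ``short side'' has at least one participant, giving $q^b_t \geq 1/n^b$), and (ii) that the multiplicative bound of Lemma~\ref{lem:mass} is applied to the quantity $X_t$ itself; here this identification is automatic because on these rounds $p_t = p$, so $\sum_{k=p_t}^{\vb_j} w^b_{j,t}(k) = X_t$. Once these two points are noted, the contradiction above closes the argument.
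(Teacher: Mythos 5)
Your proof is correct and follows essentially the same route as the paper's: both rest on the monotonicity from Lemma~\ref{lem: non-decreasing} together with the multiplicative gain $C(\varepsilon)>1$ from Lemma~\ref{lem:mass} on the $N_t(p)$ good rounds. The paper phrases the conclusion directly as $\Pr[\tb_{j,t}\geq p]\geq\min\bigl\{1-\epsilon,\;C(\epsilon)^{N_t(p)}/V\bigr\}$ rather than via your monotone-convergence contradiction, but the two arguments are interchangeable.
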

\begin{proof}

	Fix $\epsilon>0$. At time $t$, by applying Lemma \ref{lem:mass} and Lemma~\ref{lem: non-decreasing} repeatedly, we have that  
	\begin{align*}
	\Pr[\tb_{j,t} \geq  p] &\geq \min\{1-\epsilon, C(\epsilon)^{N_t(p)} \Pr[\tb_{j,0} \geq p ]\} 
	\\&\geq \min \{1-\epsilon, C(\epsilon)^{N_t(p)} \frac{1}{V}\}  
	\end{align*}
	where the last inequality follows because the initial weights are uniform over all bids. By assumption, there exists $T$ such that for all $t \ge T$: $N_t(p) \geq \frac{\log \left((1-\epsilon)(V)\right)}{\log C(\epsilon)}$, and consequently,
	\[
	\Pr[\tb_{j,t} \geq  p] \geq 1-\varepsilon.
	\]
	Since this holds for every $\epsilon>0$, the limit statement follows. 
\end{proof}

We note that a similar Corollary exists for sellers as well. Now, we need to show that there is a price that clears benchmark $\opt'$ and is chosen by the mechanism infinitely often. This is the object of Lemma~\ref{lem:goodevent}, whose proof relies on Claim~\ref{clm: bound_weights}. Once again, we state the Claim only for buyers and note that a similar result for sellers as well.

\begin{clm}\label{clm: bound_weights}
For every buyer $j$, for all $t$, $\frac{1}{V} \leq w^b_{j,t}(\vb_j)  \leq \frac{1}{2}$.
\end{clm}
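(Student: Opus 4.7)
The plan is to prove both inequalities by induction on $t$, exploiting the multiplicative structure of the Social Exponential Weights update. For the base case, the initialization in Algorithm~\ref{alg:mw_variant} gives $\wb_{j,1}(\vb_j) = 1/\vb_j$, which lies in $[1/V, 1/2]$ whenever $\vb_j \in \{2,\dots,V\}$; the degenerate case $\vb_j=1$ leaves agent $j$'s support as the singleton $\{1\}$, so the claim is really intended for $\vb_j\ge 2$.

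For the inductive step, the central observation I would use is that $\vb_j$ is always a (weak) maximizer of $\umodb_{j,t}(\cdot)$ on $\{1,\dots,\vb_j\}$: if $p_t<\vb_j$ every $k\ge p_t$ (including $\vb_j$) receives utility $q_t^b(\vb_j-p_t)$ while every $k<p_t$ receives $0$; if $p_t=\vb_j$ only $k=\vb_j$ receives the ``fake'' utility $q_t^b\xi>0$; if $p_t>\vb_j$ every $k$ in the support receives $0$. Substituting into the update
\[
\wb_{j,t+1}(\vb_j) = \wb_{j,t}(\vb_j)\,\frac{\exp(\eta\,\umodb_{j,t}(\vb_j))}{Z_t}, \qquad Z_t = \sum_l \wb_{j,t}(l)\exp(\eta\,\umodb_{j,t}(l)),
\]
and using $Z_t \le \exp(\eta\,\umodb_{j,t}(\vb_j))$ (since $\vb_j$ maximizes $\umodb_{j,t}$), I conclude $\wb_{j,t+1}(\vb_j)\ge \wb_{j,t}(\vb_j)$. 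Combined with the base case this immediately yields the lower bound $\wb_{j,t}(\vb_j)\ge 1/V$.

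The upper bound is the main obstacle, precisely because the monotonicity that secured the lower bound shows $\wb_{j,t}(\vb_j)$ can only grow, so one must argue it never passes $1/2$. My plan is to further exploit the case structure. When $p_t<\vb_j$, every $k\in[p_t,\vb_j]$ receives the \emph{identical} factor $\exp(\eta q_t^b(\vb_j-p_t))$, so the ratio $\wb_{j,t}(\vb_j)/\wb_{j,t}(k)$ is preserved for such $k$; when $p_t>\vb_j$ the weights are unchanged; and when $p_t=\vb_j$ the only boost on $\vb_j$ is the near-unit factor $\exp(\eta\xi q_t^b)$. I would attempt to maintain a companion invariant of the form ``there exists $k^\star\in\{1,\dots,\vb_j-1\}$ with $\wb_{j,t}(k^\star)\ge \wb_{j,t}(\vb_j)$ for all $t$'', seeded at $t=1$ by picking any $k^\star$ with $\wb_{j,1}(k^\star)=1/\vb_j$; this would immediately give $2\wb_{j,t}(\vb_j)\le \wb_{j,t}(k^\star)+\wb_{j,t}(\vb_j)\le 1$. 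The delicate step will be the rounds with $p_t=\vb_j$, which break the symmetry between $\vb_j$ and any candidate $k^\star$; handling them will likely require exploiting the parameter regime $\eta,\xi = \bigo(1/\sqrt{T})$ of Lemma~\ref{cor:noregret_variant} so that the accumulated symmetry-breaking factor $\prod_{t:\,p_t=\vb_j}\exp(\eta\xi q_t^b)$ stays bounded over the horizon of interest, and I expect that is where most of the technical work sits.
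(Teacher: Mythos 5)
Your lower bound is fine and matches the paper's argument: $\vb_j$ is always a weak maximizer of the per-round utility, so the normalization constant is at most $\exp(\eta\,\umodb_{j,t}(\vb_j))$ and $\wb_{j,t}(\vb_j)$ is non-decreasing, hence stays at least its initial value $1/\vb_j \geq 1/V$. (Your observation about the degenerate case $\vb_j=1$ is also a fair catch that the paper glosses over.) The problem is the upper bound, where you correctly identify the obstacle but do not resolve it --- and the reason you cannot resolve it is that you are analyzing the wrong algorithm. Claim~\ref{clm: bound_weights} lives in the proof of Theorem~\ref{thm:limit_1}, which concerns the \emph{plain} Exponential Weights update of Algorithm~\ref{alg:mw}, not the Social variant of Algorithm~\ref{alg:mw_variant}. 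Under Algorithm~\ref{alg:mw} there is no fake utility $\xi$: when $p_t=\vb_j$ the utility of bidding $\vb_j$ is $q_t^b(\vb_j-p_t)=0$, the same as every other bid. Consequently in every round the bids $\vb_j$ and $\vb_j-1$ receive \emph{identical} multiplicative updates (both get $\exp(\eta q_t^b(\vb_j-p_t))$ when $p_t<\vb_j$, and both get $1$ when $p_t\geq \vb_j$), so $\wb_{j,t}(\vb_j)=\wb_{j,t}(\vb_j-1)$ for all $t$ by induction, and the bound $\wb_{j,t}(\vb_j)\leq 1/2$ is immediate from these two weights summing to at most $1$. This is exactly the paper's proof, and it is your own ``companion invariant'' with $k^\star=\vb_j-1$, which holds with equality once the $\xi$-boost is removed.

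For the Social variant the step you flag as delicate is genuinely fatal, not just technical: the claim is ``for all $t$'' with fixed $\eta,\xi>0$ (Theorems~\ref{thm:limit} and~\ref{thm:limit_private} are stated for any $\eta,\xi>0$, not in the finite-horizon regime $\eta,\xi=\bigo(1/\sqrt{T})$ of Lemma~\ref{cor:noregret_variant}), so the accumulated factor $\prod_{t:\,p_t=\vb_j}\exp(\eta\xi q_t^b)$ is unbounded whenever $p_t=\vb_j$ occurs infinitely often, and $\wb_{j,t}(\vb_j)$ can indeed exceed $1/2$. This is consistent with the paper: Lemma~\ref{lem:goodevent_variant}, the analogue used for the Social dynamics, invokes only the lower bound $\wb_{j,t}(\vb_j)\geq 1/V$ and never claims the upper bound there.
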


\begin{proof}
At time step $t$, if $p_t > \vb_j$, agent $j$ does not update any weight. If $p_t \leq \vb_j$, it is easy to see that the weight on $\vb_j$ cannot decrease in the next round. Indeed, for any $k$ such that $p_t \leq k \leq \vb_j$, we have that
\begin{align*}
w^b_{j,t+1}(k) 
&= w^b_{j,t}(k) \cdot \frac{\exp\left(\eta q^b_t (\vb_j - p_t)\right)}{\sum_{k < p_t} w^b_{j,t}(k) + \sum_{k \geq p_t} w^b_{j,t}(k) \exp\left(\eta q^b_t (\vb_j - p_t)\right)}
\\&= w^b_{j,t}(k) \cdot \frac{1}{\exp\left(-\eta q^b_t (\vb_j - p_t)\right) \sum_{k < p_t} w^b_{j,t}(k) + \sum_{k \geq p_t} w^b_{j,t}(k)}
\\& = w^b_{j,t}(k) \cdot \frac{1}{\exp\left(-\eta q^b_t (\vb_j - p_t)\right) \sum_{k < p_t} w^b_{j,t}(k) + 1 - \sum_{k < p_t} w^b_{j,t}(k)}
\\& = w^b_{j,t}(k) \cdot \frac{1}{1 - \left(1-\exp\left(-\eta q^b_t (\vb_j - p_t)\right)\right) \sum_{k < p_t} w^b_{j,t}(k)}
\\&\geq w^b_{j,t}(k),
\end{align*}
where the last step follows from noting that both $1-\exp\left(-\eta q^b_t (\vb_j - p_t)\right),~\sum_{k < p_t} w^b_{j,t}(k) \leq 1$. As such, $w^b_{j,t}(\vb_j)$ is non-decreasing in $t$, so $w^b_{j,t}(\vb_j) \geq w^b_{j,0}(\vb_j) = \frac{1}{V}$. 

Let us now prove the second inequality. Note that at any time step $t$, let $p_t$ be the price chosen by the mechanism. When $p_t > \vb_j$, $j$ does not update his weight. Similarly, when $p_t = \vb_j$, the exponential update rule is the same for $w^b_{j,t}(\vb_j)$ and $w^b_{j,t}(\vb_j - 1)$ and given by $\exp \left(\eta q^b_t (\vb_j - p_t) \right) = \exp \left(0 \right) = 1$. When $p_t < \vb_j$, both $w^b_{j,t}(\vb_j)$ and $w^b_{j,t}(\vb_j - 1)$ are multiplied by the same amount $\exp \left(\eta q^b_t (\vb_j - p_t) \right)$. Therefore, it immediately follows by induction that $w^b_{j,t}(\vb_j) = w^b_{j,t}(\vb_j - 1)$ for all $t$. In particular, this implies $w^b_{j,t}(\vb_j) \leq 1/2$, as  $w^b_{j,t}(\vb_j) + w^b_{j,t}(\vb_j - 1) \leq 1$.
\end{proof}

\begin{lemma}[Good event] \label{lem:goodevent}
Suppose $\opt' > 0$, and let 
 \[
\gamma \triangleq \left(\frac{1}{V}\right)^{1 + |n^b(\nu^s+1)| |n^s(\nu^b-1)|} \cdot \left(\frac{1}{2}\right)^{\left(n^b - |n^b(\nu^s+1)|\right)\left(n^s - |n^s(\nu^b-1)|\right)} > 0.
 \]
At any time $t$, $\nu^s < p_t < \nu^b$ and at least one trade is possible with probability at least $\gamma$.
\end{lemma}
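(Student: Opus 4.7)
The strategy is to identify a specific ``good'' bidding pattern $E$ at round $t$ that is both likely enough (its probability matches the claimed $\gamma$ up to one extra $1/V$ factor) and rich enough (it forces every optimal price to be in the desired interval, modulo tie-breaking). Define $E$ as follows: every buyer $j$ with $\vb_j \ge \nu^s+1$ bids exactly $\vb_j$; every buyer $j$ with $\vb_j \le \nu^s$ bids at most $\vb_j - 1$; every seller $i$ with $\vs_i \le \nu^b - 1$ bids exactly $\vs_i$; and every seller $i$ with $\vs_i \ge \nu^b$ bids at least $\vs_i + 1$. Since bids are drawn independently at round $t$ from each agent's current exponential-weights distribution, the probability of $E$ factorizes. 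Claim~\ref{clm: bound_weights} gives $\Pr[\tb_{j,t} = \vb_j] \ge 1/V$ for the ``truthful'' bids and $\Pr[\tb_{j,t} \le \vb_j - 1] \ge 1/2$ for the ``strictly-below-value'' bids (and symmetrically for sellers), yielding a lower bound on $\Pr[E]$ matching the $\gamma$ expression aside from the extra $1/V$ factor accounted for in the next step.

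I would then verify that under $E$ the only prices that can clear $\opt'$ trades lie in $[\nu^s,\nu^b]$ (a range that is nonempty by Claim~\ref{clm: buyer-seller gap}). For $p < \nu^s$, the willing sellers are exactly the low-value sellers with $\vs_i \le p$, numbering only $n^s(p) < \opt'$ by the definition of $\nu^s$; the symmetric statement for $p > \nu^b$ uses the buyers. For $p \in [\nu^s+1, \nu^b-1]$, high-value sellers bid $\ge \nu^b+1 > p$ (not willing) and low-value buyers bid $\le \nu^s - 1 < p$ (not willing), hence the mechanism's trade count is exactly $\min(n^s(p), n^b(p)) \ge \opt' \ge 1$. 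Every optimal price therefore lies in $[\nu^s,\nu^b]$ and clears at least $1$ share.

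The subtle step is to control the mechanism's uniform tie-breaking over the optimal set $O$. I would show that whenever $\nu^s \in O$ we also have $\nu^s+1 \in O$: at $p = \nu^s$ the willing buyers number $n^b(\nu^s+1)$ (low-value buyers bid strictly below $\nu^s$ under $E$), and at $p = \nu^s+1$ this count is unchanged while willing sellers weakly grow from $n^s(\nu^s)$ to $n^s(\nu^s+1)$, so the $\min$ at $\nu^s+1$ is weakly larger. Symmetrically, $\nu^b \in O$ implies $\nu^b-1 \in O$. Thus $O \cap (\nu^s, \nu^b)$ is nonempty, and since $|O| \le V$ and the tie-break is uniform, the output price lies in $(\nu^s,\nu^b)$ with probability at least $1/V$ conditional on $E$. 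Combining with the bound on $\Pr[E]$ yields the claimed $\gamma$.

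The main bookkeeping difficulty is that the $1/2$ factor is only truly consumed for low-value buyers with $\vb_j = \nu^s$ (those with $\vb_j < \nu^s$ automatically satisfy $\tb_{j,t} \le \vb_j \le \nu^s - 1$); charging $1/2$ across all low-value buyers is a conservative but valid lower bound, and the analogous remark applies on the seller side. The delicate monotonicity argument in the third paragraph is the only non-routine step, so care must be taken there to handle the boundary prices $\nu^s$ and $\nu^b$ correctly.
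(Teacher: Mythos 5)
Your proof is correct and follows essentially the same route as the paper's: you condition on the same good event (high-value buyers and low-value sellers bidding truthfully, the rest bidding strictly on the profitable side of their values), lower-bound its probability via Claim~\ref{clm: bound_weights}, show that under this event every optimal price lies in $[\nu^s,\nu^b]$ and clears at least $\opt'\geq 1$ trades, and handle the boundary prices $\nu^s,\nu^b$ by the same observation that shifting one step inward cannot decrease the cleared volume, paying the extra $1/V$ for the uniform tie-break. Your explicit monotonicity argument for why $\nu^s\in O$ implies $\nu^s+1\in O$ (and symmetrically at $\nu^b$) is a slightly more careful rendering of the paper's remark that no seller bids $\nu^b$ and no buyer bids $\nu^s$ under the good event, but it is the same idea.
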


\begin{proof}
By Claim~\ref{clm: bound_weights}, we have that with probability at least 
\[
\left(\frac{1}{V}\right)^{|n^b(\nu^s+1)| |n^s(\nu^b-1)|} \cdot \left(\frac{1}{2}\right)^{\left(n^b - |n^b(\nu^s+1)|\right)\left(n^s - |n^s(\nu^b-1)|\right)} = V \gamma,
\]
all buyers with value $\vb_j > \nu^s$ bid their value, all buyers with value $\vb_j \leq \nu^s$ bid strictly below their value, all sellers with value $\vs_i < \nu^b$ bid their value, and all sellers with $\vs_i \geq \nu^b$ bid strictly more than their value. In particular, since $\nu^s < \nu^b$, all buyers with value $\vb_j \geq \nu^b$ bid their value and all sellers with value $\vs_i \leq \nu^s$ bid their value. By definition of $\nu^b$ and $\nu^s$, there are at least $\opt'$ such buyers and sellers, so setting any price $p$ satisfying $\nu^s \leq p \leq \nu^b$ clears $\opt'$ shares at least. On the other hand, any price $p > \nu^b$ and any price $p < \nu^s$ cannot clear $\opt'$ shares. Therefore, $\nu^s \leq p_t \leq \nu^b$. Further, since all buyers with value $\vb_j \geq \nu^b$ and all sellers with value $\vs_i \leq \nu^s$ bid their values, and $\nu^b \geq \nu^s$, at least $\opt' \geq 1$ trades happen at price $p_t$. 

When $\nu^s < p < \nu^b$ for all optimal prices, this is enough to conclude the proof. Now, suppose $p = \nu^b$ is an optimal price at time $t$. By construction, no seller bids $\nu^b$. As such, the number of sellers with bids under $p$ and the number of sellers with bids under $p-1$ are the same, and $p- 1 = \nu^b - 1$ clears at least as many shares as $p$, hence is optimal at time $t$. Because $p_t$ is chosen uniformely at random among the set of optimal prices, and there are at most $V$ optimal prices, $p-1$ is picked with probability at least $\frac{1}{V}$, and satisfies $\nu^s < p-1 < \nu^b$ by Claim~\ref{clm: buyer-seller gap}. Similarly, if $p = \nu^s$ is optimal, then so is $p + 1 < \nu^b$, and it is picked by the mechanism with probability at least $\frac{1}{V}$. This concludes the proof.
\end{proof}

We are now ready to put everything together, and show Theorem \ref{thm:limit_1}. 
\begin{proof}[Proof of Theorem \ref{thm:limit_1}]
The case when $\opt' = 0$ is immediate. So let us assume $\opt' > 0$. Lemma \ref{lem:goodevent} shows that at any given round, there is a constant probability $\gamma > 0$ to pick $p_t \in (\nu^s,\nu^b)$ and realize at least one trade at that price. As such, as $t \to +\infty$, the number of times the mechanism picks a price in $(\nu^s,\nu^b)$ such that a trade is realized also tends to infinity. In particular, by the pigeonhole principle, there exists a price $p^\star \in (\nu^s,\nu^b)$ such that
\[
\lim_{t \to \infty} N_t(p^\star) = +\infty.
\]
By Corollary~\ref{cor:limto1}, for every buyer $j \in n^b(\nu^b)$,
\[
\lim_{t \to \infty} \Pr[\tb_{j,t} \geq p^\star] = 1,
\]
and similarly, for every seller $i \in n^s(\nu^s)$,
\[
\lim_{t \to \infty} \Pr[\ts_{i,t} \leq p^\star] = 1.
\]
Since there are at least $\opt'$ buyers in $n^b(\nu^b)$ and $\opt'$ sellers in $n^s(\nu^s)$, we have that 
\[
1 \ge \Pr \left[\Pi_t\left(p_t,\ts,\tb\right) \geq \opt' \right] \geq \prod_{j \in n^b(\nu^b)}  \Pr[\tb_{j,t} \geq p^\star] \cdot \prod_{i \in n^s(\nu^s)}  \Pr[\ts_{i,t} \leq p^\star],
\]
which concludes the proof.
\end{proof}

\subsection{Proof of Theorem~\ref{thm:limit}}\label{app:dynamics_notindifferent}
The proof is similar to that of Theorem~\ref{thm:limit_1}, and is given below. We start by showing in Corollary~\ref{cor:limto1_variant} that if a price $p$ is picked by the mechanism infinitely many times, every buyer with value at least $p$ learns to bid higher than $p$ with probability going to $1$. 

\begin{lemma}\label{lem: non-decreasing_variant}
For all $t$, for all $p \in [V]$, for all buyers $j$,
\[
\sum_{k = p}^{\vb_j} \wb_{j,t+1}(k) \geq \sum_{k = p}^{\vb_j} \wb_{j,t}(k).
\]
\end{lemma}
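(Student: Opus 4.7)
The plan is to split the analysis into two cases according to whether $p_t = \vb_j$, since Algorithm~\ref{alg:mw_variant} departs from standard exponential weights only in the branch where this equality holds.

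When $p_t \neq \vb_j$, the modified utility $\umodb_{j,t}$ coincides with the standard exponential weights utility $q_t^b(\vb_j - p_t) \mathbf{1}[k \geq p_t]$, so the update rule is exactly the one analyzed in Lemma~\ref{lem: non-decreasing}. I would import that argument verbatim: establish monotonicity of the ratio $\sum_{k \geq p} \wb_{j,t+1}(k) / \sum_{k < p} \wb_{j,t+1}(k)$ by separately handling $p \leq p_t$ and $p > p_t$ (with the trivial case $\sum_{k < p}\wb_{j,t}(k)=0$ dispatched first), then conclude by cross-multiplying using $\sum_k \wb_{j,t}(k) = 1$.

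When $p_t = \vb_j$, only the weight on $\vb_j$ receives a multiplicative boost of $e \triangleq \exp(\eta q_t^b \xi) \geq 1$, while every other weight is divided by the common normalization $Z = 1 + \wb_{j,t}(\vb_j)(e - 1)$. For a threshold $p \leq \vb_j$, writing $A = \wb_{j,t}(\vb_j)$, $B = \sum_{p \leq k < \vb_j} \wb_{j,t}(k)$, and $C = \sum_{k < p} \wb_{j,t}(k)$, the plan is a direct computation:
\[
\sum_{k \geq p} \wb_{j,t+1}(k) - \sum_{k \geq p} \wb_{j,t}(k) \;=\; \frac{Ae + B}{Z} - (A + B) \;=\; \frac{A(e-1)\,C}{Z} \;\geq\; 0,
\]
where I use $Z = A(e-1) + 1$ and $1 - A - B = C$. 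For $p > \vb_j$, both sums vanish because weights are supported on $\{1,\ldots,\vb_j\}$, so the inequality is trivial.

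The main obstacle, if any, is just the bookkeeping in the second case: verifying that concentrating additional mass on $\vb_j$ never drains mass below an arbitrary threshold $p$. Once the algebraic identity above is in hand, the inequality is immediate, and combining the two cases yields the lemma.
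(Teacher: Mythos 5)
Your proposal is correct. The paper's own proof of this lemma is a single sentence deferring entirely to the argument for Lemma~\ref{lem: non-decreasing}; your case split makes explicit the one point where that deferral actually needs justification, namely the branch $p_t = \vb_j$ of Algorithm~\ref{alg:mw_variant}, where the update multiplies only the weight on $k=\vb_j$ by $e=\exp(\eta q_t^b\xi)\ge 1$. Your algebra there checks out: the normalizer is $Z = Ae + (1-A) = 1 + A(e-1)$, and indeed $\frac{Ae+B}{Z}-(A+B) = \frac{A(e-1)(1-A-B)}{Z} = \frac{A(e-1)C}{Z}\ge 0$; the thresholds $p>\vb_j$ are vacuous since the weights are supported on $\{1,\dots,\vb_j\}$. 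So you reach the same conclusion via the paper's ratio-and-normalization machinery when $p_t\neq\vb_j$ and via a direct identity when $p_t=\vb_j$. The latter case could equivalently be pushed through the same ratio argument (for any $p\le\vb_j$ the unnormalized mass at or above $p$ weakly increases while the mass strictly below $p$ is unchanged, so $\sum_{k\ge p}\wb_{j,t+1}(k)/\sum_{k<p}\wb_{j,t+1}(k)$ cannot decrease), but your explicit computation is clean and supplies a detail the paper's one-line proof leaves implicit.
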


\begin{proof}
The proof is identical to that of Lemma~\ref{lem: non-decreasing}.
 \end{proof}

We then characterize by how much the weight allocated to bids above the chosen price $p_t$ increase for a buyer $j$, at every time step $t$:
\begin{lemma}[Update moves mass up by a constant amount] \label{lem:mass_variant}  Suppose at time $t$, at least one buyer and one seller can trade. There exists a constant $C(\varepsilon) > 1$ such that for any buyer $j$ with $\vb_j \geq p_t$  and $\sum_{k = p_t}^{\vb_j} \wb_{j,t}(k) \leq 1-\varepsilon$, we have that 
\[
\frac{\sum_{k = p_t}^{\vb_j} \wb_{j,t+1}(k)}{\sum_{k = p_t}^{\vb_j} \wb_{j,t}(k)} \geq C(\varepsilon).
\]
\end{lemma}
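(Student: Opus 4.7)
The plan is to adapt the proof of Lemma \ref{lem:mass} to accommodate the modified utility rule of Algorithm \ref{alg:mw_variant}, which differs from the classical exponential weights update only when $\vb_j = p_t$. Let $X_t(p) \triangleq \sum_{k = p}^{\vb_j} \wb_{j,t}(k)$. I would split into the two cases that correspond to the branching in Algorithm \ref{alg:mw_variant}.

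In the first case, $\vb_j > p_t$, the modified utility coincides with the true utility $\umodb_{j,t}(k) = q^b_t (\vb_j - p_t) \mathbf{1}[k \ge p_t]$ used in Algorithm \ref{alg:mw}. Therefore the computation in Lemma \ref{lem:mass} carries over verbatim: bids $k \in [p_t, \vb_j]$ are multiplied by $\exp(\eta q^b_t (\vb_j - p_t))$ while bids $k < p_t$ are unchanged, and using $X_t(p_t) \le 1 - \varepsilon$, $q^b_t \ge 1/n^b$, and $\vb_j - p_t \ge 1$ yields
\[
\frac{X_{t+1}(p_t)}{X_t(p_t)} \ge \frac{1}{1 - \varepsilon(1 - e^{-\eta/n^b})} > 1.
\]

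In the second case, $\vb_j = p_t$, the sum $X_t(p_t)$ reduces to the single term $\wb_{j,t}(\vb_j)$, and the modified utility $\umodb_{j,t}(k) = q^b_t \xi \mathbf{1}[k = \vb_j]$ boosts only this single bid by a factor $\exp(\eta q^b_t \xi)$, leaving every other weight unchanged. A direct computation analogous to the one in Lemma \ref{lem:mass} gives
\[
\frac{X_{t+1}(p_t)}{X_t(p_t)} = \frac{e^{\eta q^b_t \xi}}{X_t(p_t)(e^{\eta q^b_t \xi} - 1) + 1} \ge \frac{1}{1 - \varepsilon(1 - e^{-\eta q^b_t \xi})} \ge \frac{1}{1 - \varepsilon(1 - e^{-\eta \xi/n^b})} > 1,
\]
using $X_t(p_t) \le 1 - \varepsilon$ and $q^b_t \ge 1/n^b$ (recall at least one trade is possible, so $q^b_t$ is well-defined and at least $1/n^b$).

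Taking $C(\varepsilon) \triangleq \min\bigl\{\frac{1}{1 - \varepsilon(1 - e^{-\eta/n^b})},\, \frac{1}{1 - \varepsilon(1 - e^{-\eta \xi/n^b})}\bigr\} > 1$, which depends only on $\varepsilon$ (since $\eta, \xi, n^b$ are fixed parameters of the dynamics), completes the argument. There is no serious obstacle here: the only subtlety is recognizing that the $\vb_j = p_t$ branch is precisely the case where $X_t(p_t)$ collapses to the single weight $\wb_{j,t}(\vb_j)$, so the modified "fake" utility $\xi$ plays exactly the role that $(\vb_j - p_t)$ plays in the other case, and the same multiplicative-weight calculation works after this substitution.
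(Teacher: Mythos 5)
Your proof is correct and follows essentially the same two-case structure as the paper's: for $\vb_j > p_t$ the Social update coincides with the classical one so Lemma~\ref{lem:mass} applies verbatim, and for $\vb_j = p_t$ the sum collapses to the single weight $\wb_{j,t}(\vb_j)$ boosted by the fake utility $\xi$. In fact your second case is slightly more careful than the paper's, which asserts the ratio equals $\exp(\eta q^b_t \xi)$ while neglecting the normalization denominator $X_t(p_t)(e^{\eta q^b_t \xi}-1)+1$; your bound $\frac{1}{1-\varepsilon(1-e^{-\eta\xi/n^b})} > 1$ correctly accounts for it and still yields a valid constant $C(\varepsilon)$.
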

\begin{proof}
Note that when $p_t < \vb_j$, we have by Lemma~\ref{lem:mass} that
\begin{align*}
\frac{\sum_{k = p_t}^{\vb_j} \wb_{j,t+1}(k)}{\sum_{k = p_t}^{\vb_j} \wb_{j,t}(k)}
&\geq \frac{1}{1-\epsilon(1-e^{-\eta/n^b})} > 1.
\end{align*}
Now, when $p_t = \vb_j$ note that 
\begin{align*}
\frac{\sum_{k = p_t}^{\vb_j} \wb_{j,t+1}(k)}{\sum_{k = p_t}^{\vb_j} \wb_{j,t}(k)} 
= \frac{\wb_{j,t+1}(\vb_j)}{\wb_{j,t}(\vb_j)}
= \exp\left(\eta q_t^b \xi\right).
\end{align*}
In particular, as there is at least one possible trade, we have that $q_t^b \geq 1/n^b$, hence 
\[
\frac{\sum_{k = p_t}^{\vb_j} \wb_{j,t+1}(k)}{\sum_{k = p_t}^{\vb_j} \wb_{j,t}(k)} \geq \exp\left(\frac{\eta \xi}{n^b}\right).
\]
Letting $C(\varepsilon) = \min\left(\frac{1}{1-\epsilon(1-e^{-\eta/n^b})},\exp\left(\frac{\eta \xi}{n^b}\right) \right)$ is enough to conclude the proof.
\end{proof}

\begin{cor}\label{cor:limto1_variant}
Pick any buyer $j$, and let $p \leq \vb_j$. Let $N_t(p)$ be the number of times price $p$ is picked and at least one trade is possible at price $p$, up until time $t$. If $\lim_{t \to \infty} N_t(p) = +\infty$, then
\begin{align*}
	\lim_{t \to \infty} \Pr[\tb_{j,t} \geq p] = 1
\end{align*}
\end{cor}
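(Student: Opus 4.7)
The plan is to mirror the proof of Corollary~\ref{cor:limto1} verbatim, simply swapping in the variant lemmas (Lemma~\ref{lem: non-decreasing_variant} and Lemma~\ref{lem:mass_variant}) that were designed precisely for the Social Exponential Weights update. The only substantive change relative to Corollary~\ref{cor:limto1} is that we now must cover the boundary case $p = \vb_j$, for which the classic update has zero expected improvement but the ``fake'' utility $\xi > 0$ in Algorithm~\ref{alg:mw_variant} still produces a multiplicative increase strictly greater than $1$; this is already baked into Lemma~\ref{lem:mass_variant}, so no new calculation is needed here.

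Concretely, I would fix $\varepsilon > 0$ and track the quantity $X_t = \Pr[\tb_{j,t} \geq p] = \sum_{k=p}^{\vb_j} \wb_{j,t}(k)$. I would then split the rounds into two types. In any round $t' \leq t$ where the mechanism selects price $p$ and at least one trade is feasible (i.e., a round counted by $N_t(p)$), Lemma~\ref{lem:mass_variant} applies with price $p_{t'} = p \leq \vb_j$, so either $X_{t'} \geq 1 - \varepsilon$ already, or $X_{t'+1} \geq C(\varepsilon) \, X_{t'}$ for the uniform constant $C(\varepsilon) > 1$ given by the lemma. In every other round, Lemma~\ref{lem: non-decreasing_variant} guarantees $X_{t'+1} \geq X_{t'}$, so these rounds cannot undo previous progress. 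Iterating from the uniform initialization $X_1 \geq 1/V$, this gives
\begin{align*}
\Pr[\tb_{j,t} \geq p] \;\geq\; \min\!\left\{\,1 - \varepsilon,\; \frac{C(\varepsilon)^{N_t(p)}}{V}\right\}.
\end{align*}

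Since $N_t(p) \to +\infty$ by hypothesis and $C(\varepsilon) > 1$, the right-hand side equals $1 - \varepsilon$ for all sufficiently large $t$, so $\liminf_{t \to \infty} \Pr[\tb_{j,t} \geq p] \geq 1 - \varepsilon$. Taking $\varepsilon \to 0$ yields the stated limit. I do not anticipate any real obstacle: the heavy lifting is done by Lemma~\ref{lem:mass_variant}, whose definition of $C(\varepsilon)$ already takes the minimum of the two update regimes $p_t < \vb_j$ and $p_t = \vb_j$, so a single uniform multiplicative step suffices and the argument reduces to the same geometric-growth bookkeeping used in Corollary~\ref{cor:limto1}.
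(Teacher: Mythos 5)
Your proposal is correct and is essentially the paper's own argument: the paper proves this corollary by noting it is identical to the proof of Corollary~\ref{cor:limto1}, i.e.\ iterating Lemma~\ref{lem:mass_variant} on the $N_t(p)$ good rounds and Lemma~\ref{lem: non-decreasing_variant} on the rest to get $\Pr[\tb_{j,t} \geq p] \geq \min\{1-\varepsilon,\, C(\varepsilon)^{N_t(p)}/V\}$. Your observation that the only new ingredient is the boundary case $p = \vb_j$, already absorbed into the constant $C(\varepsilon)$ of Lemma~\ref{lem:mass_variant} via the fake utility $\xi$, is exactly right.
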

\begin{proof}
This is identical to the proof of Corollary~\ref{cor:limto1}.
\end{proof}

Second, we need to show that there is a price that clears benchmark $\opt'$ and is chosen by the mechanism infinitely often. 

\begin{lemma}[Good event] \label{lem:goodevent_variant}
With probability at least $\left(\frac{1}{V}\right)^{n^b + n^s}$, all agents bid their valuation.
\end{lemma}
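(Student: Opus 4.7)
The plan is to show that at every time $t$, each buyer $j$ puts probability at least $1/V$ on bidding $\vb_j$, and similarly each seller $i$ puts probability at least $1/V$ on bidding $\vs_i$; since bids at time $t$ are drawn independently across agents conditional on the history, multiplying these per-agent lower bounds yields the stated probability of $(1/V)^{n^b+n^s}$ for the event that everyone bids truthfully.

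The core step is therefore to prove, by induction on $t$, that $\wb_{j,t}(\vb_j) \geq 1/V$ for every buyer $j$, and analogously for sellers. The base case is immediate: the initialization in Algorithm~\ref{alg:mw_variant} gives $\wb_{j,1}(\vb_j) = 1/\vb_j \geq 1/V$. For the inductive step, I would mimic the argument of Claim~\ref{clm: bound_weights} but split according to the three cases that arise in Algorithm~\ref{alg:mw_variant}. If $p_t > \vb_j$, the buyer performs no update and the weight is unchanged. If $p_t < \vb_j$, the situation is identical to the classical Exponential Weights update analyzed in Claim~\ref{clm: bound_weights}, which already shows that $\wb_{j,t}(\vb_j)$ is non-decreasing.

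The only genuinely new case is $p_t = \vb_j$, where the ``fake utility'' $\xi$ kicks in: the weight on $\vb_j$ is multiplied by $\exp(\eta q^b_t \xi)$ while every other bid receives multiplier $1$, and the whole vector is renormalized. Writing out the update explicitly,
\[
\wb_{j,t+1}(\vb_j) \;=\; \frac{\wb_{j,t}(\vb_j)\,\exp(\eta q^b_t \xi)}{1-\wb_{j,t}(\vb_j) + \wb_{j,t}(\vb_j)\exp(\eta q^b_t \xi)},
\]
and noting that $\eta,\xi,q^b_t \geq 0$ so $\exp(\eta q^b_t \xi) \geq 1$, the ratio $\wb_{j,t+1}(\vb_j)/\wb_{j,t}(\vb_j) \geq 1$ by an elementary algebraic manipulation (equivalent to $(1-\wb_{j,t}(\vb_j))(\exp(\eta q^b_t \xi)-1)\geq 0$). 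Combined with the inductive hypothesis, this yields $\wb_{j,t+1}(\vb_j) \geq \wb_{j,t}(\vb_j) \geq 1/V$. The symmetric argument handles sellers.

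To conclude, conditional on the history through round $t-1$, the bids $\tb_{j,t}$ (for $j \in \B$) and $\ts_{i,t}$ (for $i \in \S$) are drawn independently from the distributions $\wb_{j,t}$ and $\ws_{i,t}$ respectively. Hence
\[
\Pr\!\Big[\,\forall j:\tb_{j,t}=\vb_j,\ \forall i:\ts_{i,t}=\vs_i\,\Big] \;=\; \prod_{j\in\B}\wb_{j,t}(\vb_j) \cdot \prod_{i\in\S}\ws_{i,t}(\vs_i) \;\geq\; \left(\tfrac{1}{V}\right)^{n^b+n^s}.
\]
I do not anticipate a serious obstacle: the whole argument is a mild extension of Claim~\ref{clm: bound_weights}, and the only minor subtlety is correctly handling the tie-breaking case $p_t=\vb_j$ introduced by the ``Social'' modification.
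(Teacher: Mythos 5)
Your proposal is correct and follows essentially the same route as the paper: establish the invariant $\wb_{j,t}(\vb_j)\geq 1/V$ (the lower bound of Claim~\ref{clm: bound_weights}, extended to the Social update by checking that the $p_t=\vb_j$ case only multiplies the weight on $\vb_j$ by $\exp(\eta q^b_t\xi)\geq 1$), then multiply the independent per-agent probabilities. The paper's own proof simply cites Claim~\ref{clm: bound_weights} for this invariant, so your write-up is just a more explicit version of the same argument.
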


\begin{proof}
By the same proof as Corollary~\ref{clm: bound_weights}, for every agent $j$ and for all $t$, $\frac{1}{V} \leq \wb_{j,t}(\vb_j)$. This is enough to prove the lemma.
\end{proof}

We are now ready to put everything together, and show Theorem \ref{thm:limit}. 
\begin{proof}[Proof of Theorem \ref{thm:limit}]
Suppose $\opt > 0$ (otherwise the result is immediate). When all agents bid their values, the mechanism selects a price that executes $\opt \geq 1$ trades. Lemma \ref{lem:goodevent_variant} shows this happens with constant probability at any given round, and as such happens infinitely often when the number of rounds goes to infinity. By the pigeonhole principle, there exists a price $p^\star$ such that there are at least $\opt$ buyers (resp. sellers) with value at least (resp. at most) $p^\star$, and such that 
\[
\lim_{t \to \infty} N_t(p^\star) = +\infty.
\]
By Corollary~\ref{cor:limto1_variant}, for any buyer with $\vb_j \geq p^\star$,
\[
\lim_{t \to \infty} \Pr[\tb_{j,t} \geq p^\star] = 1,
\]
and similarly, for every seller $i$ with $\vs_i \leq p^\star$,
\[
\lim_{t \to \infty} \Pr[\ts_{i,t} \leq p^\star] = 1.
\]
In turn, since
\[
1 \ge \Pr \left[\Pi_t \left(p_t,\ts_t,\tb_t\right) \geq \opt \right] 
\geq \prod_{j \in [n^b]:~\vb_j \geq p^\star}  \Pr[\tb_{j,t} \geq p^\star] \cdot \prod_{i \in [n^s]:~\vs_i \leq p^\star}  \Pr[\ts_{i,t} \leq p^\star],
\]
we have
\[
\lim_{t \to +\infty} \Pr \left[\Pi_t \left(p_t,\ts_t,\tb_t\right) \geq \opt \right] = 1.
\]
\end{proof}

\subsection{Proof of Theorem~\ref{thm:limit_private}}\label{app:dynamics_private}
We start by noting that in the private case, the weights are still non-decreasing over time.
\begin{lemma}\label{lem: non-decreasing_variant_private}
For all $t$, for all $p \in [V]$, for all buyers $j$,
\[
\sum_{k = p}^{\vb_j} \wb_{j,t+1}(k) \geq \sum_{k = p}^{\vb_j} \wb_{j,t}(k).
\]
\end{lemma}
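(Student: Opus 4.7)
The plan is to mimic the argument of Lemma~\ref{lem: non-decreasing_variant} almost verbatim, because that proof never used the way $p_t$ is chosen by the mechanism---it only relied on the structure of the per-bid update multiplier $m(k) \triangleq \exp(\eta \umodb_{j,t}(k))$ induced by Algorithm~\ref{alg:mw_variant}. In the private setting of Theorem~\ref{thm:limit_private}, the quantities $p_t$ and $q_t^b$ are now random variables produced by Algorithm~\ref{alg:mech}, but from buyer $j$'s point of view these are simply observed numbers that feed into the same exponential-weights update. Thus the inequality we want to prove can be established pointwise in the realized pair $(p_t,q_t^b)$, and the randomness coming from differential privacy plays no role.

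First I would record the structure of $m$. Inspecting Algorithm~\ref{alg:mw_variant}: (i) when $\vb_j \neq p_t$, $\umodb_{j,t}(k) = q_t^b(\vb_j-p_t)\mathbf{1}[k\geq p_t]$, so $m(k)=1$ on $\{k<p_t\}$ and $m(k)=\exp(\eta q_t^b(\vb_j-p_t)) \geq 1$ on $\{k\geq p_t\}$; (ii) when $\vb_j = p_t$, $m(k)=1$ for $k\neq \vb_j$ and $m(\vb_j)=\exp(\eta q_t^b \xi)\geq 1$. In either case, on the support $\{1,\dots,\vb_j\}$ of the weight vector the multiplier is nonnegative, bounded below by $1$, and non-decreasing in $k$.

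Next I would split on whether $\sum_{k<p}\wb_{j,t}(k)=0$ or not. The zero case is trivial: the weights on bids below $p$ stay zero after normalization, so $\sum_{k\geq p}\wb_{j,t+1}(k)=\sum_{k\geq p}\wb_{j,t}(k)=1$. In the nonzero case I would compute the ratio $\sum_{k\geq p}\wb_{j,t+1}(k)/\sum_{k<p}\wb_{j,t+1}(k)$ and, case-splitting further on $p\leq p_t$ versus $p>p_t$, show that this ratio is at least $\sum_{k\geq p}\wb_{j,t}(k)/\sum_{k<p}\wb_{j,t}(k)$. When $p\leq p_t$ the numerator is multiplied (bid-by-bid) by a factor $\geq 1$ while the denominator is only multiplied by factors equal to $1$; when $p>p_t$ both numerator and denominator are rescaled, but factoring out the largest common multiplier from the denominator shows the ratio still cannot decrease, exactly as in the proof of Lemma~\ref{lem: non-decreasing_variant}. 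Rearranging and using $\sum_{k\geq p}\wb_{j,t+1}(k) + \sum_{k<p}\wb_{j,t+1}(k) = 1$ gives the claim.

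I do not expect any serious obstacle here: the lemma is a direct lift of Lemma~\ref{lem: non-decreasing_variant} to the private setting, and the only thing to verify is that the update viewed by an individual buyer is unchanged. Since Algorithm~\ref{alg:mw_variant} consumes only the observed $(p_t, q_t^b)$, and the monotonicity argument is pointwise in those variables, no additional work beyond the case analysis above is required.
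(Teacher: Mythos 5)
Your proposal is correct and matches the paper's (implicit) argument: the paper states this lemma without a separate proof, relying precisely on the observation you make, namely that the case analysis in Lemma~\ref{lem: non-decreasing} (and its Social EW adaptation in Lemma~\ref{lem: non-decreasing_variant}) is pointwise in the realized pair $(p_t, q_t^b)$ and therefore carries over unchanged when these quantities are produced by the private mechanism. Nothing further is needed.
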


One distinction compared to the non-private case arises with respect to the amount by which the weights above $p_t$ are updated. This amount depends on $q^b_t$, which is a random variable over the randomness of private computation of the selection probability. We note that \emph{conditionally on $q^b_t \geq 1/\nb$}, Lemma~\ref{lem:mass} carries through, as formalized below: 
\begin{lemma}[Update moves mass up by a constant amount] \label{lem:mass_variant_private}
Suppose at time $t$, at least one buyer and one seller can trade and that $q_t^b > 1/n^b$. There exists a constant $C(\varepsilon) > 1$ such that for any buyer $j$ with $\vb_j \geq p_t$  and $\sum_{k = p_t}^{\vb_j} \wb_{j,t}(k) \leq 1-\varepsilon$,
\[
\frac{\sum_{k = p_t}^{\vb_j} \wb_{j,t+1}(k)}{\sum_{k = p_t}^{\vb_j} \wb_{j,t}(k)} \geq C(\varepsilon).
\]
\end{lemma}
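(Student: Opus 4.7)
The plan is to adapt the proof of Lemma~\ref{lem:mass_variant} essentially verbatim, using the observation that the buyer's exponential weights update depends on the mechanism only through the broadcast pair $(p_t, q_t^b)$, regardless of whether that pair is computed by the standard call auction or by the differentially private mechanism of Algorithm~\ref{alg:mech}. Thus, conditional on the realized values of $(p_t, q_t^b)$, the algebra carrying the multiplicative update through is identical to the non-private setting, and the constant $C(\varepsilon)$ will arise from the same two lower-bound estimates.

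I would handle the two cases exactly as in Lemma~\ref{lem:mass_variant}. When $p_t < \vb_j$, weights in $[p_t, \vb_j]$ receive the factor $\exp(\eta q_t^b(\vb_j - p_t))$ while weights below $p_t$ are multiplied by $1$; the same computation as in Lemma~\ref{lem:mass} then gives
\[
\frac{\sum_{k = p_t}^{\vb_j} \wb_{j,t+1}(k)}{\sum_{k = p_t}^{\vb_j} \wb_{j,t}(k)} \geq \frac{1}{1 - \varepsilon\bigl(1 - \exp(-\eta q_t^b(\vb_j - p_t))\bigr)}.
\]
When $p_t = \vb_j$, the Social Exponential Weights rule of Algorithm~\ref{alg:mw_variant} assigns the ``fake'' utility $q_t^b \xi$ only to the bid $k = \vb_j$ and zero to all other bids, so the left-hand side reduces to $\exp(\eta q_t^b \xi)$. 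In both cases, plugging in the hypothesis $q_t^b \geq 1/n^b$ together with $\vb_j - p_t \geq 1$ produces a lower bound of the form $C(\varepsilon) = \min\bigl(\tfrac{1}{1 - \varepsilon(1 - e^{-\eta/n^b})},\, \exp(\eta \xi / n^b)\bigr) > 1$, matching the constant in Lemma~\ref{lem:mass_variant}.

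The only substantive difference from the non-private argument is that $q_t^b$ is no longer automatically bounded below by $1/n^b$: in Algorithm~\ref{alg:mech}, $q_t^b$ is built from the Laplace-noised counts $\widehat{s}, \widehat{b}$ and can in principle dip below $1/n^b$. The lemma sidesteps this by taking $q_t^b \geq 1/n^b$ as a hypothesis; verifying that this hypothesis holds with sufficiently high probability is deferred to the proof of Theorem~\ref{thm:limit_private}. Given the hypothesis, the main (mild) step is simply to confirm that the buyer's view of the mechanism is captured entirely by $(p_t, q_t^b)$ — immediate from the definition of Algorithm~\ref{alg:mw_variant} — after which the non-private algebra transfers word-for-word, and no new obstacle appears.
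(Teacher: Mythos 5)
Your proposal is correct and matches the paper's treatment: the paper likewise gives no separate argument for this lemma, noting only that conditional on $q_t^b \geq 1/n^b$ the proof of Lemma~\ref{lem:mass_variant} carries through unchanged, yielding the same constant $C(\varepsilon) = \min\left(\frac{1}{1-\varepsilon(1-e^{-\eta/n^b})},\exp\left(\frac{\eta\xi}{n^b}\right)\right)$. Your observation that the buyer's update depends on the mechanism only through the broadcast pair $(p_t, q_t^b)$, and that verifying the hypothesis $q_t^b > 1/n^b$ is deferred to the good-event lemma in the proof of Theorem~\ref{thm:limit_private}, is exactly the intended reasoning.
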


We now fix a price $p$. We show that for one such $p$, if the event where $p$ is the price picked by the mechanism \emph{and} $q^b_t \geq 1/n$ happens infinitely often, then all bidders with valuation equal to or larger than $p$ learn to bid higher than $p$ with probability that goes to $1$.
\begin{lemma}\label{cor:limto1_variant_private}
Pick any buyer $j$, and let $p \leq \vb_j$. Let $N_t(p)$ be the number of times price $p$ is picked by the mechanism so that at least one trade is possible at $p$ and $q^b > 1/n^b$, up until time $t$. In other words,
$$
N_t (p) = \sum_{t' \le t} \1 \left[ \Pi_{t'} \left(p, \ts_{t'}, \tb_{t'} \right) \ge 1 , q_{t'}^b > \frac{1}{n^b} \right]
$$
If $\lim_{t \to \infty} N_t(p) = +\infty$, then $\lim_{t \to \infty} \Pr[\tb_{j,t} \geq p] = 1$.
\end{lemma}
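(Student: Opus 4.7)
The plan is to imitate the proof of Corollary~\ref{cor:limto1_variant} in the non-private setting, with the only adjustment being that we invoke the private versions of the two key lemmas (Lemma~\ref{lem: non-decreasing_variant_private} and Lemma~\ref{lem:mass_variant_private}) and are careful about which rounds allow a multiplicative update.

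First, fix any $\varepsilon>0$ and, for convenience, write $S_t(p)=\sum_{k=p}^{\vb_j}\wb_{j,t}(k)=\Pr[\tb_{j,t}\geq p]$. We partition the rounds $t'\leq t$ into two types. A round $t'$ is \emph{good} if price $p$ is selected, at least one trade is possible at $p$, and $q^b_{t'}\geq 1/n^b$ (by definition there are $N_t(p)$ good rounds); it is \emph{bad} otherwise. On a good round $t'$, Lemma~\ref{lem:mass_variant_private} applies: as long as $S_{t'}(p)\leq 1-\varepsilon$ we have $S_{t'+1}(p)\geq C(\varepsilon)\,S_{t'}(p)$ for some $C(\varepsilon)>1$. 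On a bad round $t'$, Lemma~\ref{lem: non-decreasing_variant_private} gives $S_{t'+1}(p)\geq S_{t'}(p)$, so the weight above $p$ cannot decrease.

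Composing these inequalities across all rounds up to $t$, and using the uniform initialization $S_0(p)\geq 1/V$, yields
\[
S_t(p)\;\geq\;\min\!\left\{1-\varepsilon,\;C(\varepsilon)^{N_t(p)} S_0(p)\right\}\;\geq\;\min\!\left\{1-\varepsilon,\;\frac{C(\varepsilon)^{N_t(p)}}{V}\right\}.
\]
Since $N_t(p)\to\infty$ by assumption and $C(\varepsilon)>1$, for all $t$ large enough the second term exceeds $1-\varepsilon$, giving $\Pr[\tb_{j,t}\geq p]\geq 1-\varepsilon$. Because $\varepsilon>0$ was arbitrary, $\lim_{t\to\infty}\Pr[\tb_{j,t}\geq p]=1$.

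The argument is essentially identical to the non-private analogue, and I do not anticipate a significant obstacle. The only substantive new issue is that in the private mechanism the selection probability $q^b_{t'}$ is itself a random variable that may fall below $1/n^b$ on certain rounds, so Lemma~\ref{lem:mass_variant_private} cannot be invoked unconditionally at every round where $p$ is chosen. The definition of $N_t(p)$ is designed precisely to count only the rounds on which the multiplicative update is guaranteed to kick in, and the monotonicity statement of Lemma~\ref{lem: non-decreasing_variant_private} handles all remaining rounds at no cost. Thus the composition above goes through verbatim.
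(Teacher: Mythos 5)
Your proof is correct and follows essentially the same route as the paper, which handles this lemma by porting the proof of Corollary~\ref{cor:limto1} verbatim: monotonicity (Lemma~\ref{lem: non-decreasing_variant_private}) on rounds not counted by $N_t(p)$, the uniform multiplicative gain $C(\varepsilon)$ from Lemma~\ref{lem:mass_variant_private} on the counted rounds, and the initial mass $1/V$ to get $\Pr[\tb_{j,t}\geq p]\geq\min\{1-\varepsilon,\,C(\varepsilon)^{N_t(p)}/V\}$. Your explicit remark that the definition of $N_t(p)$ exists precisely to restrict the multiplicative lemma to rounds where $q^b_{t'}>1/n^b$ is exactly the point the paper makes informally before stating the lemma.
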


We note that the event in which all agents bid their valuation and the mechanism (despite the randomness due to privacy) picks an optimal price $p$ and releases $q^b \geq 1/\nb$, $q^s \geq 1/\ns$ happens with at least constant probability (independent of the time dimension of the problem), hence infinitely many times when the time horizon goes to infinity:

\begin{lemma}[Good event] \label{lem:goodevent_variant_private}
Suppose $\opt \ge 1$. At any round $t$, with probability at least $C\left(\frac{1}{V}\right)^{n^b + n^s + 1}$ for some constant $C > 0$: all buyers bid their valuations, $q_t^b > 1/n^b$, $q_t^s > 1/n^s$, and the chosen price $p_t$ is an optimal price that clears $OPT$ shares.
\end{lemma}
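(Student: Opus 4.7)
The plan is to write the good event as an intersection of three independent sub-events and lower bound each separately: (i) every agent draws his true valuation as his bid, (ii) conditional on (i), the exponential mechanism selects an optimal price $p_t$, and (iii) conditional on (i)--(ii), the independent Laplace draws defining $\hat{s}, \hat{b}$ are small enough that $q_t^b > 1/n^b$ and $q_t^s > 1/n^s$. For (i), the argument of Claim~\ref{clm: bound_weights} carries over verbatim to Algorithm~\ref{alg:mw_variant}: the only new case is $p_t = \vb_j$, where $\umodb_{j,t}$ is supported only on $k = \vb_j$ with value $q_t^b \xi \ge 0$, so the exponential reweighting can only (weakly) increase $w^b_{j,t}(\vb_j)$; the cases $p_t > \vb_j$ (no update) and $p_t < \vb_j$ (original argument) are unchanged. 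Hence $w^b_{j,t}(\vb_j) \ge 1/V$ at every round, and symmetrically $w^s_{i,t}(\vs_i) \ge 1/V$. Independence across agents gives sub-event (i) probability $\ge (1/V)^{n^b + n^s}$.

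\textbf{Remaining sub-events.} For (ii), after conditioning on (i) the mechanism receives the true profile $(\vs, \vb)$; the exponential mechanism then picks any specific optimal price $p^\star$ with probability $\exp(\epsilon \opt /2)/\sum_p \exp(\epsilon \Pi(p, \vs, \vb)/2) \ge 1/V$, so it picks \emph{some} optimal price with probability $\ge 1/V$. On that event $s := \sum_{i \in \S} \mathbf{1}[\vs_i \le p_t]$ and $b := \sum_{j \in \B} \mathbf{1}[\vb_j \ge p_t]$ both satisfy $s, b \ge \opt \ge 1$. For (iii), writing $\hat{s} = s + Z_s$ and $\hat{b} = b + Z_b$ with $Z_s, Z_b$ independent $\mathrm{Lap}(1/\epsilon)$, I choose a parameter $\delta = \delta(n^b, n^s, \epsilon, \alpha) > 0$ small enough that on the event $\mathcal{L} = \{|Z_s|, |Z_b| \le \delta\}$ both $q_t^b$ and $q_t^s$ as defined in Algorithm~\ref{alg:mech} exceed $1/n^b$ and $1/n^s$ respectively: one checks case-by-case that either $(\hat{b} - \ln(1/\alpha)/\epsilon)_+ = 0$ (so $q_t^b = 1$), or the ratio $(\hat{s})_+/(\hat{b} - \ln(1/\alpha)/\epsilon)_+$ strictly exceeds $1/n^b$ by the choice of $\delta$; symmetric reasoning bounds $q_t^s$. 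Since $\Pr[\mathcal{L}] = (1 - e^{-\epsilon \delta})^2 =: c_0 > 0$ depends only on $\delta$ and $\epsilon$ (not on $t$), multiplying the three bounds yields the claimed $C \cdot (1/V)^{n^b + n^s + 1}$ with $C = c_0$.

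\textbf{Main obstacle.} Steps (i) and (ii) are immediate adaptations of earlier material; the one delicate point is calibrating $\delta$ in (iii) so that the strict inequalities $q_t^b > 1/n^b$ and $q_t^s > 1/n^s$ hold under $\mathcal{L}$, rather than merely $q_t^b, q_t^s > 0$. This forces $\delta$ to scale inversely with $\max(n^b, n^s)$ and to interact with the $\ln(1/\alpha)/\epsilon$ offset in the denominator, but only shrinks the value of $c_0$ without changing the form of the bound.
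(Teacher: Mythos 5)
Your proposal is correct and follows essentially the same decomposition as the paper's proof: a $(1/V)^{n^b+n^s}$ bound for truthful bidding via the Claim~\ref{clm: bound_weights} argument (extended to the $p_t = \vb_j$ case exactly as you describe), a $1/V$ bound for the exponential mechanism landing on an optimal price, and a constant-probability event on the two Laplace draws that forces $q_t^b$ and $q_t^s$ above $1/n^b$ and $1/n^s$. The only (immaterial) difference is that the paper conditions on the one-sided event that each Laplace noise lies in $[0, \ln(1/\alpha)/\epsilon]$ --- which directly yields $\widehat{s}_t \ge s \ge 1$ and $(\widehat{b}_t - \ln(1/\alpha)/\epsilon)_+ \le b \le n^b$ --- rather than calibrating a symmetric threshold $\delta$; your version is, if anything, slightly more careful about the strict inequalities in the statement.
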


\begin{proof}
We have shown before in the proof of Lemma \ref{lem:goodevent_variant} that with probability at least $V^{-(n^b+n^s)}$ every agent bids their valuation. 

In the rest of the proof, we condition on all agents bidding their valuations in the current round $t$. Conditional on this, we show that with constant probability, simultaneously: $q_t^b > 1/n^b$, and $q_t^s > 1/n^s$. Recall from Algorithm \ref{alg:mech} that in each round $t$, given the selected price $p_t$, we have 
$$
q_t^b = \min \left(1,\frac{\left( \widehat{s}_t \right)_+ }{\left( \widehat{b}_t - \frac{\ln (1/\alpha)}{\epsilon} \right)_+ }\right),
~q_t^s = \min\left(1,\frac{\left( \widehat{b}_t \right)_+ }{\left( \widehat{s}_t - \frac{\ln (1/\alpha)}{\epsilon} \right)_+ }\right).
$$
By the accuracy guarantees of the Laplace mechanism and the fact that Laplace noise has positive value with probability $1/2$, we have that with constant probability $C$ (for some $C$ that only depends on $\alpha$ and $\epsilon$ but not on $t$), the 4 following events simultaneously hold:
\begin{enumerate}
\item 
$\left( \widehat{b}_t - \frac{\ln (1/\alpha)}{\epsilon} \right)_+ \leq \sum_{j \in \B} \mathbbm{1}\left[\vb_j \geq p_t\right] \leq \nb$,
\item $\widehat{b}_t \geq \sum_{j \in B} \mathbbm{1}\left[\vb_j \leq p_t\right] \ge \opt \ge 1$,
\item
$\left( \widehat{s}_t - \frac{\ln (1/\alpha)}{\epsilon} \right)_+ \leq \sum_{i \in \S} \mathbbm{1}\left[\vs_i \leq p_t\right] \leq \ns$,
\item $\widehat{s}_t \geq \sum_{i \in \S} \mathbbm{1}\left[\vs_i \leq p_t\right] \ge \opt \ge 1$, noting that at least one trade is possible at price $p_t$.
\end{enumerate}
Using the above inequalities, we obtain that with probability $C$, 
\begin{align*}
&q^b_t = \min \left(1,\frac{\left( \widehat{s}_t \right)_+ }{\left( \widehat{b}_t - \frac{\ln (1/\alpha)}{\epsilon} \right)_+ }\right) \geq \min\left(1,\frac{1}{\nb}\right) \geq \frac{1}{\nb},\\
&q^s_t= \min\left(1,\frac{\left( \widehat{b}_t \right)_+ }{\left( \widehat{s}_t - \frac{\ln (1/\alpha)}{\epsilon} \right)_+ }\right) \geq \min\left(1,\frac{1}{\ns}\right)\geq \frac{1}{\ns}.
\end{align*}

To finish the proof, we just need to show that, conditional on all agents bidding their valuations in the current round $t$, with probability at least $1/V$, $p_t$ -- the price selected when every agent bids their valuation -- is an optimal price. Note there exists a price $p^\star_t$ that is optimal for round $t$, i.e. such that $\Pi_t(p^\star_t, \vs, \vb) \geq \Pi_t(p, \vs, \vb)$ for all $p$. By the exponential mechanism, this price $p^\star_t$ is selected with probability
\begin{align*}
\frac{\exp\left(\epsilon \Pi_t(p^\star_t, \vs, \vb) / 2\right)}{\sum_{p=1}^V \exp\left(\epsilon \Pi_t(p, \vs, \vb) / 2\right)}
 \geq \frac{\exp\left(\epsilon \Pi_t(p^\star_t, \vs, \vb) / 2\right)}{\sum_{p=1}^V \exp\left(\epsilon \Pi_t(p^\star_t, \vs, \vb) / 2\right)}
 = \frac{1}{V}.
\end{align*}
\end{proof}

We are now ready to put everything together, and show Theorem \ref{thm:limit_private}. 
\begin{proof}[Proof of Theorem \ref{thm:limit_private}]
Let us for simplicity call:
$$
f(\epsilon, \alpha) \triangleq \frac{2 \ln (V/\alpha)}{\epsilon} - \frac{2 \ln \left( 1/\alpha \right)}{\epsilon}  - \sqrt{6  \left(\opt + \frac{\ln (1/\alpha)}{\epsilon} \right) \ln \left(1/\alpha\right)}
$$
Suppose $\opt > 0$ (otherwise the result is immediate). By Lemma~\ref{lem:goodevent_variant_private} that shows that with constant probability (independent of time) in every round, the mechanism picks an optimal price and $q^b,q^s \geq \frac{1}{n}$, this event must happen infinitely many times. By the pigeonhole principle, there exists an optimal price $p^\star$ such that infinitely many times, $p^\star$ is picked by the mechanism with $q^b,q^s \geq \frac{1}{n}$. In turn, all buyers $j$ with $\vb_j \geq p^\star$ and all sellers $i$ with $\vs_i \leq p^\star$ (there are at least $OPT$ of them, since $p^\star$ is optimal) learn to bid above, respectively below price $p^\star$ with probability that tends to $1$ as $t$ goes to infinity, by Lemma~\ref{cor:limto1_variant_private}. Formally, for every buyer $j$ with $\vb_j \geq p^\star$,
\[
\lim_{t \to \infty} \Pr[\tb_{j,t} \geq p^\star] = 1,
\]
and similarly, for every seller $i$ with $\vs_i \leq p^\star$, we have that
\[
\lim_{t \to \infty}  \Pr[\ts_{i,t} \leq p^\star] = 1.
\]
Hence 
$$
\lim_{t \to \infty} \prod_{j \in [n^b]:~\vb_j \ge p^\star} \Pr[\tb_{j,t} \geq p^\star] \cdot \prod_{i \in [n^s]:~\vs_i \leq p^\star}  \Pr[\ts_{i,t} \leq p^\star] = 1
$$ 
and consequently, there exists $N(\alpha)$ large enough such that for all $t \geq N(\alpha)$,
\[
\prod_{j \in [n^b]:~\vb_j \ge p^\star} \Pr[\tb_{j,t} \geq p^\star] \cdot \prod_{i \in [n^s]:~\vs_i \leq p^\star}  \Pr[\ts_{i,t} \leq p^\star] \geq 1-\alpha.
\]

When all buyers with value at least the price and sellers with value at most the price bid between their valuation and $p^\star$, the optimal number of shares that can be cleared is OPT. By the accuracy guarantee of Mechanism~\ref{alg:mech}, it must then be the case that for all $t \geq N(\alpha)$,
\begin{align*}
\Pr \left[ \Pi_t(p_t,\ts_t,\tb_t) \geq \opt - f(\epsilon,\alpha) \right] 
&\geq (1-8\alpha) \prod_{j \in [n^b]:~\vb_j \ge p^\star} \Pr[\tb_{j,t} \geq p^\star] \cdot \prod_{i \in [n^s]:~\vs_i \leq p^\star}  \Pr[\ts_{i,t} \leq p^\star]
\\&\geq (1-8\alpha)(1-\alpha)
\\&\geq 1 - 9\alpha.
\end{align*}
This concludes the proof.
\end{proof}

The proof for benchmark $\opt'$ follows the same argument, and is omitted for simplicity of exposition. 

\end{document}